\newtheorem{lemma}{Lemma}[section]
\newtheorem{corollary}[lemma]{Corollary}
\newtheorem{theorem}[lemma]{Theorem}
\newtheorem{prop}[lemma]{Proposition}
\theoremstyle{definition} 
\newtheorem{definition}[lemma]{Definition}
\newtheorem{example}[lemma]{Example}
\newtheorem{remark}[lemma]{Remark}
\begin{document}

\title{Spectrality in convex sequential effect algebras}
    \author{Anna Jen\v cov\'a and
 Sylvia Pulmannov\'{a}{\footnote{ Mathematical Institute, Slovak Academy of
Sciences, \v Stef\'anikova 49, SK-814 73 Bratislava, Slovakia;
jenca@mat.savba.sk, pulmann@mat.savba.sk. }}}

\date{}

\maketitle

\abstract{For convex and sequential effect algebras, we study spectrality in the sense of
Foulis. We show that under additional conditions (strong archimedeanity, closedness in
norm and a certain monotonicity property of the sequential product), such effect algebra
is spectral if and only if every maximal commutative subalgebra is monotone
$\sigma$-complete. Two  previous results on existence of spectral
resolutions in this setting are shown  to require  stronger assumptions.
}

\medskip

\textbf{Keywords:} convex effect algebra, sequential product, compressions, spectral resolution

\section{Introduction}

Effect algebras were introduced by Foulis and Bennett \cite{FoBe} as an algebraic
abstraction of the set of Hilbert space effects, that is, operators on a Hilbert space
lying in the interval between zero and the identity operator. The effects play an important role in  the mathematical description of quantum theory, since they represent  the yes-no measurements in quantum mechanics.
An effect algebra is called convex if it has a convex structure, \cite{GuPu, GPBB}. It was proved that any convex effect algebra
can be represented as an interval in an ordered vector space \cite{GPBB}, and under
additional conditions as the unit interval in an order unit space \cite{GPBB}.
Gudder and Greechie \cite{GuGr} introduced an additional operation of a
sequential product which is an analogue of the operation $(a,b)\mapsto a^{1/2}ba^{1/2}$
for Hilbert space effects $a,b$ and is interpreted as a description of a sequential
measurement. Effect algebras endowed with such a product are called sequential.

One of the important properties of Hilbert space effects is the existence of spectral
resolutions, which means that every effect can be expressed in terms of sharp effects representing sharp measurements. The sharp effects are precisely the projection operators.
Spectrality appears  as a crucial property in operational derivations of quantum theory,
see e.g. \cite{chiribella, Hardy, www, wetering}. It is therefore important to study possible notions of
  spectrality in some classes of effect algebras and to determine the properties and additional
  structures that ensure the existence of some type of spectral resolutions.

Perhaps the most well known extension of spectrality to order unit spaces is due to Alfsen
and Schultz \cite{AS, AlSh}. Their notion of spectral duality is based upon the geometry of dual order unit and base normed
spaces. A more algebraic definition was introduced in \cite{FPspectres}, following the
works by Foulis on spectrality in partially ordered unital abelian groups \cite{Fcomgroup,
Fgc, Funig, Forc}. The two approaches were compared in  \cite{JP1}. In both
definitions, a crucial role is played by compressions, generalizing the map $a\mapsto pap$
for a Hilbert space effect $a$ and a projection $p$. In particular, if there exists a suitable set of compressions with specified properties, each element has a
unique spectral resolution, or a \emph{rational} spectral resolution with values restricted to $\mathbb Q$ in the
case of partially ordered abelian groups, analogous to the spectral resolution of self-adjoint elements in von
Neumann algebras.

Following the ideas in \cite{Fcompog}, compressions and compression bases in effect algebras were studied by Gudder
\cite{Gu, Gudcomprba} and Pulmannov{\'a} \cite{Pucompr}. In \cite{JP2}, we proved that under the conditions of spectrality
specified for an effect algebra in \cite{Pucompr}, there exists a binary spectral
resolution, restricted to dyadic rationals, characterized by properties analogous to
spectral resolutions for  Hilbert space effects.

In the present work, we will concentrate on the special class of effect algebras that are both convex and sequential.
 Spectral resolutions in this setting were studied in \cite{Gucs}, where it was further assumed that any element is a finite
combination of indecomposable sharp elements summing up to identity, such collections of
sharp elements are called contexts. In \cite{Wet}, it was shown that if the effect
algebra is also monotone $\sigma$-complete, then each element has a spectral resolution,
in the sense that it can be written as a supremum and norm limit of simple elements, that
is, finite combinations of orthogonal sharp effects.

These works do not explicitly use
any compressions, but note that for sequential effect algebras,  there is a distinguished set of compressions given  by the
sequential product with a sharp element. Moreover, such compressions form a compression
base, \cite{Gudcomprba}. It is therefore natural to  study spectrality of convex and
sequential effect algebras in the sense derived from the works  of Foulis
(as in \cite{Pucompr,JP2}). This is precisely the aim of the present paper.
We show that under additional assumptions (strong archimedeanity, norm completeness
and a certain monotonicity property called the A-property), the effect algebra is spectral if and only
if every maximal commutative subalgebra is monotone $\sigma$-complete. We also show that
the conditions in both \cite{Gucs} and \cite{Wet} imply spectrality in the Foulis
sense.

After a preliminary section (Sec. \ref{sec:eas}) on general effect algebras, we describe
the notion of spectrality in Sec. \ref{sec:spectral}. Sec. \ref{sec:coseq} briefly
describes the convex and sequential effect algebras, Sec. \ref{sec:main} contains our main
results.

\section{Effect algebras}\label{sec:eas}

An \emph{effect algebra} \cite{FoBe} is a system $(E; \oplus,0,1)$ where $E$ is a nonempty set, $\oplus$ is a partially defined binary operation on $E$, and $0$ and $1$ are constants, such that the following conditions are satisfied:

\begin{enumerate}
\item[(E1)] If $a\oplus b$ is defined then  $b\oplus a$ is defined and $a\oplus b=b\oplus a$.
\item[(E2)] If $a\oplus b$ and $(a\oplus b)\oplus c$ are defined then  $b\oplus c$ and $a\oplus(b\oplus c)$ are defined and
$(a\oplus b)\oplus c=a\oplus(b\oplus c)$.
\item[(E3)] For every $a\in E$ there is a unique $a^\perp \in E$ such that $a\oplus a^\perp=1$.
\item[(E4)]If  $a\oplus 1$ is defined then $a=0$.
\end{enumerate}
Elements of $E$ are called \emph{effects}.
We write $a\perp b$ and say that $a$ and $b$ are \emph{orthogonal} if $a\oplus b$ exists.
In what follows, when we write $a\oplus b$, we tacitly assume that $a\perp b$. A partial
order is introduced on $E$ by defining $a\leq b$ if there is $c\in E$ with $a\oplus c=b$.
If such an element $c$ exists, it is unique, and we define $b\ominus a:=c$.  With respect
to this partial order we have $0\leq a\leq 1$ for all $a\in E$. The element
$a^\perp=1\ominus a$ in (E3)  is called the \emph{orthosupplement} of $a$. It can be shown
that $a\perp b$ iff $a\leq b^\perp$ (equivalently, $b\leq a^\perp$). Moreover $a\leq b$
iff $b^\perp \leq a^\perp$, and $a^{\perp\perp}=a$.

An element $a\in E$ is called \emph{sharp} if $a\wedge a^\perp=0$ (i.e., $x\leq a,a^\perp \implies x=0$). We denote the set of all sharp elements of $E$ by $E_S$. An element $a\in E$ is \emph{principal} if $x,y\leq a$, and $x\perp y$ implies that $x\oplus y\leq a$. It is easy to see that a principal element is sharp.

The algebra of Hilbert space effects described below is a prototypical example of an effect algebra on
which the above abstract definition is modelled.

\begin{example}\label{ex:effects} Let $\mathcal H$ be a Hilbert space and let $E(\mathcal H)$ be the set of operators on $\mathcal H$ such
 that $0\le A\le I$. For  $A,B\in E(\mathcal H)$, put $A\oplus B=A+B$ if $A+B\le I$,
otherwise $A\oplus B$ is not defined. Then $(E(\mathcal H);\oplus, 0, I)$ is an effect
algebra. Note that any sharp element
is principal and the set of sharp effects $E(\mathcal H)_S$ coincides with the set
$P(\mathcal H)$ of
projection operators on $\mathcal H$, that is, linear operators $p:\mathcal H\to \mathcal
H$ such that  $p=p^*=p^2$.

\end{example}

The effect algebra $E(\mathcal H)$ belongs to a larger class of effect algebras obtained
as intervals in partially ordered groups.

\begin{example}\label{ex:intervalea} Let $(G,u)$ be a partially ordered abelian group with an order unit $u$. Let $G[0,u]$ be the unit interval in $G$ (we will often write $[0,u]$ if the group $G$ is clear). For $a,b\in G[0,u]$, let  $a\oplus b$ be defined if $a+b\le u$ and in this case $a\oplus b=a+b$. It is easily checked that $(G[0,u],\oplus, 0,u)$ is an effect algebra. Effect algebras of this form are called \emph{interval} effect algebras. In particular, the real unit interval $\mathbb R[0,1]$ can be given a structure of an effect algebra. Note also that the Hilbert space effects in Example \ref{ex:effects} form an interval effect algebra.

\end{example}

By recurrence, the operation $\oplus$ can be extended to finite sums $a_1\oplus a_2\oplus \cdots \oplus a_n$ of (not necessarily different) elements $a_1,a_2,\ldots a_n$ of $E$.
If $a_1=\dots=a_n=a$ and $\oplus_i a_i$ exist, we write $\oplus_i a_i=na$. An effect
algebra $E$ is \emph{archimedean} if for $a\in E$, $na\le 1$ for all $n\in \mathbb N$ implies that $a=0$.

An infinite family $(a_i)_{i\in I}$ of elements of $E$ is called \emph{orthogonal} if every its finite subfamily has an $\oplus$-sum  in $E$.
If the element  $\oplus_{i\in I} a_i=\bigvee_{F\subseteq I}\oplus_{i\in F}a_i$ exists, where the supremum is taken over all finite subsets of $I$ exists, it is called the \emph{orthosum} of
the family $(a_i)_{i\in I}$. An effect algebra $E$ is a $\sigma$-\emph{effect algebra} if
it is \emph{$\sigma$-orthocomplete}, that is,  if the orthosum exists for any
$\sigma$-finite orthogonal subfamily of $E$. Equivalently, $E$ is \emph{monotone
$\sigma$-complete}, that is,  every ascending sequence $(a_i)_{i\in {\mathbb N}}$  has a supremum $a=\bigvee_i a_i$
   in $E$ (or every descending sequence $(b_i)_{i\in {\mathbb N}}$ has an infimum
   $b=\bigwedge_i b_i$)  in $E$. Equivalence of these two conditions was proved in  \cite{JePu}.


   A subset $F$ of $E$ is \emph{sup/inf -closed in E} if whenever  $M\subseteq F$ and $\wedge M$ ($\vee M$) exists in $E$, then $\wedge M \in F$ ($\vee M \in F$).

If $E$ and $F$ are effect algebras, a mapping $\phi: E\to F$ is a \emph{morphism} if it is
\emph{additive}: $a\perp b$ implies $\phi(a) \perp \phi(b)$ and  $\phi(a\oplus
b)=\phi(a)\oplus \phi(b)$,  and  $\phi(1)=1$. If $\phi :E\to F$ is a morphism, and $\phi(a)\perp \phi(b)$ implies $a\perp b$, then $\Phi$ is a \emph{monomorphism}. A surjective monomorphism is an \emph{isomorphism}.

A \emph{state} on an effect algebra $E$ is a morphism $s$  from $E$ into the effect
algebra ${\mathbb R}[0,1]$ (see Example \ref{ex:intervalea}). We denote the set of states on $E$ by $S(E)$. We say that
$S\subset S(E)$ is \emph{separating} if $s(a)= s(b)$ for every $s\in S$ implies that $a=
b$, and  \emph{ordering} (or order determining) if
$s(a)\leq s(b)$ for all $s\in S$ implies $a\leq b$. If $S$ is ordering, then it is separating, the converse does not hold.

A lattice ordered effect algebra $M$, in which  $(a\vee b)\ominus a=a\ominus (a\wedge b)$ holds for all $a,b\in M$, is called an \emph{MV-effect algebra}.
We recall that MV-effect algebras are equivalent with MV-algebras, which were introduced
by \cite{Chang} as an algebraic basis for many-valued logic. It was proved in \cite{Mun}
that MV-algebras are equivalent to  lattice ordered groups with order unit, in the sense
of category theory.

\section{Spectrality in effect algebras}\label{sec:spectral}

\subsection{Compressions on  effect algebras }

The next  definition follows the works of Foulis {\cite{Fcompog}}, Gudder \cite{Gu} and Pulmannov\'a \cite{Pucompr}.

\begin{definition}\label{de:compr} Let $E$ be an effect algebra.
\begin{enumerate}
\item An additive map $J: E\to E$ is a
\emph{retraction} if $a\leq J(1)$ implies $J(a)=a$. The element $p:=J(1)$ is called the \emph{focus} of $J$.
\item A retraction with focus $p$ is a \emph{compression} if $J(a)=0\ \Leftrightarrow\
a\leq p^\perp$.
\item If $I$ and $J$ are retractions we say that $I$ is a \emph{supplement of} $J$ if $\ker(J)=I(E)$ and $\ker(I)=J(E)$.
\end{enumerate}
\end{definition}

It is easily seen that any retraction is idempotent. The focus of a retraction $J$ is a principal element and we have
$J(E)=[0,p]$,
moreover, $J$ is a compression if and only if $\mathrm{Ker}(J)=[0,p^\perp]$. If a retraction $J$ has a supplement $I$, then both $I$ and $J$ are
compressions and $I(1)=J(1)^\perp$.   For these and further properties see \cite{Gudcomprba, Pucompr}.

\begin{example}\label{ex:compr_effects} Let $E(\mathcal H)$ be the algebra of effects on $\mathcal H$ (Example
\ref{ex:effects}) and let $p\in E(\mathcal H)$ be a projection. Let us define the map $J_p:a\mapsto pap$, then $J_p$ is a compression on
$E(\mathcal H)$ and $J_{p^\perp}$ is a supplement of $J_p$. By \cite{Fcompog}, any retraction on $E(\mathcal H)$
is of this form for some projection $p$. In particular, any projection is the focus of a unique retraction $J_p$ with a
(unique) supplement $J_{p^\perp}$. Effect algebras such that any retraction is supplemented  and  uniquely determined by its
focus are called \emph{compressible}, \cite{Gu,Fcomgroup}.
\end{example}

Recall that two elements $a,b\in E$ are (Mackey) \emph{compatible} if there are elements $a_1,b_1, c\in E$
such that $a_1\oplus b_1\oplus c$ exists and $a=a_1\oplus c, b=b_1\oplus c$. In this case
we shall write $a\leftrightarrow b$. If $F\subseteq E$ and $a,b \in F$, we say that $a,b$ are
\emph{compatible in F} if
$a\leftrightarrow b$ and the elements $a_1,b_1,c$ can be chosen in $F$. It was proved in
\cite{Gudcomprba} that this is equivalent to compatibility in $E$ if  $F$ is a \emph{normal} sub-effect
algebra: for all $e,f,d\in E$ such that  $e\oplus f\oplus d$ exists in $E$, we have $e\oplus d, f\oplus d\in F\ \implies\ d\in F$.

\begin{definition}\label{de:comprbase} {\rm \cite{Gudcomprba} }  A family $(J_p)_{p\in P}$ of compressions on an effect algebra $E$ indexed by a  sub-effect algebra $P$ of $E$  is called a  \emph{compression base} on  $E$ if the following conditions hold:
\begin{enumerate}

\item[(C1)] each $p\in P$ is the focus of $J_p$,
\item[(C2)] $P$ is normal,
\item[(C3)] if $p,q,r\in P$ and $p\oplus q\oplus r$ exists, then $J_{p\oplus q}\circ J_{q\oplus r}=J_q$.
\end{enumerate}
Elements of $P$ are  called \emph{projections}.

\end{definition}

\begin{example}\label{ex:cb_hilb} Let $P(\mathcal H)$ be the set of all projections on a Hilbert space $\mathcal H$ and let $J_p$ for
$p\in P(\mathcal H)$ be as
in Example \ref{ex:compr_effects}. It is easily observed that the set $(J_p)_{p\in P(\mathcal H)}$ is a compression base
in $E(\mathcal H)$, moreover, it is the unique compression base in $E(\mathcal H)$ which
is \emph{maximal} in the sense that it is not contained in any other compression base.
More generally, the set of all compressions in a compressible effect algebra is the unique
maximal compression base, \cite{Gudcomprba}.

\end{example}

It was proved in \cite{JP2} that an equivalent definition of a compression base  is obtained if
only (C1) is required along with the condition
\begin{enumerate}
\item[(C2')] if $p,q\in P$ and $p\leftrightarrow q$ (in $E$), then $J_p\circ J_q=J_r$ for
some $r\in P$.
\end{enumerate}
This definition  is perhaps more clearly motivated by
analogy with Example \ref{ex:cb_hilb}, since it  corresponds to the fact that for the Hilbert space effect algebra $E(\mathcal H)$
and projections $p,q\in P(\mathcal H)$, we have $p\leftrightarrow q$ iff $pq\in
P(\mathcal H)$ and then
$J_p\circ J_q=J_{pq}$.


By  \cite[Corollary 4.5]{Gu} and \cite[Theorem 2.1]{Pucompr},  the set $P$ as a subalgebra
of $E$ is a \emph{regular}
orthomodular poset (OMP) with the orthocomplementation $a\mapsto a^\perp$, and
$J_{p^\perp}$ is a supplement of $J_p$. Recall
that an OMP $P$ is {regular} if for all $a,b,c\in P$, if $a,b$ and $c$ are pairwise compatible, then
$a\leftrightarrow b\vee c$ and $a\leftrightarrow b\wedge c$, \cite{PtPu, Harding}.

\begin{example}\label{ex:compr_mv} Let $M$ be an MV-effect algebra. Every retraction on
$M$ is of the form $U_p(a)=p\wedge a$ for some $p\in M_S$ as its focus.  Moreover, $M$ is a
compressible effect algebra and the set $(U_p)_{p\in M_S}$ is the unique  maximal compression base on $M$ (cf.  \cite[Theorem 3.1]{Pucompr}).
\end{example}

\subsection{Compatibility and commutants}
\label{sec:commutants}

From now on, we will assume that $E$ is an effect algebra with a  fixed compression base $(J_p)_{p\in P}$.
By \cite[Lemma 4.1]{Pucompr} we have the following.

\begin{lemma}\label{le:comE} If $p\in P, a\in E$, then the following statements are equivalent:
\begin{enumerate}
\item $J_p(a)\leq a$,
\item $a=J_p(a)\oplus J_{p^\perp}(a)$,
\item $a\in E[0,p]\oplus E[0,p^\perp]$,
\item $a \leftrightarrow p$,
\item $J_p(a)=p\wedge a$.
\end{enumerate}
\end{lemma}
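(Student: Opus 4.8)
The plan is to establish the five equivalences by proving the cyclic chain (i) $\Rightarrow$ (ii) $\Rightarrow$ (iii) $\Rightarrow$ (iv) $\Rightarrow$ (v) $\Rightarrow$ (i), relying throughout on the basic facts recorded after Definition \ref{de:compr}: every retraction is additive, hence monotone and idempotent, the range of $J_p$ is $E[0,p]$, the kernel of $J_p$ is $E[0,p^\perp]$, and $J_{p^\perp}$ is a supplement of $J_p$. The implications (ii) $\Rightarrow$ (iii) and (v) $\Rightarrow$ (i) are immediate, and the real content sits in (i) $\Rightarrow$ (ii).

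For (i) $\Rightarrow$ (ii), I would assume $J_p(a)\le a$ and set $b:=a\ominus J_p(a)$, so that $a=J_p(a)\oplus b$. Applying the additive map $J_p$ and using $J_p(J_p(a))=J_p(a)$ (idempotence, since $J_p(a)\le p$) yields $J_p(a)=J_p(a)\oplus J_p(b)$; cancellation then forces $J_p(b)=0$, and the compression property gives $b\le p^\perp$. Applying $J_{p^\perp}$ to the same decomposition, I would use $J_{p^\perp}(b)=b$ (as $b\le p^\perp$) together with $J_{p^\perp}(J_p(a))=0$ (because $J_p(a)\le p=(p^\perp)^\perp$ and $J_{p^\perp}$ is a compression) to conclude $J_{p^\perp}(a)=b$. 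Substituting back gives $a=J_p(a)\oplus J_{p^\perp}(a)$, which is (ii).

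The remaining links are short. For (ii) $\Rightarrow$ (iii) note $J_p(a)\in E[0,p]$ and $J_{p^\perp}(a)\in E[0,p^\perp]$. For (iii) $\Rightarrow$ (iv), write $a=x\oplus y$ with $x\le p$, $y\le p^\perp$; then $y\oplus p$ exists (as $y\le p^\perp$), and the triple $a_1=y$, $c=x$, $b_1=p\ominus x$ witnesses $a\leftrightarrow p$. For (iv) $\Rightarrow$ (v), the witnesses $a=a_1\oplus c$, $p=b_1\oplus c$ with $a_1\oplus b_1\oplus c$ defined yield $a_1\le p^\perp$ (since $a_1\oplus p$ exists) and $c\le p$, hence $a=c\oplus a_1$ is a decomposition of type (iii); applying $J_p$ shows $J_p(a)=c$, and $c=p\wedge a$ follows because $c$ is a lower bound of $p$ and $a$, while any $z\le p,a$ satisfies $z=J_p(z)\le J_p(a)=c$ by monotonicity. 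Finally (v) $\Rightarrow$ (i) is trivial since $p\wedge a\le a$. The main obstacle is the decomposition step (i) $\Rightarrow$ (ii): it is the only place where one must simultaneously exploit idempotence, cancellation, the compression property of $J_p$, and the supplement $J_{p^\perp}$; once the identity $J_{p^\perp}(a)=a\ominus J_p(a)$ is secured, everything else is bookkeeping.
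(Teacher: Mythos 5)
Your proof is correct. Note that the paper itself offers no argument for this lemma: it is imported verbatim as a citation of \cite[Lemma 4.1]{Pucompr}, so there is no in-paper proof to compare against. Your cyclic chain is the natural and essentially standard route, and every step checks out: in (i)$\Rightarrow$(ii) you correctly combine monotonicity ($J_p(a)\le J_p(1)=p$, hence idempotence applies), the cancellation law (which is the uniqueness of $\ominus$ stated in Section 2), the compression property of $J_p$ to get $b\le p^\perp$, and the retraction/compression properties of the supplement $J_{p^\perp}$ to identify $b=J_{p^\perp}(a)$. The witnesses you exhibit for (iii)$\Rightarrow$(iv) and the identification $J_p(a)=c=p\wedge a$ in (iv)$\Rightarrow$(v) (lower bound plus $z=J_p(z)\le J_p(a)$ for any common lower bound $z$) are exactly right. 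The only mild observation is that your argument for (i)$\Rightarrow$(ii) never needs the full supplement relation $\ker(J_p)=J_{p^\perp}(E)$, only that $J_{p^\perp}$ is a compression with focus $p^\perp$; this is if anything slightly more economical than what the general theory provides.
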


The \emph{commutant} of $p$ in $E$ is defined by
\[
C(p):=\{ a\in E: a=J_p(a) \oplus J_{p^\perp}(a)\}.
\]
If $Q\subseteq P$, we write $C(Q):=\bigcap_{p\in Q}C(p)$. Similarly, for an element $a\in E$, and a subset $A\subseteq E$, we write
\[
PC(a):=\{p\in P,\ a\in C(p)\},\qquad PC(A):=\bigcap_{a\in A} PC(a).
\]
We also define
\[
CPC(a):= C(PC(a)),\quad P(a):=CPC(a)\cap PC(a)=PC(PC(a)\cup\{a\}).
\]
The set $P(a)\subseteq P$ will be called the \emph{P-bicommutant} of $a$ (that is, $P(a)$ is
the set of all projections $p\in P$ which are compatible with  $a$ and with all
projections compatible with $a$). For a subset $Q\subseteq E$, we put
\[
P(Q):= PC(PC(Q)\cup Q).
\]
Note that the elements in $P(Q)$ are pairwise
compatible and since $P$ is a regular OMP, this implies that $P(Q)$ is a Boolean subalgebra in $P$.

\begin{lemma} \label{lemma:compatible_projs} Let $p,q\in P$, $a\in E$.
\begin{enumerate}
\item \cite[Lemma 4.2]{Pucompr} If $p\perp q$ and either $a\in C(p)$ or $a\in C(q)$, then
\[
J_{p\vee q}(a)=J_{p\oplus q}(a)=J_p(a)\oplus J_q(a).
\]
\item\cite[Cor. 4.3]{Gudcomprba} If $p\leftrightarrow q$, then $J_pJ_q=J_qJ_p=J_{p\wedge q}$.

\end{enumerate}

\end{lemma}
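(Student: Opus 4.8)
The plan is to derive both identities from the defining property (C3) of a compression base together with the fixed-point behaviour of retractions, reducing everything to elementary arithmetic in the regular OMP $P$.

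For (i), I would assume first that $a\in C(p)$, the case $a\in C(q)$ being symmetric since $p\oplus q=q\oplus p$, and set $r:=p\oplus q\in P$. Because $p\perp q$ in the OMP $P$, the join $p\vee q$ exists and coincides with the orthosum $p\oplus q$, so $J_{p\vee q}$ and $J_{p\oplus q}$ are literally the same map and the first equality is automatic. For the second equality I would start from the commutant decomposition $a=J_p(a)\oplus J_{p^\perp}(a)$, which holds by the very definition of $C(p)$, and apply the additive retraction $J_r$ to obtain $J_r(a)=J_r(J_p(a))\oplus J_r(J_{p^\perp}(a))$. The first summand is fixed, since $J_p(a)\le p\le r=J_r(1)$ forces $J_r(J_p(a))=J_p(a)$. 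For the second summand the key observation is that $p\oplus q\oplus r^\perp=r\oplus r^\perp=1$ exists, so (C3) applies to the triple $(p,q,r^\perp)$ and yields $J_{p\oplus q}\circ J_{q\oplus r^\perp}=J_q$; since $q\oplus r^\perp=p^\perp$, this reads $J_r\circ J_{p^\perp}=J_q$, whence $J_r(J_{p^\perp}(a))=J_q(a)$. Adding the two summands gives $J_r(a)=J_p(a)\oplus J_q(a)$, and the required orthogonality $J_p(a)\perp J_q(a)$ comes for free, as these terms arise as the image of a defined $\oplus$ under the additive map $J_r$.

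For (ii), I would exploit the standard Mackey decomposition of a compatible pair in $P$. From $p\leftrightarrow q$ one obtains $p_1,q_1,c\in P$ with $p_1\oplus q_1\oplus c$ defined, $p=p_1\oplus c$ and $q=q_1\oplus c$; the three witnesses may be taken inside $P$ because $P$ is a normal sub-effect algebra, for which compatibility in $P$ agrees with compatibility in $E$. Applying (C3) to the triple $(p_1,c,q_1)$ gives $J_p\circ J_q=J_{p_1\oplus c}\circ J_{c\oplus q_1}=J_c$, and applying it to $(q_1,c,p_1)$ gives $J_q\circ J_p=J_c$. It then remains to identify the common part $c$ with $p\wedge q$: the inclusion $c\le p\wedge q$ is immediate from $c\le p$ and $c\le q$, while the reverse follows from orthomodularity, since $p_1=p\ominus c\le q^\perp$ and $c\le p_1^\perp$ give $(c\oplus p_1)\wedge p_1^\perp=c$, forcing $p\wedge q\le c$. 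Thus $J_pJ_q=J_qJ_p=J_{p\wedge q}$.

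The additivity and fixed-point steps are routine; the real care lies in the lattice arithmetic of $P$, namely that orthogonal projections join to their orthosum, that the Mackey witnesses can be placed within $P$, and above all that the common component $c$ of a compatible pair genuinely equals $p\wedge q$. All of these rest on $P$ being a regular orthomodular poset, and I expect the identification $c=p\wedge q$ to be the one point deserving a careful orthomodular argument.
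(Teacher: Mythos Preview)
The paper does not supply a proof of this lemma; both parts are simply quoted from the cited references (\cite{Pucompr} for (i), \cite{Gudcomprba} for (ii)). Your argument is correct and is essentially the standard one found in those sources: both identities follow from axiom (C3) together with the retraction property and elementary arithmetic in the regular OMP $P$.

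The only spot worth tightening is the identification $c=p\wedge q$ in part (ii). Your step $(c\oplus p_1)\wedge p_1^\perp=c$ must be read as a meet in $E$, not merely in $P$, for the conclusion ``$p\wedge q\le c$'' to apply to arbitrary lower bounds. This does hold, because $p=c\oplus p_1$ is principal in $E$ (it is the focus of $J_p$): any $d\in E$ with $d\le p$ and $d\le p_1^\perp$ satisfies $d\perp p_1$, hence $d\oplus p_1\le p$ by principality, whence $d\le p\ominus p_1=c$. With this made explicit, the chain $d\le p,q\Rightarrow d\le p,\,p_1^\perp\Rightarrow d\le c$ establishes $c=p\wedge q$ in $E$, and in particular in $P$.
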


Recall that a maximal set of pairwise compatible elements in a regular  OMP $P$ is called a \emph{block} of $P$ \cite[Corollary 1.3.2]{PtPu}.
It is well known that every block  $B$ is a Boolean subalgebra of $P$ \cite[Theorem 1.3.29]{PtPu}.
If $B$ is a block of $P$, the set $C(B)$ will be called a \emph{C-block} of $E$.

\begin{example} \label{ex:hilb_commutant}
 Let  $E=E(\mathcal H)$ for a Hilbert space $\mathcal H$. It is easily checked that for any projection $p\in P(\mathcal H)$,
\[
C(p)=\{p\}'\cap E=\{a\in E, pa=ap\}
\]
and for any $a\in E$,
\[
P(a)=\{a\}''\cap P(\mathcal H)
\]
(here $C'$ denotes the usual commutant of a subset of bounded operators $C\subset B(\mathcal H)$). The C-blocks are the unit intervals in maximal abelian von Neumann subalgebras of $B(\mathcal H)$.
\end{example}

\subsection{Spectral effect algebras}

In this section we recall the definition of a spectral  effect algebra, introduced in
\cite{Pucompr}. Remember that $E$ is  an effect algebra with a distinguished compression base
$(J_p)_{p\in P}$. Spectrality is defined by two properties of the compression base. The
first property is an analogue of existence of support projections.

\begin{definition}\label{de:projcov} If $a\in E$ and $p\in P$, then $p$ is a \emph{projection cover} for $a$ if, for all
$q\in P$, $a\leq q \,\Leftrightarrow \, p\leq q$. We say that $E$  has the \emph{projection cover property} if every effect $a\in E$ has a (necessarily unique) projection cover. The projection cover of $a\in E$ will be denoted as $a^\circ$.

\end{definition}

For an element $a\in E$, we may also define the \emph{floor} of $a$ as the largest projection under
$a$ (if it exists). It will be denoted by $a_\circ$. The relation to the projection cover
is $(a^\perp)_\circ =(a^\circ)^\perp$, this is rather obvious from $p\le a \iff a^\perp\le
p^\perp$. It follows that  $E$ has the projection cover property if and only if  any element has
the floor.

\begin{theorem}[{\cite[Thm. 5.2]{Gudcomprba}, \cite[Thm. 5.1]{Pucompr}}]\label{th:projcovoml}  Suppose that $E$ has the projection cover property.
Then $P$ is an orthomodular
lattice (OML). Moreover, $\mathcal P$ is sup/inf-closed in $E$.
\end{theorem}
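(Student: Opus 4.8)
The plan is to treat the two assertions separately. Since it is already recorded (in the paragraph following Example~\ref{ex:compr_mv}) that $P$ is a regular OMP under $p\mapsto p^\perp$, to prove that $P$ is an OML I only need to show that any two projections $p,q\in P$ have a join in $P$; the meet then exists by De~Morgan, $p\wedge q=(p^\perp\vee q^\perp)^\perp$, and an OMP that happens to be a lattice is by definition an OML. The second assertion, sup/inf-closedness, I would deduce directly from the projection cover property together with its consequence (noted just before Theorem~\ref{th:projcovoml}) that every element also has a floor.

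For the join I would first construct a candidate and then verify its universal property. Put $c:=J_{p^\perp}(q)$. Since retractions are additive, hence monotone (from $b=a\oplus(b\ominus a)$ one reads off $J(a)\le J(b)$), we have $c\le J_{p^\perp}(1)=p^\perp$, so the projection cover $r:=c^\circ$ satisfies $r\le p^\perp$, i.e.\ $r\perp p$, and $s:=p\oplus r$ is a well-defined element of $P$. The heart of the argument is the inequality $q\le s$. I would prove the equivalent statement $J_{s^\perp}(q)=0$, which by the compression property (Definition~\ref{de:compr}(ii)) says exactly $q\le(s^\perp)^\perp=s$. Here $s^\perp\le p^\perp$, so $s^\perp\leftrightarrow p^\perp$, and Lemma~\ref{lemma:compatible_projs}(ii) gives $J_{s^\perp}=J_{s^\perp}\circ J_{p^\perp}$ (using $s^\perp\wedge p^\perp=s^\perp$). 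Hence $J_{s^\perp}(q)=J_{s^\perp}(c)$; but $c\le r\le s=(s^\perp)^\perp$, so $J_{s^\perp}(c)=0$, again by the compression property. Thus $q\le s$, and trivially $p\le s$, so $s$ is an upper bound of $\{p,q\}$ in $P$.

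To see that $s$ is the \emph{least} upper bound, let $w\in P$ with $p\le w$ and $q\le w$. From $p\le w$ I get $p\leftrightarrow w$, hence $p^\perp\leftrightarrow w$, and Lemma~\ref{le:comE} (equivalence of (i) and (iv)) yields $J_{p^\perp}(w)\le w$. Monotonicity then gives $c=J_{p^\perp}(q)\le J_{p^\perp}(w)\le w$, and the defining property of the projection cover forces $r=c^\circ\le w$. Since orthogonal elements in an OMP have their orthosum as join, $s=p\vee r$, so $p\le w$ and $r\le w$ give $s\le w$. Therefore $s=p\vee q$ in $P$, $P$ is a lattice, and hence an OML.

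Finally, for sup/inf-closedness, suppose $M\subseteq P$ and $b=\bigwedge M$ exists in $E$. For each $m\in M$ we have $b\le m$, so $b^\circ\le m$; thus $b^\circ$ is a lower bound of $M$ in $E$, giving $b^\circ\le b$, while always $b\le b^\circ$, whence $b=b^\circ\in P$. Dually, if $a=\bigvee M$ exists in $E$, then each $m\le a$ forces $m\le a_\circ$, so the floor $a_\circ$ is an upper bound of $M$, giving $a\le a_\circ\le a$ and $a=a_\circ\in P$. I expect the genuinely delicate point to be the identity $J_{s^\perp}(q)=0$ in the join construction, since everything there rests on combining the compression property with Lemma~\ref{lemma:compatible_projs}(ii); the leastness argument and the two sup/inf-closedness cases are then routine consequences of the projection cover and the floor.
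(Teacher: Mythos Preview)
The paper does not supply its own proof of this theorem; it only cites \cite[Thm.~5.2]{Gudcomprba} and \cite[Thm.~5.1]{Pucompr}. So there is no in-paper argument to compare against, and the relevant question is simply whether your proof is correct. It is: the construction $s=p\oplus (J_{p^\perp}(q))^\circ$ is precisely the standard Sasaki-type join used in the cited sources, and your verification via $J_{s^\perp}=J_{s^\perp}J_{p^\perp}$ (from $s^\perp\le p^\perp$ and Lemma~\ref{lemma:compatible_projs}(ii)) together with $c\le r\le s$ is exactly the right mechanism. The leastness step is also fine; note that you could equally well bypass the ``orthosum equals join in an OMP'' remark by observing directly that $w\in P$ is principal, so $p,r\le w$ with $p\perp r$ forces $s=p\oplus r\le w$.

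Your sup/inf-closedness argument is correct and is essentially how it is done in \cite{Pucompr}: for $\bigwedge M$ use $b\le m\in P\Rightarrow b^\circ\le m$, and for $\bigvee M$ use $m\le a\Rightarrow m\le a_\circ$, with existence of $a_\circ$ guaranteed by the remark immediately preceding the theorem. One cosmetic point: the statement writes $\mathcal P$ where the rest of the paper uses $P$; you have (rightly) read this as $P$.
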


\begin{prop}\label{prop:pc_commutant} Let $E$ have the projection cover property. Then for any  $a\in E$, $a^\circ\in P(a)$.

\end{prop}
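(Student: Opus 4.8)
The plan is to unwind the definition $P(a)=PC(a)\cap CPC(a)$ into two claims about $a^\circ$ and verify each via Lemma \ref{le:comE}, which reduces the relevant compatibility statements to inequalities for the compressions $J_q$. Concretely, $a^\circ\in PC(a)$ means $a\in C(a^\circ)$, i.e.\ $a\leftrightarrow a^\circ$, while $a^\circ\in CPC(a)$ means $a^\circ\in C(q)$, i.e.\ $a^\circ\leftrightarrow q$, for every $q\in PC(a)$.

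First I would settle $a^\circ\in PC(a)$. Taking $q=a^\circ$ in Definition \ref{de:projcov} gives $a\le a^\circ=J_{a^\circ}(1)$, so the retraction property forces $J_{a^\circ}(a)=a$. Hence $J_{a^\circ}(a)\le a$, and Lemma \ref{le:comE} yields $a\in C(a^\circ)$, that is, $a^\circ\in PC(a)$.

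The substance lies in showing $a^\circ\in CPC(a)$. Fix $q\in PC(a)$; then $a\leftrightarrow q$, so by Lemma \ref{le:comE} we may write $a=a_1\oplus a_2$ with $a_1=J_q(a)\le q$ and $a_2=J_{q^\perp}(a)\le q^\perp$. The projection cover property then gives $a_1^\circ\le q$ and $a_2^\circ\le q^\perp$, so in particular $a_1^\circ\perp a_2^\circ$. The key step is to identify
\[
a^\circ=a_1^\circ\vee a_2^\circ=a_1^\circ\oplus a_2^\circ .
\]
For the first equality I would verify that $a_1^\circ\vee a_2^\circ$, which lies in $P$ since $P$ is an OML by Theorem \ref{th:projcovoml}, satisfies the defining property of the projection cover of $a$: for $r\in P$ one has $a\le r$ if and only if $a_1\le r$ and $a_2\le r$ (using $a_1,a_2\le a$ for one direction and principality of $r$ together with $a_1\perp a_2$ for the other), which by the projection cover property is equivalent to $a_1^\circ\le r$ and $a_2^\circ\le r$, i.e.\ to $a_1^\circ\vee a_2^\circ\le r$. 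The second equality is the standard fact that orthogonal elements of an OML satisfy $x\vee y=x\oplus y$.

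Finally, using additivity of $J_q$ together with $a_1^\circ\le q=J_q(1)$ and $a_2^\circ\in[0,q^\perp]=\ker J_q$, I compute
\[
J_q(a^\circ)=J_q(a_1^\circ)\oplus J_q(a_2^\circ)=a_1^\circ\oplus 0=a_1^\circ\le a^\circ .
\]
By Lemma \ref{le:comE} this gives $a^\circ\in C(q)$, hence $a^\circ\leftrightarrow q$, for every $q\in PC(a)$, so $a^\circ\in CPC(a)$. Together with the first part this yields $a^\circ\in PC(a)\cap CPC(a)=P(a)$. I expect the principal obstacle to be the identification $a^\circ=a_1^\circ\oplus a_2^\circ$: this is precisely where the projection cover property, the principality of projections, and the lattice structure of $P$ all come into play, and where one must be careful to confirm that the resulting element genuinely lies in $P$.
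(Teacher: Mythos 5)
Your proof is correct. The paper disposes of the first inclusion, $a^\circ\in PC(a)$, exactly as you do (from $a\le a^\circ$ via Lemma \ref{le:comE}), but for the second inclusion it simply cites \cite[Thm.~5.2~(i)]{Pucompr} rather than arguing in the text. Your argument for that part --- decomposing $a=J_q(a)\oplus J_{q^\perp}(a)$ for a fixed $q\in PC(a)$, identifying $a^\circ=(J_q(a))^\circ\oplus(J_{q^\perp}(a))^\circ$ by checking the universal property of the projection cover, and then computing $J_q(a^\circ)=(J_q(a))^\circ\le a^\circ$ so that Lemma \ref{le:comE} applies --- is essentially the content of the cited result, and every step you flag does go through: elements of $P$ are principal (so $a_1\le r$, $a_2\le r$ and $a_1\perp a_2$ give $a\le r$); Theorem \ref{th:projcovoml} guarantees that $P$ is an OML, so $a_1^\circ\vee a_2^\circ$ exists in $P$ and the defining equivalence for $a^\circ$ only needs to be tested against $r\in P$; and for orthogonal projections the supremum coincides with the orthosum, since $a_1^\circ\oplus a_2^\circ$ lies in the sub-effect algebra $P$ and is a least upper bound by principality again. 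The net effect is a self-contained proof of the step the paper outsources; nothing is missing.
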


\begin{proof}  Since $a\le a^\circ$, $a\in C(a^\circ)$ by Lemma \ref{le:comE} (iv). The rest follows by \cite[Thm. 5.2 (i)]{Pucompr}.

\end{proof}

The second property, the b-comparability,  was introduced  as
an analogue of the general comparability property in unital partially ordered abelian groups
\cite{Funig}, where it can be interpreted  as  existence of orthogonal decompositions of
elements into a positive and a negative part.

In the case of effect algebras with compression bases, the definition is  more
involved. We first introduce a notion analogous to commutativity of Hilbert space effects.
Recall that
for  $a,b\in E(\mathcal H)$, $ab=ba$ implies that $a\leftrightarrow b$, but the converse
is not necessarily true unless $a$ or $b$ is a projection. To obtain the corresponding
notion for an effect algebra $E$ with a compression base, we will need a further
property.

\begin{definition}{\cite[Definition 6.1]{Pucompr}}\label{de:belement}  We will say that $a\in E$ has the
\emph{b-property} (or is a \emph{b-element}) if there is a Boolean subalgebra $B(a)\subseteq P$ such that for all $p\in P$, $a\in C(p) \,\Leftrightarrow \, B(a)\subseteq C(p)$. We say that $E$ has the \emph{b-property} if every $a\in E$ is a b-element.
\end{definition}

The Boolean subalgebra  $B(a)$ in the above definition is in general not unique. By
\cite[Lemma 3.20]{JP2}, the bicommutant $P(a)$ of $a$ is the largest such subalgebra.
Further, by \cite[Proposition 6.1]{Pucompr}, every projection $q\in P$ is a b-element with
$B(q)=\{0,q,q^\perp,1\}$ and if  an element $a\in E$ is a b-element,
then there is a block $B$ of $P$ such that $a\in C(B)$.

Let $e,f\in E$ have the b-property. We say that $e$ and $f$ \emph{commute}, in notation $eCf$, if
\begin{equation}\label{eq:commut}\ P(e) \leftrightarrow P(f).
\end{equation}
By \cite[Lemma 2.18]{JP2}, this is equivalent to $B(e)\leftrightarrow B(f)$ for any choice
of the Boolean subalgebras $B(e)$ and $B(f)$. For $p\in P$ we have $eCp \iff e\leftrightarrow p \iff e\in C(p)$,
\cite[Lemma 6.1]{Pucompr}, so this
definition coincides with compatibility if one of the elements is a projection.

\begin{remark}
Roughly speaking, the b-property can be seen as the requirement that there are `enough'
projections in $E$. Of course, this depends on the choice of the compression base. For
example, if the compression base is trivial,  that is, $P=\{0,1\}$, then $E$ trivially has
the b-property and $aCb$ for any $a,b\in E$.
\end{remark}

\begin{theorem}\label{thm:cblock}{\rm \cite[Theorem 2.19]{JP2}} Assume that  $E$ has the
b-property.  Then $E$ is covered by its C-blocks. Moreover,
C-blocks in $E$ coincide with maximal sets of pairwise commuting elements in $E$.

\end{theorem}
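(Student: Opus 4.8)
The plan is to separate the covering statement, which is essentially a restatement of earlier machinery, from the identification of C-blocks with maximal commuting sets, which carries the real content. For the covering, I would simply take an arbitrary $a\in E$; since $E$ has the b-property, $a$ is a b-element, so by \cite[Proposition 6.1]{Pucompr} there is a block $B$ of $P$ with $a\in C(B)$. As $C(B)$ is by definition a C-block, every element of $E$ lies in some C-block, which is exactly the covering assertion.

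For the second statement I would prove the two inclusions separately. First, that each C-block $C(B)$ is a set of pairwise commuting elements which is maximal as such. The key step is to show $P(e)\subseteq B$ for every $e\in C(B)$. Indeed, $e\in C(B)$ says $B\subseteq PC(e)$; if $p\in P(e)=CPC(e)\cap PC(e)$, then $p\in C(PC(e))$, so $p\leftrightarrow r$ for every $r\in PC(e)$, in particular for every $r\in B$. Since $B$ is a maximal set of pairwise compatible projections, $B\cup\{p\}$ being pairwise compatible forces $p\in B$. Having $P(e),P(f)\subseteq B$ and $B$ pairwise compatible yields $P(e)\leftrightarrow P(f)$, which is precisely $eCf$. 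Maximality then follows because $B\subseteq C(B)$ (each $q\in B$ is compatible with all of $B$, hence lies in $C(p)$ for all $p\in B$): any $g$ commuting with every element of $C(B)$ commutes in particular with each projection $p\in B$, and by \cite[Lemma 6.1]{Pucompr} this means $g\in C(p)$ for all $p\in B$, i.e.\ $g\in C(B)$.

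Conversely, I would start from a maximal set $S$ of pairwise commuting elements and construct the block. Put $Q=\bigcup_{a\in S}P(a)$. Since the elements of $S$ pairwise commute, $P(a)\leftrightarrow P(b)$ for all $a,b\in S$, so $Q$ is a set of pairwise compatible projections; by Zorn's lemma I extend it to a block $B$ of $P$. For $a\in S$ and $p\in B$, every $q\in P(a)\subseteq Q\subseteq B$ is compatible with $p$, hence $P(a)\subseteq C(p)$, and by the b-property applied with $B(a)=P(a)$ this gives $a\in C(p)$; thus $S\subseteq C(B)$. By the first inclusion $C(B)$ is itself a set of pairwise commuting elements containing $S$, so maximality of $S$ forces $S=C(B)$, exhibiting $S$ as a C-block.

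The hard part will be the inclusion $P(e)\subseteq B$ for $e\in C(B)$: this is where the P-bicommutant must be linked to the block structure, using both the definition $P(e)=CPC(e)\cap PC(e)$ and the maximality of $B$ (together with regularity of $P$, which guarantees that the relevant compatible sets are Boolean and extend to blocks). Once this lemma is in place, both the pairwise commutativity of $C(B)$ and the matching of maximal commuting sets with C-blocks fall out cleanly from the b-property and \cite[Lemma 6.1]{Pucompr}.
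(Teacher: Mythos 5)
Your proof is correct. The paper does not reprove this statement (it is imported from \cite[Theorem 2.19]{JP2}), but your argument is the standard one behind that result: the covering follows from \cite[Proposition 6.1]{Pucompr} as quoted in the paper, and the identification of C-blocks with maximal commuting sets rests on exactly the right key step, namely that $P(e)\subseteq B$ for $e\in C(B)$ by maximality of the block, together with the fact (recorded in the paper via \cite[Lemma 3.20]{JP2}) that $P(a)$ is itself an admissible choice of $B(a)$, so nothing further is needed.
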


\begin{definition}\label{de:b-compar} {\rm \cite[Definition 6.3]{Pucompr} } An effect algebra  $E$ has the
\emph{b-comparability property} if
\begin{enumerate}
\item[(a)] $E$ has  the b-property.
\item[(b)] For all $e,f\in E$ such that  $eCf$, the set
\[
P_\le(e,f):=\{p\in P(e,f),\ J_p(e)\le J_p(f) \text{ and } J_{p^\perp}(f)\le J_{p^\perp}(e)\}
\]
is nonempty.
\end{enumerate}
\end{definition}

The b-comparability property has important consequences on the set of projections and on the structure of the C-blocks.

\begin{theorem}\label{th:proper}{\rm \cite[Theorem 6.1]{Pucompr}} Let $E$  have the b-comparability property. Then
every sharp element is a projection: $P=E_S$.
\end{theorem}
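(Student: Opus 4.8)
The goal is to show $P = E_S$, i.e.\ that under b-comparability every sharp element is a projection. One inclusion is free: every projection $p \in P$ is the focus of the compression $J_p$, hence a principal element, hence sharp, so $P \subseteq E_S$. The real content is the reverse inclusion $E_S \subseteq P$: given a sharp $a$, I must produce a compression in the base whose focus is $a$.

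The plan is to apply the b-comparability property (b) to a cleverly chosen commuting pair and extract a projection forced to coincide with $a$. The natural candidates are $e = a$ and $f = 0$, or more symmetrically a pair built from $a$ and $a^\perp$. Let me take $e = a$ and $f = a^\perp$ (or $f=0$): since $a$ is a b-element and $0$ is a projection, we have $aC0$, so $P_\le(a,0)$ is nonempty; pick $p$ in it. Unwinding the definition, $J_p(a) \le J_p(0) = 0$ forces $J_p(a) = 0$, which by the compression property means $a \le p^\perp$; and $J_{p^\perp}(0) = 0 \le J_{p^\perp}(a)$ gives no new information, so I would instead run the argument symmetrically on the pair $(a, a^\perp)$ or $(0,a)$ to also force $a^\perp \le p$, equivalently $a \le p^\perp$ and $p \le a$. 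Combining these, and using that $a$ is sharp together with $a \leftrightarrow p$ (which holds since $p \in P(a)$, so $a \in C(p)$), I expect to pin down $p^\perp$ as the floor $(a^\perp)_\circ$ or $p$ as something squeezed between two projections that the sharpness of $a$ collapses onto $a$ itself.

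Concretely, the key step is the following squeeze. From $p \in P(a)$ we get $a \in C(p)$, so by Lemma~\ref{le:comE} $a = J_p(a) \oplus J_{p^\perp}(a)$ and $J_p(a) = p \wedge a \le p$. The conditions defining $P_\le$ should force $J_{p^\perp}(a) = 0$ (so that $a = J_p(a) = p \wedge a \le p$) while a symmetric application forces $a^\perp \le p^\perp$, i.e.\ $p \le a$. Then $p \le a \le p$ wait---rather $p \le a$ and $a \le p$ would give $a = p$ directly. The subtlety is that one application of (b) only yields one of these inequalities, so I would apply b-comparability to both $(a,a^\perp)$ and its reverse, or exploit that $p \in P_\le(e,f)$ simultaneously controls $J_p$ and $J_{p^\perp}$, to obtain both $a \le p$ and $p \le a$. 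Sharpness of $a$ enters to rule out the ``gap'': if the two applications only give $p \le a$ with $J_{p^\perp}(a)$ a nonzero element below both $p^\perp$ and $a$, sharpness ($a \wedge a^\perp = 0$) together with the location of this element below $a^\perp$ forces it to vanish.

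The main obstacle I anticipate is precisely the bookkeeping of \emph{which} inequalities a single invocation of (b) delivers, and ensuring the chosen commuting pair is one to which (b) legitimately applies (i.e.\ checking the b-property hypotheses and that the relevant commutation $eCf$ holds). One must be careful that $P_\le(e,f)$ lives inside $P(e,f)$, so the extracted $p$ is automatically compatible with $a$, which is what makes Lemma~\ref{le:comE} available and lets the algebra of $J_p(a)$, $J_{p^\perp}(a)$ close up. Once the squeeze $a \le p \le a$ (or the sharpness-assisted collapse of the residual term) is in hand, $a = p \in P$ follows and the theorem is proved.
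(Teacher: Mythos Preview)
The paper does not supply its own proof of this statement; it is quoted from \cite[Theorem 6.1]{Pucompr}, so there is no in-paper argument to compare against. On the merits, your core idea---apply b-comparability to the pair $(a,a^\perp)$ and let sharpness collapse the residuals---is correct. Here is the clean line your sketch is circling. Since $PC(a)=PC(a^\perp)$ (compatibility with a projection is preserved under $\perp$), the bicommutants agree and $aCa^\perp$ holds. Pick $p\in P_\le(a,a^\perp)$. As $p\in P(a)$, Lemma~\ref{le:comE} gives $J_p(a)=p\wedge a\le a$ and $J_p(a^\perp)=p\wedge a^\perp\le a^\perp$. The first $P_\le$-inequality then yields $J_p(a)\le J_p(a^\perp)\le a^\perp$, so $J_p(a)\le a\wedge a^\perp=0$ and hence $a\le p^\perp$. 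The second yields $J_{p^\perp}(a^\perp)\le J_{p^\perp}(a)\le a$, so $J_{p^\perp}(a^\perp)\le a\wedge a^\perp=0$ and hence $a^\perp\le p$, i.e.\ $p^\perp\le a$. Thus $a=p^\perp\in P$.

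Your exposition, however, wanders among several variants and contains a slip: in your final ``collapse'' you assert that $J_{p^\perp}(a)$ lies below $a^\perp$, but nothing you have written forces that. The elements sharpness actually annihilates are $J_p(a)$ and $J_{p^\perp}(a^\perp)$, via the two $P_\le$-inequalities as above; $J_{p^\perp}(a)$ itself need not be touched. The $(a,0)/(0,a)$ route you also float really does have the defect you flag---two separate invocations of (b) produce two a priori unrelated projections---so commit to the single application on $(a,a^\perp)$, which delivers both inequalities at once.
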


\begin{theorem}\label{th:bcompar} { \rm  \cite[Theorem 7.1]{Pucompr}} Let $E$ have the b-comparability property and
 let $C=C(B)$ for a block $B$ of $P$. Then
\begin{enumerate}
\item $C$ is an MV-effect algebra.
\item For $p\in B$,  the restriction ${J}_p|_{C}$ coincides with $U_p$ (recall Example
\ref{ex:compr_mv}) and
$(U_p)_{p\in B}$ is the maximal compression base in $C$. Moreover, $(U_p)_{p\in B}$
 has the b-comparability property in $C$.
\item If $E$ has the projection cover property, then $C$ has the projection cover property.
\item If $E$ is $\sigma$-orthocomplete, then $C$ is $\sigma$-orthocomplete.
\end{enumerate}
\end{theorem}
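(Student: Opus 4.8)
The plan is to extract, for each pair of elements of $C$, the comparing projection furnished by b-comparability, to show it lies in the block $B$, and to read off all four assertions from it; the identification of the sharp elements of $C$ is postponed and then obtained by applying Theorem \ref{th:proper} to $C$ itself. Fix a block $B$ of $P$ and write $C=C(B)=\bigcap_{p\in B}C(p)$. Each $C(p)$ is a sub-effect algebra, hence so is $C$, and by Theorem \ref{thm:cblock} the C-block $C$ is a maximal set of pairwise commuting elements, so any $a,b\in C$ satisfy $aCb$.

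For (1), given $a,b\in C$, condition (b) of Definition \ref{de:b-compar} gives $p\in P(a,b)$ with $J_p(a)\le J_p(b)$ and $J_{p^\perp}(b)\le J_{p^\perp}(a)$. The first step is $p\in B$: since $a,b\in C(q)$ for every $q\in B$, each such $q$ belongs to $PC(\{a,b\})$, so $B\subseteq PC(\{a,b\})$ and therefore $p\in P(a,b)\subseteq PC(B)$; thus $p\leftrightarrow q$ for all $q\in B$, and maximality of $B$ forces $p\in B$. Using $p\leftrightarrow q$ and Lemma \ref{lemma:compatible_projs}(ii) one checks that $J_p$ carries each $C(q)$ into itself, so $m:=J_p(a)\oplus J_{p^\perp}(b)$ and $M:=J_p(b)\oplus J_{p^\perp}(a)$ lie in $C$. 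That $m=a\wedge b$ and $M=a\vee b$ (infimum and supremum in $C$) follows by decomposing an arbitrary bound $x\in C$ as $x=J_p(x)\oplus J_{p^\perp}(x)$ and applying monotonicity of $J_p,J_{p^\perp}$ together with Lemma \ref{le:comE}. Finally, cancelling common summands gives $(a\vee b)\ominus a=J_p(b)\ominus J_p(a)=b\ominus(a\wedge b)$, the defining MV-identity, so $C$ is an MV-effect algebra.

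For (2), $J_p|_C=U_p$ is immediate from Lemma \ref{le:comE}(v). One checks that $(U_p)_{p\in B}$ is a compression base on $C$: (C1) is clear, and (C2') holds because $U_pU_q=J_{p\wedge q}|_C=U_{p\wedge q}$ with $p\wedge q\in B$ for $p,q\in B$. It has the b-comparability property in $C$: as all elements of $C$ pairwise commute, the commutant of every $p\in B$ in $C$ is all of $C$, so the b-property is automatic and the bicommutant of any pair is $B$, while the projection $p\in B$ found above witnesses condition (b). Applying Theorem \ref{th:proper} to $C$ now yields $C_S=B$, and Example \ref{ex:compr_mv} then identifies $(U_p)_{p\in B}=(U_p)_{p\in C_S}$ as the unique maximal compression base of the MV-effect algebra $C$.

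Parts (3) and (4) rest on the closure lemma that each $C(p)$, $p\in P$, is sup/inf-closed in $E$. Indeed, if $G\subseteq C(p)$ and $a=\bigvee G$ exists, put $u=\bigvee_{g\in G}J_p(g)\le p$ and $v=\bigvee_{g\in G}J_{p^\perp}(g)\le p^\perp$; then $a\le u\oplus v$, and applying $J_p$ and $J_{p^\perp}$ gives $J_p(a)=u$, $J_{p^\perp}(a)=v$. Writing $r=(u\oplus v)\ominus a$, additivity forces $J_p(r)=J_{p^\perp}(r)=0$, i.e. $r\le p$ and $r\le p^\perp$, so $r\le p\wedge p^\perp=0$ by sharpness of $p$ and $a=u\oplus v\in C(p)$. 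Hence $C=\bigcap_{p\in B}C(p)$ is sup/inf-closed, and suprema and infima computed in $C$ agree with those in $E$. For (3), the projection cover $a^\circ$ exists in $E$ and lies in $P(a)$ by Proposition \ref{prop:pc_commutant}; as before $a^\circ\in PC(B)\cap P=B$, and its defining property restricted to $q\in B$ shows it is the projection cover of $a$ in $C$. For (4), an ascending sequence in $C$ has a supremum in $E$ by monotone $\sigma$-completeness, and the closure lemma places that supremum in $C$. The main obstacle is part (1): one must verify that the guessed $m$ and $M$ are the genuine meet and join and keep track of the bookkeeping that the comparing projection lands in $B$ and that $m,M$ remain inside $C$; the subtle point is to identify $C_S$ without circularity, which is why I establish the b-comparability of $C$ first and only then invoke Theorem \ref{th:proper} internally to conclude $C_S=B$.
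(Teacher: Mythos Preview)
The paper does not supply its own proof of this theorem; it is quoted verbatim from \cite[Theorem 7.1]{Pucompr}. Your argument is a direct and essentially correct reconstruction along the expected lines, with one technical point that needs cleaning up.

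In your closure lemma for $C(p)$ you write ``put $u=\bigvee_{g\in G}J_p(g)$ and $v=\bigvee_{g\in G}J_{p^\perp}(g)$'', but the existence of these suprema is not justified in a general effect algebra, and your derivation of $J_p(a)=u$ uses that $u$ is the least upper bound. The repair is immediate and actually simplifies the argument: set $u:=J_p(a)$ and $v:=J_{p^\perp}(a)$ from the start. Monotonicity of $J_p,J_{p^\perp}$ gives $g=J_p(g)\oplus J_{p^\perp}(g)\le u\oplus v$ for every $g\in G$, whence $a=\bigvee G\le u\oplus v$; your computation with $r=(u\oplus v)\ominus a$ then goes through unchanged, since applying $J_p$ to $a\oplus r=u\oplus v$ uses only $J_p(u)=u$, $J_p(v)=0$ and additivity to force $J_p(r)=0$. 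Note also that, as your own text reveals, the closure lemma is not actually invoked in your proof of (3): that part proceeds directly via Proposition~\ref{prop:pc_commutant} and maximality of $B$.

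The remainder---locating the comparing projection in $B$ via $P(a,b)\subseteq PC(B)$, constructing the lattice operations $m,M$ and verifying the MV identity, checking (C2') for $(U_p)_{p\in B}$, and the internal application of Theorem~\ref{th:proper} to obtain $C_S=B$ before invoking Example~\ref{ex:compr_mv}---is sound and nicely organized.
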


Finally, we have the following definition of spectrality on effect algebras.

\begin{definition} An effect algebra $E$ with a given compression base $(J_p)_{p\in P}$ is \emph{spectral} if it has both the projection cover and the b-comparability property.
\end{definition}

It was proved in \cite[Thm. 4.15]{JP2} that in  an archimedean spectral effect algebra, any
element $a$  has a
\emph{spectral resolution} that can be characterized as the unique family
$\{p_\lambda\}_\lambda$ of projections commuting with $a$ parametrized by $\lambda\in \mathbb Q\cap [0,1]$,
 which is nondecreasing and right continuous (that is, $p_\lambda\le p_\mu$ if $\lambda\le
 \mu$ and
 $\bigwedge_{\lambda<\mu}p_\mu=p_\lambda$) and satisfies an additional condition that can
 be interpreted as ``$a\le \lambda$ on $p_\lambda$ and $a\ge \lambda$ on
 $p_\lambda^\perp$''. In addition, if the effect algebra has a separating set of states,
 then any element is uniquely determined by its spectral resolution and two elements
 commute if and only if the corresponding spectral resolutions are elementwise compatible.

\begin{example} \cite{JP2}
\begin{enumerate}
\item The algebra $E(\mathcal H)$ of Hilbert space effects is spectral,
similarly, the unit interval in a von Neumann algebra or in a JBW-algebra is spectral
\cite{AS}. By the results of \cite{JP1,JP2}, the unit interval in a JB-algebra is spectral
if and only if the JB-algebra is \emph{Rickart}.
\item  If $E$ and $F$ are spectral effect algebras, then their direct product $E\times F$
endowed with the direct product of compression bases is spectral.
\item Using a faithful state of $E(\mathcal H)$, the horizontal sum $E(\mathcal H)\dot{\cup} E(\mathcal H)$ can be endowed with a
compression base which makes it spectral. In general  the horizontal sum of spectral effect algebras is not spectral.
\item An MV-effect algebra is spectral if it is monotone $\sigma$-complete or a boolean
algebra.
\item An OMP is spectral if and only if it is a boolean algebra.

\end{enumerate}

\end{example}

\section{Special types of effect algebras}\label{sec:coseq}

\subsection{Convex effect algebras}

An effect algebra  $E$ is \emph{convex}  \cite{GuPu} if for every $a\in E$ and $\lambda \in [0,1] \subset {\mathbb R}$ there is an element $\lambda a\in E$ such that for all $a,b\in E$ and all $\lambda, \mu \in [0,1]$ we have

\begin{enumerate}
\item[(C1)] $\mu(\lambda a)=(\lambda \mu)a$.
\item[(C2)] If $\lambda + \mu\leq 1$ then $\lambda a\oplus \mu a\in E$ and $(\lambda +\mu)a=\lambda a\oplus \mu a$.
\item[(C3)] If $a\oplus b\in E$ then $\lambda a\oplus \lambda b \in E$ and $\lambda(a\oplus b)=\lambda a\oplus \lambda b$.
\item[(C4)] $1a=a$.
\end{enumerate}

A convex effect algebra is convex in the usual sense: for any $a,b\in E$, $\lambda \in [0,1]$, the element $\lambda a\oplus
(1-\lambda)b\in E$. An important example of a convex effect algebra is the algebra $E(\mathcal H)$ of Hilbert space
effects, Example \ref{ex:effects}.

Let $V$ be  an ordered real linear space with positive cone $V^+$. Let $u\in V^+$ and let us form the interval effect algebra $V[0,u]$.
 A straightforward
verification shows that  $(\lambda,x)\mapsto \lambda x$ is a convex structure on $V[0,u]$, so $V[0,u]$ is a
convex effect algebra which we call a \emph{linear effect algebra}. By \cite[Theorem
3.4]{GPBB}, any convex effect
algebra is isomorphic to the linear effect algebra $V[0,u]$ in an ordered vector space
with order unit $u$. Moreover, this isomorphism is
\emph{affine}, which means that it preserves the convex structures.

In convex effect algebras we have a stronger notion of archimedeanity:

\begin{definition} \label{de:strongarch} A convex effect algebra $E$ is \emph{strongly
archimedean} if, for any $a,b,c\in E$, if $a\leq b\oplus \frac{1}{n} c$ $\forall n\in
{\mathbb N}$, then  $a\leq b$.
\end{definition}

The next theorem describes the relations among order unit spaces, ordering sets of states and strongly archimedean convex effect algebras.

\begin{theorem}\label{th:archim}{\rm \cite[Theorem 3.6]{GPBB}} Let $E\simeq V[0,u]$ for an
ordered vector space $(V,V^+)$
with order unit $u$.  Then the following statements are equivalent. {\rm(a)} $E$ possesses an ordering set of states. {\rm(b)} $E$ is strongly archimedean. {\rm(c)} $(V, V^+,u)$ is an order unit space.
\end{theorem}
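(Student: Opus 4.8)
The plan is to prove the cycle of implications (a)$\Rightarrow$(b)$\Rightarrow$(c)$\Rightarrow$(a), working throughout with the identification $E\simeq V[0,u]$ afforded by \cite[Theorem 3.4]{GPBB}, under which $\oplus$ is the addition of $V$ (when the sum stays in $[0,u]$), $\ominus$ is subtraction, the scalar action $\lambda a$ is scalar multiplication in $V$, and $\le$ is the order induced by $V^+$. I would first recall the standard state--functional correspondence for convex effect algebras: every state $s$ is affine, since $s((\lambda+\mu)a)=s(\lambda a\oplus\mu a)=s(\lambda a)+s(\mu a)$ for $\lambda+\mu\le 1$ makes $\lambda\mapsto s(\lambda a)$ additive and bounded on $[0,1]$, hence $s(\lambda a)=\lambda s(a)$; consequently each state extends uniquely to a positive linear functional $f_s$ on $V$ with $f_s(u)=1$, and conversely every normalized positive functional restricts to a state. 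This identification is what links the three conditions.

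For (a)$\Rightarrow$(b), I would take an ordering set $S\subseteq S(E)$ and assume $a\le b\oplus\tfrac1n c$ for all $n$, that is $a\le b+\tfrac1n c$ in $V$. Applying any $s\in S$ and using affineness gives $s(a)\le s(b)+\tfrac1n s(c)$ for every $n$; since $s(c)\in[0,1]$ is a fixed real number and $\reals$ is archimedean, letting $n\to\infty$ yields $s(a)\le s(b)$. As this holds for all $s\in S$ and $S$ is ordering, $a\le b$, so $E$ is strongly archimedean.

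For (b)$\Rightarrow$(c), since $u$ is already assumed to be an order unit it remains only to verify archimedeanity of $V$: that $x\le\tfrac1n u$ for all $n$ forces $x\le 0$. The obstacle is that such an $x$ need not lie in $[0,u]$, so I would rescale it into the interval. Choosing $\lambda>0$ with $-\lambda u\le x$ (possible as $u$ is an order unit) and then $t\in(0,\tfrac12]$ with $t\lambda\le\tfrac12$, I set $a:=tx+\tfrac12 u$, $b:=\tfrac12 u$, $c:=u$; one checks $a,b,c\in[0,u]$ and $a\ominus b=tx\le\tfrac{t}{n}u\le\tfrac1n c$ for all $n$. The b-property hypothesis (strong archimedeanity) then gives $a\le b$, i.e. $tx\le 0$, whence $x\le 0$.

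The main work lies in (c)$\Rightarrow$(a). I would first note that archimedeanity of $(V,V^+,u)$ is precisely what makes $V^+$ closed in the order unit norm $\|x\|_u=\inf\{\lambda>0:-\lambda u\le x\le\lambda u\}$. It then suffices to show the normalized positive functionals are order determining, i.e. that $f(y)\ge 0$ for all such $f$ implies $y\in V^+$. Arguing by contraposition, if $y\notin V^+$ then, $V^+$ being a closed convex cone, Hahn--Banach separation yields a continuous functional $g$ and a scalar $\alpha$ with $g(y)<\alpha\le g(z)$ for all $z\in V^+$; since $V^+$ is a cone containing $0$, this forces $g\ge 0$ on $V^+$ and $\alpha\le 0$, so $g(y)<0$. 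Because $u$ is an order unit and $g\ge 0$ is nonzero, $g(u)>0$, and $s:=g/g(u)$ is a state with $s(y)<0$, a contradiction. Hence $S(E)$ itself is ordering, giving (a). The delicate points to get right are the closedness of $V^+$ in the order unit topology (so that separation applies) and the verification that the separating functional is positive and normalizable to a genuine state; the remaining steps are routine.
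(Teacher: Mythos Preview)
The paper does not give its own proof of this theorem; it is quoted verbatim from \cite[Theorem~3.6]{GPBB} and used as a black box. So there is no in-paper argument to compare against, and your cycle (a)$\Rightarrow$(b)$\Rightarrow$(c)$\Rightarrow$(a) is the standard proof one finds in the cited source: affineness of states plus archimedeanity of $\reals$ for (a)$\Rightarrow$(b); rescaling into the unit interval for (b)$\Rightarrow$(c); and closedness of $V^+$ in the order-unit norm together with Hahn--Banach separation for (c)$\Rightarrow$(a). The overall strategy is correct and complete in spirit.

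There is one small technical slip in your (b)$\Rightarrow$(c) step. With $b=\tfrac12 u$ and $c=u$, the sum $b\oplus\tfrac1n c=\tfrac12 u+\tfrac1n u$ fails to lie in $[0,u]$ when $n=1$, so the hypothesis ``$a\le b\oplus\tfrac1n c$ for all $n\in\Nat$'' of Definition~\ref{de:strongarch} is not literally satisfied. Also, writing ``$a\ominus b=tx$'' presupposes $b\le a$, i.e.\ $tx\ge 0$, which is exactly what you do not yet know. Both issues disappear if you take $c=\tfrac12 u$ instead of $c=u$: then $b\oplus\tfrac1n c=\tfrac12 u+\tfrac1{2n}u\le u$ for every $n\ge 1$, and from $x\le\tfrac1m u$ for all $m$ you get $tx\le\tfrac{t}{2n}u\le\tfrac1{2n}u=\tfrac1n c$, so $a\le b\oplus\tfrac1n c$ holds for all $n$ and strong archimedeanity yields $a\le b$, hence $tx\le 0$ and $x\le 0$ as desired. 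With that adjustment your argument is clean.
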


Recall that an order unit space $(V,V^+,u)$ is endowed with an order unit norm,  defined as
\[
\|v\|:=\inf\{\lambda>0:\ -\lambda u\le  v \le \lambda u\}.
\]
Note that the unit
interval $V[0,u]$ is norm-closed.  We will consider below the norm in $E\simeq V[0,u]$ inherited from $V$.

Let $E$ be a strongly archimedean convex effect algebra with the corresponding order unit
space $(V,V^+,u)$. Spectrality in order unit spaces in the sense of Foulis was studied in
\cite{FPspectres}. Let us recall that a compression on $(V,V^+,u)$ is defined as a positive
linear map $V\to V$ such that its  restriction is a compression  on the effect algebra
$E$. Similarly, a compression base is a collection of linear maps $(J_p)_{p\in P}$ such that their
restrictions form a compression base  in $E$. For $p\in P$ and $v\in V$, we define the
commutants
\[
C(p)=\{v\in V:\ v=J_p(v)+J_{p^\perp}(v)\},\qquad PC(v)=\{p\in P:\ v\in C(p)\}
\]
and the  bicommutant $P(v)=PC(PC(v)\cup \{v\})$. We then say that $(V,V^+,u)$
has the comparability property if
the set
\begin{equation}\label{eq:Vcomparability}
P_\pm(v):=\{p\in P(v):\ J_{p^\perp}(v)\le 0\le J_p(v)\}
\end{equation}
is nonempty.

\begin{definition}\label{def:orthogonal} An orthogonal decomposition of  $v\in V$ is a
decomposition of the form $v=v_+-v_-$, where $v_+,v_-\in V^+$ and there is a projection
$p\in PC(v)$ such that $v_+=J_p(v)$, $v_-=-J_{p^\perp}(v)$.

\end{definition}

By \cite[Thm. 3.2 and Lemma 4.2]{Fgc}, if $(V,V^+,u)$ has the comparability property, then
each element has a unique orthogonal decomposition, determined by any projection in
$P_{\pm}(v)$.

We say that $(V,V^+,u)$ has the
projection cover property if  $E\simeq V[0,u]$ has the projection cover property, with
the restricted compression base. If the comparability property holds, then the projection
cover property is equivalent to existence of a \emph{Rickart mapping} \cite[Theorem 2.1]{FPspectres}, defined as a map ${\, }^*: V\to P$,
 where $v^*$ is the (necessarily unique) projection such that
 \[
p\in P,\ p\le v^* \iff v\in C(p),\text{ and } J_p(v)=0.
\]
For $a\in E$, we have $a^*=(a^\circ)^\perp$ and $(\lambda a)^\circ=a^\circ$ for any
$\lambda\in [0,1]$. More generally, let $v\in V^+$, then we may define the \emph{support} of $v$ as
$v^\circ:=(v^*)^\perp$ and it is easily seen that $v^\circ=(cv)^\circ$ for any $c\in
\mathbb R^+$.

We say that the order unit space $(V,V^+,u)$ is spectral if it has both the projection cover and the
comparability property. In this case, every element $v\in V$ has a unique spectral
resolution $\{p_{v,\lambda}\}_{\lambda\in {\mathbb R}}\subseteq P(v)$, where the spectral
projections are defined as \cite{FPspectres}
\begin{equation}\label{eq:spectprojs}
p_{v,\lambda}:=((v-\lambda)_+)^*.
\end{equation}
The spectral resolution of $v$ is continuous from the right in the sense that 
if $\alpha \in {\mathbb R}$, then $p_{v,\alpha} = \bigwedge\{p_{v,\lambda}: \alpha <
\lambda\in {\mathbb R}\}$, \cite[Theorem 3.5]{FPspectres}.

An element $\lambda\in \mathbb R$ is called an    \emph{eigenvalue} of $v$ if the
projection  $d_{v,\lambda}:= (v-\lambda)^*$ is nonzero, in this case, $d_{v,\lambda}$ is called
the \emph{$\lambda$-eigenprojection} of $v$.  For $\alpha\in \mathbb R$, the projection
$d_{v,\alpha}$ may be interpreted as the "jump" that occurs in the
spectral resolution as $\lambda$ approaches $\alpha$ from the left in the following sense:
 $p_{v,\alpha}-d_{v,\alpha} = \bigvee \{p_{v,\lambda}: \alpha >\lambda \in {\mathbb R}\}$, \cite[Theorem 3.6]{FPspectres}.

The \emph{spectral lower and upper bounds for $v$} are defined by $L_v:=\sup\{ \lambda \in {\mathbb R}:\lambda u\leq v\}$ and $U_v:=\inf\{ \lambda \in {\mathbb R}: v\leq \lambda u\}$, respectively. By \cite[Theorem 3.3 (vii)]{FPspectres}, $L_v=\sup\{\lambda \in {\mathbb R}: p_{v,\lambda} = 0\}$ and $U_v=\inf\{ \lambda \in {\mathbb R}: p_{v,\lambda}=u\}$.
 Then $v$ can be written as a Riemann-Stieltjes type integral
\begin{equation}\label{eq:spectresV}
v=\int_{L_{v}-0}^{U_v} \lambda dp_{v,\lambda}.
\end{equation}
For more details about spectral resolutions  see \cite{FPspectres}.

\begin{example}
Let $V$ be the space of self-adjoint operators on a Hilbert space $\mathcal H$ and let $V^+$ be
the cone of positive operators. Then $(V,V^+,u=I)$ is a spectral order unit space. In this
case, the Rickart mapping sends each $v\in V$ onto its kernel projection. Thus, we obtain
the usual definition of the spectral resolutions, eigenvalues and eigenprojections.

\end{example}

It was proved in \cite[Thm. 5.11]{JP2} that a strongly archimedean convex effect algebra $E$ has the
comparability property or projection cover property if and only if the corresponding order
unit space $(V,V^+,u)$ has the same
property. In particular, $E$ is spectral if and only if $(V,V^+,u)$ is spectral. In this
case, every element  $a\in E$ has an integral representation of the form
\eqref{eq:spectresV} with respect to a unique spectral resolution
$(p_{a,\lambda})_{\lambda \in [0,1]}\subseteq P(a)$, and this spectral resolution is the
same as obtained in $(V,V^+,u)$. Any element is uniquely  determined by its spectral resolution and two
elements in $E$ commute if and only if the corresponding spectral resolutions are
elementwise compatible. We will also need the following two results.

\begin{corollary}\label{coro:limit} Let $E$ be a strongly archimedean convex and spectral
effect algebra. Then every element $a\in E$ is the norm limit and supremum of an ascending
sequence $a_n\le a_{n+1}$ of elements of the form
\[
a_n=\oplus_i c_{n,i} p_{n,i},
\]
with $c_{n,i}\in [0,1]$ and $p_{n,i}\in P(a)$, $\oplus_i p_{n,i}=1$.

\end{corollary}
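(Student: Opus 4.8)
The plan is to read off $a$ from its Riemann--Stieltjes representation \eqref{eq:spectresV} and let $a_n$ be the lower sums over a nested sequence of partitions of $[0,1]$. First I would use Theorem \ref{th:archim} to identify $E$ with the unit interval $V[0,u]$ of a spectral order unit space $(V,V^+,u)$; by the results recalled above, the given $a$ has a spectral resolution $(p_{a,\lambda})_{\lambda}\subseteq P(a)$ and equals $\int_{L_a-0}^{U_a}\lambda\,dp_{a,\lambda}$. Since $0\le a\le u$ we have $0\le L_a\le U_a\le 1$, while $p_{a,\lambda}=0$ for $\lambda<0$ and $p_{a,1}=u$; hence a partition of $[0,1]$ suffices to capture the whole resolution, and the associated Riemann--Stieltjes sums lie in $E$.

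Concretely, fix the nested dyadic partitions with nodes $0=\lambda_0^{(n)}<\dots<\lambda_{2^n}^{(n)}=1$, $\lambda_i^{(n)}=i2^{-n}$, and set
\[
a_n:=0\cdot p_{a,0}\ \oplus\ \bigoplus_{i=1}^{2^n}\lambda_{i-1}^{(n)}\bigl(p_{a,\lambda_i^{(n)}}\ominus p_{a,\lambda_{i-1}^{(n)}}\bigr).
\]
Each slab $p_{n,i}:=p_{a,\lambda_i^{(n)}}\ominus p_{a,\lambda_{i-1}^{(n)}}=p_{a,\lambda_i^{(n)}}\wedge(p_{a,\lambda_{i-1}^{(n)}})^\perp$ lies in the Boolean algebra $P(a)$; together with $p_{a,0}$ these projections are orthogonal and satisfy $p_{a,0}\oplus\bigoplus_i p_{n,i}=p_{a,1}=u$, so they sum to $1$. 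All coefficients, including the extra $0$ attached to the kernel projection $p_{a,0}$, lie in $[0,1]$, and from $c_{n,i}p_{n,i}\le p_{n,i}$ with $\oplus_i p_{n,i}=u$ we get $a_n\in V[0,u]=E$. The sequence is ascending: passing from one partition to its refinement replaces a term $\lambda_{i-1}(p_{a,\lambda_i}\ominus p_{a,\mu})$ by $\mu(p_{a,\lambda_i}\ominus p_{a,\mu})$ at each new node $\mu\ge\lambda_{i-1}$, changing $a_n$ by the nonnegative increment $(\mu-\lambda_{i-1})(p_{a,\lambda_i}\ominus p_{a,\mu})$, so $a_n\le a_{n+1}$.

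For convergence, let $S_n$ denote the analogous upper sums (coefficient $\lambda_i^{(n)}$ on $p_{n,i}$, and $0$ on $p_{a,0}$). By \eqref{eq:spectresV} both $a_n$ and $S_n$ converge in norm to $a$; since lower sums increase and upper sums decrease under refinement, closedness of $V^+$ gives $a_n\le a\le S_n$, while
\[
S_n\ominus a_n=\bigoplus_{i=1}^{2^n}\bigl(\lambda_i^{(n)}-\lambda_{i-1}^{(n)}\bigr)p_{n,i}\le 2^{-n}u,
\]
whence $\|a-a_n\|\le\|S_n-a_n\|\le 2^{-n}\to 0$, which is the norm-limit claim. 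Finally $a$ is an upper bound of $(a_n)$, and if $b\in E$ satisfies $a_n\le b$ for all $n$, then $b-a_n\in V^+$ and $b-a_n\to b-a$ in norm, so $b-a\in V^+$ because the positive cone is norm-closed (as $(V,V^+,u)$ is an order unit space, Theorem \ref{th:archim}); thus $a\le b$ and $a=\bigvee_n a_n$. I expect the only delicate points to be the boundary bookkeeping that forces $\bigoplus_i p_{n,i}=1$ while keeping every coefficient in $[0,1]$, and the upgrade from the norm limit to the supremum — which is precisely where norm-closedness of the cone, equivalently strong archimedeanity, is indispensable.
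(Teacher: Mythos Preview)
Your proof is correct. Both your argument and the paper's ultimately rest on the Riemann--Stieltjes representation \eqref{eq:spectresV} and on norm-closedness of $V^+$ for the passage from norm limit to supremum, but the executions diverge. The paper quotes \cite[Cor.~3.1 and Lemma~5.1]{FPspectres} to obtain an ascending sequence of simple elements $a_n=\sum_i c_{n,i}p_{n,i}$ with $p_{n,i}\in P(a)$, $\sum_i p_{n,i}=u$, converging to $a$, but with \emph{real} coefficients; it must then truncate to $(a_n)_+$ by zeroing negative coefficients (using $(a_n)_+=J_{p_n}(a_n)\le J_{p_n}(a)\le a$ for $p_n=\bigoplus_{c_{n,i}>0}p_{n,i}$) and separately invoke the existence of states with $s(p)=1$ to force the surviving coefficients into $[0,1]$. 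Your explicit lower dyadic sums bypass both fixes: the coefficients $\lambda_{i-1}^{(n)}$ lie in $[0,1]$ by construction, the inclusion $p_{a,0}$ with coefficient $0$ makes $\bigoplus_i p_{n,i}=1$ automatic, and the bound $S_n\ominus a_n\le 2^{-n}u$ even gives an explicit rate. The paper's route is more modular (delegating the construction to an external reference), while yours is self-contained and avoids the post-hoc repair of the coefficients.
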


\begin{proof} Let $(V,V^+,u)$ be the spectral order unit space such that $E\simeq V[0,u]$
(with the unique extended compression base). By \cite[Cor. 3.1]{FPspectres}, any element
 $a\in V[0,u]$ is the norm limit of an ascending sequence of elements of the
 form $a_n=\sum_i c_{n,i}p_{n,i}$, with $p_{n,i}\in P(a)$ and $c_{n,i}\in \mathbb R$, and
 by \cite[Lemma 5.1]{FPspectres} we may assume that $\sum_ip_{n,i}=u$.
 Since $a_n\le a_{n+1}$, we have $a_n\le a\le u$. We now may replace each $a_n$ by an
 element $(a_n)_+$, which is obtained by putting negative coefficients $c_{n,i}$ to zero.
 All the projections $p_{n,i}$ are in $P(a)$, and therefore mutually commuting and
 commuting also with $a$ and all $a_n$. Put $p_n=\oplus_{i, c_{n,i}>0} p_{n,i}$, then
 \[
(a_n)_+=p_na_n\le p_n a\le a.
 \]
Similarly, from $a_n\le a_{n+1}\le (a_{n+1})_+$, we obtain that $(a_n)_+\le (a_{n+1})_+$. Notice also that
\[
0\le a-(a_n)_+\le a-a_n\to 0.
\]
We have obtained an ascending sequence  $\{(a_n)_+\}$ of elements in $[0,u]$ that converges
 to $a$ in norm. It is clear that $(a_n)_+$ is again a simple element and the remaining
 coefficients $c_{n,i}$ must be in the interval $[0,1]$ (this can be observed
 e.g. from the fact that for every sharp element $p$ of $E$ there is some state $s$ on $E$
 such that $s(p)=1$, \cite{GPBB}). To conclude the proof, note that since the positive
 cone $V^+$ is norm-closed, the norm limit of an ascending sequence is its supremum.

\end{proof}

\begin{lemma}\label{lemma:covex_floor} Let $E$ be spectral and let ${\,}^*$ be the Rickart mapping in the
corresponding spectral order unit space $(V,V^+,u)$. Let $(p_\lambda)_{\lambda\in [0,1]}$ be the
spectral resolution of $a\in E$. Then
\begin{align*}
1&=p_1,\quad (a^\circ)^\perp=p_0,\\
a_\circ&=(a-1)^*=\bigwedge_{\lambda<1}p_\lambda^\perp.
\end{align*}

\end{lemma}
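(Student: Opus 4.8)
The plan is to unwind the definitions of the Rickart mapping ${}^*$ and of the spectral projections $p_{a,\lambda}=((a-\lambda)_+)^*$, reducing each equality to Lemma~\ref{le:comE}, the recorded identity $a^*=(a^\circ)^\perp$, and the jump formula \cite[Theorem~3.6]{FPspectres}. Throughout, $1=u$ and $a-1$ stands for $a-u$ in $V$.

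For $1=p_1$: since $a\in E=V[0,u]$ we have $a\le u$, hence $a-u\le 0$ and its positive part vanishes. Indeed, in the orthogonal decomposition of $a-u$ the positive part is $J_p(a-u)=J_p(a)-p\le 0$ (as $J_p$ is monotone and $J_p(u)=p$), while being $\ge 0$ by definition; so $(a-u)_+=0$. Therefore $p_1=p_{a,1}=0^*$, and $0^*=u$ because $0\in C(p)$ and $J_p(0)=0$ for every $p\in P$, so that $0^*$ dominates every projection. For $(a^\circ)^\perp=p_0$: since $a\ge 0$, the unique orthogonal decomposition of $a$ is $a=a-0$, witnessed by the support $a^\circ$ (one has $J_{a^\circ}(a)=a$ and $J_{(a^\circ)^\perp}(a)=0$); hence $(a)_+=a$ and $p_0=p_{a,0}=a^*=(a^\circ)^\perp$.

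The core of the lemma is $a_\circ=(a-1)^*$. The plan is to prove, for every $p\in P$,
\[
p\le (a-u)^*\iff p\le a,
\]
which identifies $(a-u)^*$ as the largest projection below $a$, i.e.\ as the floor $a_\circ$. Unfolding the Rickart map, $p\le(a-u)^*$ means $a-u\in C(p)$ and $J_p(a-u)=0$; as $u\in C(p)$ and $C(p)$ is a linear subspace of $V$, the first condition reads $a\in C(p)$, while $J_p(a-u)=J_p(a)-p$ makes the second read $J_p(a)=p$. By Lemma~\ref{le:comE}, $a\in C(p)$ is equivalent to $a\leftrightarrow p$ and then forces $J_p(a)=p\wedge a$; so the conjunction is equivalent to $p\wedge a=p$, that is $p\le a$. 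Conversely $p\le a$ gives $a\leftrightarrow p$ (witnessed by $a=(a\ominus p)\oplus 0\oplus p$), hence $a\in C(p)$ and $J_p(a)=p\wedge a=p$. I expect this to be the main obstacle: it is the only place where the order-unit-space commutant must be matched against the effect-algebra commutant and the equivalences of Lemma~\ref{le:comE} carefully applied.

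Finally, $a_\circ=\bigwedge_{\lambda<1}p_\lambda^\perp$ follows by combining the above with the jump formula \cite[Theorem~3.6]{FPspectres}, which for $\alpha=1$ reads $p_{a,1}\ominus d_{a,1}=\bigvee\{p_{a,\lambda}:\lambda<1\}$ with $d_{a,1}=(a-1)^*$. Substituting $p_{a,1}=1$ and $d_{a,1}=a_\circ$ yields $a_\circ^\perp=\bigvee_{\lambda<1}p_\lambda$, and taking orthocomplements in the OML $P$ (Theorem~\ref{th:projcovoml}) gives, by De Morgan, $a_\circ=\bigwedge_{\lambda<1}p_\lambda^\perp$.
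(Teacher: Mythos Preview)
Your proof is correct and follows essentially the same route as the paper's: both arguments identify $(a-1)^*$ with $a_\circ$ by unwinding the defining property of the Rickart map (the paper does this by writing $J_d(a-1)=0\Rightarrow J_d(a)=d\Rightarrow d\le a$ and conversely $q\le a\Rightarrow J_q(a-1)=0\Rightarrow q\le d$, which is exactly your biconditional $p\le(a-u)^*\iff p\le a$ phrased slightly differently), and both deduce the infimum formula from \cite[Theorem~3.6]{FPspectres} applied at $\alpha=1$. The only difference is that the paper dismisses $p_1=1$ and $p_0=(a^\circ)^\perp$ as immediate from the definitions, whereas you spell out the orthogonal decompositions explicitly; your added detail is correct.
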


\begin{proof} The equality for $p_0$ and $p_1$ follow easily from the definition.
Let $d:=d_1=(a-1)^*$, then  $d\in P$ and $J_d(a-1)=0$, so that
$J_d(a)=d$ and hence $d\le a$. If $q\in P$ is any projection such that $q\le a$, then $q$
commutes with $a$ and $J_q(a-1)=0$, so that $q\le d$ by definition of the Rickart mapping.
It follows that $d=a_\circ$. For the second equality, we have by \cite[Thm. 3.6]{FPspectres}
that $\bigvee_{\lambda<1} p_\lambda$ exists and equals $p_1-d=d^\perp$. This implies that
\[
d=(\bigvee_{\lambda<1}p_\lambda)^\perp= \bigwedge_{\lambda<1} p_\lambda^\perp.
\]

\end{proof}

\subsection{Sequential effect algebras}

\begin{definition}\label{def:SEA} A \emph{sequential effect algebra} (SEA) \cite{GuGr} $(E:+,1,0,\circ)$ is an effect algebra with an additional \emph{sequential product} operation $\circ$. We denote $a|b$ when $a\circ b=b\circ a$ (i.e. when $a$ and $b$ commute). The sequential product is required to satisfy the following axioms.
\begin{enumerate}
\item[(S1)] $a\circ(b+c)=a\circ b+ a\circ c$.
\item[(S2)] $1\circ a=a$.
\item[(S3)] $a\circ b =0 \ \implies b\circ a=0$.
\item[(S4)] If $a|b$ then $a|b^\perp$ and $a\circ (b\circ c)=(a\circ b)\circ c$ for all $c$.
\item[(S5)] If $c|a$ and $c|b$ then $c|(a\circ b)$ and if $a+b$ is defined $c|(a+b)$.
\end{enumerate}
\end{definition}

\begin{definition}\label{de:sigmasea} {\rm \cite{GuGr}}. A $\sigma$-SEA is a SEA which is
monotone $\sigma$-complete (hence a $\sigma$-effect algebra)
 such that if $a_1\geq a_2\geq \cdots $ then $b\circ\wedge a_i=\wedge(b\circ a_i)$ and if $b|a_i$ for all $i$ then $b|\wedge a_i$.
\end{definition}

\begin{lemma}\label{le:sharp} \cite{GuGr} Let $p,a\in E$ with $p$ sharp.
\begin{enumerate}
\item $a$ is sharp iff $a\circ a^\perp=0$ iff $a\circ a=a$.
\item $p\leq a$ iff $p\circ a=a\circ p =p$.

\item $a\leq p$ iff $p\circ a=a\circ p=a$

\item $p\circ a=0$ iff $p+a$ is defined and in this case $p+a$ is the lest upper bound of $p$ and $a$. The sum $p+a$ is sharp iff $a$ is sharp.
\item $a|p$ iff $a\leftrightarrow p$.

\item If $a|p$ then $p\circ a=p\wedge a$.
	
\end{enumerate}
\end{lemma}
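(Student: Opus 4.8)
The plan is to first build a small toolkit of elementary identities for $\circ$, then to establish the sharp-element characterisation (i), which is the heart of the lemma, and finally to read off (ii)--(vi) by bookkeeping with these tools.

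\emph{Preliminaries.} First I would prove that $1$ commutes with everything and acts as a right unit. Since $a|a$ holds trivially, (S4) gives $a|a^\perp$; as $a\oplus a^\perp=1$ is defined, (S5) then yields $a|1$, so $a\circ 1=1\circ a=a$ by (S2). Next, applying (S1) to $0=0\oplus 0$ and using the cancellation law of effect algebras gives $a\circ 0=0$, and then (S3) gives $0\circ a=0$. Monotonicity in the second argument, $b\le c\Rightarrow a\circ b\le a\circ c$, follows by writing $c=b\oplus(c\ominus b)$ and using (S1). The decisive lever for everything below is the identity $a\circ b\oplus a\circ b^\perp=a\circ 1=a$, valid for all $a,b$ by (S1).

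\emph{Part (i).} The lever with $b=a$ gives $a\circ a\oplus a\circ a^\perp=a$, so $a\circ a^\perp=0\iff a\circ a=a$ by cancellation, settling the equivalence of the two algebraic conditions. For ``sharp $\Rightarrow a\circ a^\perp=0$'', note $a\circ a^\perp\le a\circ 1=a$ by monotonicity, while $a|a^\perp$ gives $a\circ a^\perp=a^\perp\circ a\le a^\perp\circ 1=a^\perp$; hence $a\circ a^\perp\le a\wedge a^\perp=0$. The converse is the only genuinely nontrivial point: assuming $a\circ a^\perp=0$ (equivalently $a^\perp\circ a=0$ by (S3)), take any $x\le a$ with $x\le a^\perp$. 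Monotonicity gives $a\circ x\le a\circ a^\perp=0$ and $a^\perp\circ x\le a^\perp\circ a=0$, so by (S3) both $x\circ a=0$ and $x\circ a^\perp=0$; the lever then forces $x=x\circ 1=x\circ a\oplus x\circ a^\perp=0$, proving $a\wedge a^\perp=0$.

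\emph{Parts (ii)--(iv).} Two consequences of the lever drive these. First, a \emph{foundational claim}: $r\circ a=a\Rightarrow a\le r$, immediate from $r=r\circ 1=r\circ a\oplus r\circ a^\perp=a\oplus(r\circ a^\perp)$. Second, the key equivalence $(\star)$: for sharp $p$, $p\circ a=0\iff a\le p^\perp$. Indeed $a\le p^\perp$ gives $p\circ a\le p\circ p^\perp=0$ by (i) and monotonicity; conversely $p\circ a=0$ gives $a\circ p=0$ (S3), hence $a|p$ and, by (S4), $a|p^\perp$, so the lever yields $a\circ p^\perp=a$, i.e. $p^\perp\circ a=a$, and $a\le p^\perp$ follows from the foundational claim. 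The forward directions of (ii),(iii) are direct: if $p\le a$ then $p=p\circ p\le p\circ a\le p$, and $a|p$ follows from (S5) applied to $a=p\oplus(a\ominus p)$; if $a\le p$ then $p^\perp\circ a\le p^\perp\circ p=0$ gives $a\circ p=a$ together with $a|p$. The reverse directions are then the foundational claim (for (iii)) and $(\star)$ applied to $a^\perp$ (for (ii)). For (iv), $(\star)$ is exactly ``$p\circ a=0\iff p\oplus a$ defined''; that $p\oplus a=p\vee a$ follows because any upper bound $b$ satisfies $a=p^\perp\circ a\le p^\perp\circ b=b\ominus p$ (using (ii),(iii)); and, writing $s=p\oplus a$, the computation $s\circ s=p\circ s\oplus a\circ s=p\oplus a\circ a$ (via (i) and $p\circ a=0$) shows $s$ is sharp iff $a$ is.

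\emph{Parts (v),(vi).} For (v), if $a|p$ then $a=a\circ p\oplus a\circ p^\perp$ with $a\circ p=p\circ a\le p$ and $a\circ p^\perp=p^\perp\circ a\le p^\perp$; setting $c=a\circ p$, $a_1=a\circ p^\perp$, $p_1=p\ominus c$ exhibits $a\leftrightarrow p$ directly (cf. Lemma~\ref{le:comE}). Conversely, $a\leftrightarrow p$ gives $a=a_1\oplus c$ with $c\le p$ and $a_1\le p^\perp$; then $c|p$ by (iii) and $a_1|p$ (since $p\circ a_1\le p\circ p^\perp=0$), so $a|p$ by (S5). Finally for (vi), $p\circ a=a\circ p$ lies below both $p$ and $a$, and any common lower bound $x\le p,a$ satisfies $x=p\circ x\le p\circ a$ by (iii) and monotonicity, whence $p\circ a=p\wedge a$. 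The main obstacle is the reverse implication of (i): the slick ``$x=x\circ 1=0$'' argument only works once the right unit law $a\circ 1=a$ is secured, and that law rests on the non-obvious bootstrapping $a|a\Rightarrow a|a^\perp\Rightarrow a|1$ through (S4) and (S5).
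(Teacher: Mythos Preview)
The paper does not supply its own proof of this lemma: it is stated with the citation \cite{GuGr} and no argument is given. So there is no ``paper's approach'' to compare against; your write-up stands on its own, and it is essentially a correct reconstruction of the Gudder--Greechie arguments.

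Two places deserve a word of caution. In the forward direction of (ii) you write ``$a|p$ follows from (S5) applied to $a=p\oplus(a\ominus p)$''; for this you need $p\,|\,(a\ominus p)$, which you have not said. It holds because $a\ominus p\le p^\perp$ forces $p\circ(a\ominus p)\le p\circ p^\perp=0$, hence both products vanish --- but this step should be made explicit. More importantly, in (iv) the line $s\circ s=p\circ s\oplus a\circ s$ looks like left distributivity of $\circ$, which is \emph{not} among the axioms; (S1) only gives $s\circ(p\oplus a)=s\circ p\oplus s\circ a$. The identity you wrote is nonetheless true here, because $p|s$ (from (ii)) and $a|s$ (from $a|p$, $a|a$ and (S5)) let you commute each factor. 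You should route the computation through $s\circ s=s\circ(p\oplus a)=s\circ p\oplus s\circ a=p\oplus a\circ(p\oplus a)=p\oplus a\circ a$, invoking (S1) and the commutations explicitly; as written, a reader could object that you used an axiom you do not have.
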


It is easy to check that any map on SEA of the form $J_p(a)=p\circ a$, $p\in E_S$, is a
compression with focus $p$ and a supplement $J_{p^\perp}(a)=p^\perp\circ a$.
Moreover, by \cite[Theorem 3.4]{Gudcomprba}, $(J_p)_{p\in P}$, $P=E_S$,  is a maximal compression base for $E$.
Below we will always assume that $E$ is endowed with this compression base.

Let $S\subseteq E$ be a subset of elements of $E$, then $S':=\{ a\in E: s|a,\ \forall s\in S  \}$ is the \emph{commutant} of $S$. Similarly the \emph{bicommutant} $S'':=(S')'$ of $S$ is the set of all elements in $E$ that commute with every element in $S'$.

\begin{lemma}\label{le:commutants} Let $E$ be a SEA, $S\subseteq E$.
\begin{enumerate}
\item $S'$ is a sub-SEA of $E$.
\item If $p\in P$, then $C(p)=\{p\}'$.
\item If $S$ is a set of mutually commuting elements, then $S''$ is a commutative sub-SEA
of $E$.

\end{enumerate}

\end{lemma}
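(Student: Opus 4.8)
The plan is to establish (ii) first (it is self-contained and the earlier lemmas do the real work there), then (i), and finally to derive (iii) from (i) together with the formal properties of the commutant.

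For (ii), fix $p\in P=E_S$. By definition $C(p)=\{a\in E: a=J_p(a)\oplus J_{p^\perp}(a)\}$, and by Lemma~\ref{le:comE} (the equivalence of its conditions (ii) and (iv)) this set coincides with $\{a\in E: a\leftrightarrow p\}$. Since $p$ is sharp, Lemma~\ref{le:sharp}(v) gives $a\leftrightarrow p\iff a|p$, and as commutation is a symmetric relation, $\{a\in E: a|p\}=\{p\}'$. Chaining these identifications yields $C(p)=\{p\}'$.

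For (i), I would show that $S'$ is a sub-effect algebra that is in addition closed under $\circ$, and then observe that the SEA axioms, being universally quantified identities and implications, are inherited by any subset closed under the operations. First, $0,1\in S'$: both are sharp, and every $a\in E$ is Mackey compatible with $0$ and with $1$, so Lemma~\ref{le:sharp}(v) gives $a|0$ and $a|1$ for all $a$. Closure under orthosupplement is immediate from (S4): if $s|a$ for all $s\in S$, then $s|a^\perp$ for all $s\in S$, so $a^\perp\in S'$. Closure under $\oplus$ and under $\circ$ both come from (S5): if $a,b\in S'$, then each $s\in S$ commutes with both $a$ and $b$, whence $s|(a\circ b)$, and, when $a\oplus b$ is defined, $s|(a\oplus b)$; thus $a\circ b\in S'$ and $a\oplus b\in S'$. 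This makes $S'$ a sub-SEA.

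For (iii), note first that $S''=(S')'$ is a sub-SEA by part (i) applied to the subset $S'$. It then remains to prove commutativity. Since $S$ consists of mutually commuting elements, each $s\in S$ commutes with every element of $S$, so $S\subseteq S'$; because $T\mapsto T'$ reverses inclusions, $S''=(S')'\subseteq S'$. Now take $a,b\in S''$: then $a\in S''\subseteq S'$, while $b\in S''=(S')'$ commutes with every element of $S'$, in particular with $a$, so $a|b$. Hence all elements of $S''$ commute and $S''$ is a commutative sub-SEA. The only step demanding genuine care is the closure of $S'$ in (i): because $\circ$ is not assumed commutative, one must invoke (S4) and (S5) in the correct direction and confirm that membership in $S'$ is preserved; once this is done, the remaining axioms transfer to $S'$ automatically, and (iii) becomes a purely formal consequence of (i) and the inclusion-reversing property of the commutant.
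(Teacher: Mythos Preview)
Your proof is correct and follows essentially the same approach as the paper: part (ii) via Lemma~\ref{le:comE} and Lemma~\ref{le:sharp}(v), part (i) from axioms (S4)--(S5) (the paper cites \cite[Lemma 3.1]{GuGr} for $0,1\in S'$ where you re-derive it), and part (iii) from the inclusion $S''\subseteq S'$ obtained by $S\subseteq S'$; the paper phrases the last step as $S''\subseteq S'=S'''$, but this is the same argument.
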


\begin{proof}
By \cite[Lemma 3.1]{GuGr}, $0,1\in S'$. The rest of (i) follows immediately from the axioms
(S4), (S5)
of SEA, Definition \ref{def:SEA}. By Lemma \ref{le:sharp}, for any $p\in P$ and $a\in E$,
$a\in \{p\}'$ if and only if $a\leftrightarrow p$. Statement (ii) now follows by Lemma
\ref{le:comE}. For (iii), using (i), it is enough to prove that $S''$ is commutative.
Since $S$ is commutative, $S\subseteq S'$, which yields $S''\subseteq  S'=S'''$, which by definition yields commutativity of $S''$.

\end{proof}

Notice that the maximal compression base in  $S'$ coincides with $(J_p|_{S'})_{p\in P\cap
S'}$.

\section{Spectrality in convex sequential effect algebras}\label{sec:main}

Throughout  this section, we consider a SEA $E$ which is also a strongly archimedean convex effect
algebra. Note that in this case, the sequential product is affine in the second variable,
that is, for $a,b,c\in E$ and $\lambda\in [0,1]$, we have
\[
a\circ(\lambda b\oplus (1-\lambda)c)=\lambda a\circ b\oplus (1-\lambda)(a\circ c),
\]
this follows  from axiom (S1) and \cite[Thm. 5.8]{JP2}. By Theorem
\ref{th:archim}, we see that $E$ is isomorphic to the unit interval in an order unit space
$(V,V^+,u)$, moreover, for any $a\in E$, the map $b\mapsto a\circ b$ extends to a positive
linear map on $(V,V^+,u)$. This also implies that  the sequential product is
continuous in the inherited order unit norm in the second variable.
If $E$ is commutative, then $(a,b)\mapsto a\circ
b$ extends to a positive bilinear product on $(V,V^+,u)$

\begin{example}\label{ex:hilb_seq} It is easily seen that the Hilbert space effect algebra
$E(\mathcal H)$ is convex, strongly archimedean and sequential, with the sequential
product given by
\[
a\circ b=a^{1/2}ba^{1/2}.
\]
In fact, $E(\mathcal H)$ is a prototypical example of both sequential and convex effect
algebra. Moreover, $E(\mathcal H)$ is monotone complete and spectral, and we have $a|b\iff
aCb \iff ab=ba$ for any $a,b\in E(\mathcal H)$. Note that the
sequential product  in $E(\mathcal H)$ is not unique \cite{WeJu}, but  all of
these products must coincide on pairs $(p,a)$ where $p\in P(\mathcal H)$.

\end{example}

\begin{example}\label{ex:jb} More generally, let $A$ be a JB-algebra with unit $u$ and Jordan product
$(a,b)\mapsto a*b$, \cite[Chap. 1]{AlSh}, \cite{HO}.
Let $E$ be the unit interval $[0,u]$ in $A$. For $a,b\in E$, we define
\[
a\circ b:=2a^{1/2}*(a^{1/2}*b)-a*b.
\]
It was proved in \cite{Wetcom} that $E$ with this product is a (convex,
strongly archimedean) SEA.

\end{example}

Our aim is to study spectrality for this type of effect algebras. Let us first look at the
case when $E$ is commutative. We will need the following representation result. Below,
$C(X)=C(X,\mathbb R)$ denotes the space of continuous functions $X\to \mathbb R$ and
$C(X,[0,1])$ denotes the set of continuous functions $X\to [0,1]$.

\begin{theorem}[Kadison]\cite{kadison} \label{th:kad}  Let $V$ be an order unit space with a bilinear
commutative operation $\circ$ such that $1\circ v=v$ and $v\circ w\geq 0$ whenever $v,w\geq 0$, then there exists a compact Hausdorff space $X$ and an isometric embedding $\Phi:V\to C(X)$ such that $\Phi(X)$ lies dense in $C(X)$ and $\Phi(v\circ w)=\Phi(v)\Phi(w)$. If $V$ is complete in its norm, then $V$ is isomorphic as an ordered algebra to $C(X)$.
\end{theorem}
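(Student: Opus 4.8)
The plan is to realise $V$ through its multiplicative state space, in the spirit of the commutative Gelfand--Naimark theorem, but working directly with the order unit structure rather than passing to a Banach algebra completion. Let $S$ denote the set of \emph{states} of the order unit space $(V,V^+,1)$, i.e. positive linear functionals $s\colon V\to\mathbb R$ with $s(1)=1$; equipped with the weak-$*$ topology $S$ is a nonempty compact convex set, and the order unit norm satisfies $\|v\|=\sup_{s\in S}|s(v)|$. Call a state $\chi$ a \emph{character} if it is multiplicative, $\chi(v\circ w)=\chi(v)\chi(w)$ for all $v,w$, and let $X\subseteq S$ be the set of characters. Since the defining conditions are weak-$*$ closed, $X$ is a weak-$*$ closed, hence compact, Hausdorff subspace of $S$. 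Define $\Phi\colon V\to C(X)$ by $\Phi(v)(\chi)=\chi(v)$. Then $\Phi$ is plainly linear, positive and unital, and multiplicativity of the points of $X$ gives $\Phi(v\circ w)=\Phi(v)\Phi(w)$ at once.

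The heart of the argument is to show that $X$ is large enough, and the key step is that \emph{every pure state} (extreme point of $S$) \emph{is a character}; this is precisely where the positivity hypothesis on $\circ$ is used. Fix $a$ with $0\le a\le 1$ and a pure state $s$, and consider the functional $s_a\colon x\mapsto s(a\circ x)$. Because $a\ge 0$ and $1-a\ge 0$, the hypothesis that $v,w\ge 0$ implies $v\circ w\ge 0$ shows that both $s_a$ and $s-s_a=s((1-a)\circ\,\cdot\,)$ are positive functionals, so $0\le s_a\le s$. By the standard description of extreme points of a state space, a positive functional dominated by a pure state is a scalar multiple of it; evaluating at $1$ fixes the scalar, giving $s_a=s(a)\,s$, i.e. $s(a\circ x)=s(a)\,s(x)$ for all $x$. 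Since every element of $V$ is a real-linear combination of elements of $[0,1]$, bilinearity of $\circ$ upgrades this to $s(v\circ w)=s(v)s(w)$ for all $v,w$, so $s\in X$. Note that this argument never uses associativity of $\circ$; associativity of the image, and hence of $\circ$ itself once $\Phi$ is seen to be injective, is automatic in $C(X)$.

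With this in hand, isometry of $\Phi$ follows from the Bauer maximum principle (a consequence of Krein--Milman): for $v\ge 0$ the affine weak-$*$ continuous function $s\mapsto s(v)$ attains its maximum $\|v\|$ at an extreme point of $S$, which is a character, so $\|\Phi(v)\|_\infty=\sup_{\chi\in X}\chi(v)=\|v\|$, and the general case follows by applying this to the suprema of $\pm s(v)$. In particular $X$ separates the points of $V$, hence distinct characters are distinguished by $\Phi(V)$, so $\Phi(V)$ separates the points of $X$; as $\Phi(V)$ is a subalgebra of $C(X)$ containing the constants, the real Stone--Weierstrass theorem shows that $\Phi(V)$ is dense in $C(X)$. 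Finally, if $V$ is complete in its norm then $\Phi(V)$, being isometric to $V$, is complete, hence closed, hence all of $C(X)$; since $\Phi$ is then a unital isometric algebra isomorphism, it is automatically an order isomorphism (positivity being characterised by the norm and the unit in both spaces, e.g. $x\ge 0 \iff \|\,\|x\|\,1-x\|\le\|x\|$), so $V\cong C(X)$ as an ordered algebra.

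I expect the pure-state lemma to be the main obstacle: everything else is Krein--Milman, Stone--Weierstrass and standard order unit facts, whereas the multiplicativity of pure states is the single place where the positivity of the product must be converted into genuine analytic information, through the domination-by-a-pure-state argument above.
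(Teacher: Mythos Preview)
The paper does not supply its own proof of this theorem; it is quoted as a known result from Kadison's memoir \cite{kadison} and used as a black box in the proof of Corollary~\ref{coro:commutative}. There is therefore nothing in the paper to compare your argument against.

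That said, your proof is correct and is essentially the classical route: identify the pure states with the characters via the domination argument $0\le s(a\circ\,\cdot\,)\le s$ (this is exactly where the positivity hypothesis on $\circ$ enters), deduce that $\Phi$ is isometric from Krein--Milman/Bauer, and obtain density from Stone--Weierstrass. The final step, that a unital isometry between order unit spaces is automatically an order isomorphism, is handled cleanly by your norm-and-unit characterisation of positivity. One cosmetic point: the theorem statement in the paper contains a typo (``$\Phi(X)$ lies dense'' should read ``$\Phi(V)$ lies dense''), and your proof correctly addresses the intended claim.
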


\begin{corollary}\label{coro:commutative} Let $E$ be a commutative SEA which is also convex and strongly archimedean.
Then $E$ is isomorphic to a dense subalgebra of the algebra $C(X,[0,1])$ of continuous
functions $X\to [0,1]$ for some compact Hausdorff space $X$. Moreover, the following are
equivalent.
\begin{enumerate}
\item  $E$ is monotone $\sigma$-complete;
\item $E\simeq C(X,[0,1])$, with $X$ basically disconnected;
\item $E$ is norm-complete and spectral.

\end{enumerate}

\end{corollary}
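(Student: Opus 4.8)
The representation is the unconditional part and comes straight from Kadison's theorem. Since $E$ is commutative, by the remark opening this section the sequential product extends to a commutative positive bilinear operation on the associated order unit space $(V,V^+,u)$ (which is an order unit space by Theorem \ref{th:archim}) with $u\circ v=v$. So Theorem \ref{th:kad} supplies an isometric, unital, order- and product-preserving embedding $\Phi\colon V\to C(X)$ with dense range, $X$ compact Hausdorff. Restricting to $E\simeq V[0,u]$ identifies $E$ with $\Phi(V)\cap C(X,[0,1])$; density in $C(X,[0,1])$ follows from density of $\Phi(V)$ in $C(X)$ by first contracting a target $f$ slightly toward the interior, $(1-\epsilon)f+\tfrac{\epsilon}{2}u$, and approximating that, so the approximant automatically has values in $[0,1]$. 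Under $\Phi$ the sharp elements correspond by Lemma \ref{le:sharp}(i) to the idempotents of $C(X)$ in $\Phi(E)$, i.e. to indicators $\chi_U$ of clopen $U$, and by Lemma \ref{le:sharp}(vi) each $J_p$ becomes multiplication by $\chi_U$, agreeing with the MV-compression $U_p$ of Example \ref{ex:compr_mv}; thus the SEA and MV compression bases coincide.

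Next I would dispatch (ii)$\Leftrightarrow$(iii), which is the easy pair. If $E$ is norm-complete then, being a dense norm-closed subset of $C(X,[0,1])$ (equivalently by the completeness clause of Theorem \ref{th:kad}, $V\cong C(X)$), we get $E=C(X,[0,1])$, the unit interval of the lattice ordered group $C(X)$, hence an MV-effect algebra. Its projections are the $\chi_U$, and the projection cover of $f$ is the least clopen set containing the cozero set $\{f\neq 0\}$; so the projection cover property (part of spectrality) holds iff the closure of every cozero set is clopen, which is exactly basic disconnectedness of $X$. Conversely, if $X$ is basically disconnected then $C(X,[0,1])$ is norm-complete and, by the Stone--Nakano theorem, $C(X)$ is $\sigma$-Dedekind complete, so $C(X,[0,1])$ is monotone $\sigma$-complete; being a monotone $\sigma$-complete MV-effect algebra it is spectral. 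This gives (ii)$\Leftrightarrow$(iii), and the same Stone--Nakano remark gives (ii)$\Rightarrow$(i).

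The crux is (i)$\Rightarrow$(ii). The plan is to show that monotone $\sigma$-completeness forces $E=C(X,[0,1])$ and then invoke Stone--Nakano. First I would prove $E$ is closed under continuous functional calculus: for $a\in E$ and continuous $h\colon[0,1]\to[0,1]$ whose range lies in a compact subinterval of $(0,1)$, choose polynomials $\ell_n\nearrow h$ and $m_n\searrow h$ uniformly with $\ell_n\le h\le m_n$ and values in $[0,1]$ (obtained from ordinary uniform polynomial approximants by subtracting, resp. adding, suitable vanishing constants to force monotonicity in $n$). The elements $\ell_n(a),m_n(a)$ lie in $E$ and satisfy $\ell_n(a)\le m_k(a)$ for all $n,k$; monotone $\sigma$-completeness then yields $s:=\sup_n\ell_n(a)$ and $t:=\inf_n m_n(a)$ in $E$ with $s\le t$ and $0\le t-s\le (m_n-\ell_n)(a)$, whose norm tends to $0$. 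Hence $s=t$, and being squeezed between $\ell_n(a)$ and $m_n(a)$ this common value is the uniform limit $h(a)$, so $h(a)\in E$. The endpoints $0,1$ are reached by a further monotone approximation. In particular $\sqrt{v}\in E$ for $v\in E$, so $|a-b|=\sqrt{(a-b)\circ(a-b)}\in E$ and $E$ is a sublattice of $C(X)$, i.e. a monotone $\sigma$-complete sub-MV-algebra. Norm-completeness follows by the same squeeze: for a norm-Cauchy sequence with limit $f$, the now-available truncations $b_n=(a_n-2^{-n+1}u)\vee 0$ and $c_n=(a_n+2^{-n+1}u)\wedge u$ form an increasing and a decreasing sequence in $E$ with $b_n\le f\le c_n$ and $\|c_n-b_n\|\to 0$, so $f=\sup_n b_n\in E$. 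Thus $E$ is closed and dense, $E=C(X,[0,1])$, and Stone--Nakano gives $X$ basically disconnected.

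The main obstacle I expect is precisely the step inside (i)$\Rightarrow$(ii): reconciling the purely order-theoretic suprema handed to us by monotone $\sigma$-completeness with the uniform (norm) limits in $C(X)$. An order supremum in $E$ need not a priori coincide with the pointwise supremum when $E\subsetneq C(X,[0,1])$, and this is exactly what the two-sided (increasing and decreasing) squeeze is designed to overcome, forcing $\sup$ and $\inf$ to collapse onto the common norm limit; strong archimedeanity enters through closedness of the positive cone. The remaining delicate bookkeeping is keeping all polynomial approximants inside $[0,u]$ near the spectral endpoints $0$ and $1$, which is why the interior-range case is treated first and the boundary is handled by an extra monotone limit.
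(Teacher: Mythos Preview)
The paper's own proof is a one-line citation: the dense embedding is Kadison, and the three-way equivalence is deferred entirely to \cite[Example 5.13]{JP2}. Your argument is therefore not a comparison target so much as an attempt to fill in what the paper outsources. Most of it is sound, but there is one genuine error and one avoidable complication.

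\medskip
\textbf{The error in (iii)$\Rightarrow$(ii).} Your claim that in $C(X,[0,1])$ ``the projection cover property holds iff the closure of every cozero set is clopen'' is false. The projection cover of $f$ is the \emph{least clopen set containing} $\operatorname{coz}(f)$; such a least clopen set can exist without being $\overline{\operatorname{coz}(f)}$. For $X=[0,1]$ the only clopen sets are $\emptyset$ and $X$, so every $f$ has projection cover $0$ or $1$ and the projection cover property holds trivially --- yet $[0,1]$ is certainly not basically disconnected. What fails for $X=[0,1]$ is b-comparability (for incomparable $f,g$ there is no separating clopen set), and it is b-comparability that does the real work here. A correct route is: spectrality of $C(X,[0,1])$ forces, for every $f$ and every scalar $\lambda$, a clopen set $U_\lambda$ with $\{f>\lambda\}\subseteq U_\lambda\subseteq\{f\ge\lambda\}$; from this one deduces (e.g.\ via the Rickart characterisation, or by showing simple functions are dense and the clopen algebra is a $\sigma$-complete Boolean algebra) that $C(X)$ is $\sigma$-Dedekind complete, hence $X$ is basically disconnected.

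\medskip
\textbf{The detour in (i)$\Rightarrow$(ii).} Your plan --- functional calculus, then $\sqrt{\,\cdot\,}$, then lattice, then norm-closure --- runs into exactly the boundary difficulty you flag: for $h(t)=\sqrt t$ there is no decreasing sequence of polynomials with values in $[0,1]$ approaching $\sqrt t$ from above on all of $[0,1]$ (at $t=1$ any such polynomial must take the value $1$), so your two-sided squeeze is not available for $\sqrt{\,\cdot\,}$ itself, and the ``further monotone approximation'' to reach the endpoints is circular (it tacitly uses the norm-closedness you have not yet proved). The cleaner path is to prove norm-completeness \emph{directly} from monotone $\sigma$-completeness, bypassing functional calculus altogether. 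Given Cauchy $(a_n)$ with $\|a_{n+1}-a_n\|<2^{-n}$, set
\[
b_n:=\tfrac13\bigl(a_n+(1-2^{-n+1})u\bigr),\qquad c_n:=\tfrac13\bigl(a_n+(1+2^{-n+1})u\bigr).
\]
Both lie in $E$, with $b_n\nearrow$, $c_n\searrow$, and $b_n\le c_m$ for all $n,m$; their sup and inf in $E$ are squeezed to the common norm limit $\tfrac13(a+u)$, whence $a\in E$. (This is the same mechanism the paper uses later in Lemma \ref{lemma:closed}.) Once $E=C(X,[0,1])$, Stone--Nakano finishes as you say.
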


\begin{proof} The first statement follows by the Kadison theorem and the remarks above.
The rest follows by \cite[Example 5.13]{JP2}.

\end{proof}

Let us now turn to the general case.  One of the problems that
appear in this setting is that if the b-property holds, we have two notions of commutativity, namely $aCb
$ and $a|b$. Note that for $p\in P$, $a|p$ $\iff$  $aCp$ $\iff$ $a\leftrightarrow p$, but
for general elements these two notions might  be distinct and therefore  should not be confused.  We will show
in the course of our characterization of spectrality that these two notions are equal
under some additional conditions. Note that this is true for the algebra of Hilbert space
effects,  Example
\ref{ex:hilb_seq}.

We start by observing the following property of the projection cover, which will be needed in the
sequel.

\begin{lemma}\label{lemma:projcover} Let $a,b\in E$ be such  that $a^\circ$ exists and
$b=\lim_n b_n$, where for each $n$, $b_n\le b$ and $b_n=\oplus_i
\lambda_{n,i}p_{n,i}$ with
$\lambda_{n,i}\in [0,1]$ and $p_{n,i}\in P$. Then $a\circ b=0$ if and only if $a^\circ \circ
b=0$.

\end{lemma}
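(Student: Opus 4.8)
The plan is to treat the two implications separately. Both rest on the fact, recorded after Example~\ref{ex:jb}, that for a fixed $c\in E$ the map $c\circ(\,\cdot\,)$ is additive by (S1)---hence monotone, since $x\le y$ gives $c\circ y=c\circ x\oplus c\circ(y\ominus x)\ge c\circ x$---and is continuous in the order unit norm. The implication $a^\circ\circ b=0\Rightarrow a\circ b=0$ is then immediate and uses none of the approximation hypotheses: by (S3), $a^\circ\circ b=0$ gives $b\circ a^\circ=0$; since $a\le a^\circ$ and $b\circ(\,\cdot\,)$ is monotone, $b\circ a\le b\circ a^\circ=0$, so $b\circ a=0$, and (S3) returns $a\circ b=0$.

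For the converse, suppose $a\circ b=0$. First I would descend to the approximants: as $b_n\le b$ and $a\circ(\,\cdot\,)$ is monotone, $a\circ b_n\le a\circ b=0$, so $a\circ b_n=0$ for every $n$. Expanding $b_n=\oplus_i\lambda_{n,i}p_{n,i}$ by additivity (S1), together with $a\circ(\lambda p)=\lambda(a\circ p)$ (which follows from affineness of $\circ$ in the second variable by taking the second convex component to be $0$), gives $a\circ b_n=\oplus_i\lambda_{n,i}(a\circ p_{n,i})=0$. Since every term of a finite orthosum lies below the sum, each $\lambda_{n,i}(a\circ p_{n,i})=0$; and whenever $\lambda_{n,i}>0$ this forces $a\circ p_{n,i}=0$, because the convex structure of $E$ is the restriction of scalar multiplication in the order unit space $V$ with $E\simeq V[0,u]$, where multiplication by a positive real is injective on $V^+$.

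The decisive step transfers each vanishing relation from $a$ to $a^\circ$ through the projection cover. Fixing an index with $\lambda_{n,i}>0$, from $a\circ p_{n,i}=0$ and (S3) I get $p_{n,i}\circ a=J_{p_{n,i}}(a)=0$, which by the compression property (Definition~\ref{de:compr}) means $a\le p_{n,i}^\perp$. As $p_{n,i}^\perp\in P$, the defining property of the projection cover (Definition~\ref{de:projcov}) promotes this to $a^\circ\le p_{n,i}^\perp$, and the same compression property now gives $J_{p_{n,i}}(a^\circ)=p_{n,i}\circ a^\circ=0$, whence $a^\circ\circ p_{n,i}=0$ by (S3). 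Thus $a^\circ\circ(\lambda_{n,i}p_{n,i})=0$ for every $i$ (trivially so when $\lambda_{n,i}=0$), and re-summing by (S1) gives $a^\circ\circ b_n=0$ for all $n$. Continuity of $a^\circ\circ(\,\cdot\,)$ in norm then yields $a^\circ\circ b=\lim_n a^\circ\circ b_n=0$, completing the proof.

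The one genuinely load-bearing point is the projection-level equivalence $a\circ p=0\iff a^\circ\circ p=0$ for $p\in P$, which is precisely where the approximation of $b$ by simple elements dominated by $b$ is indispensable: for an arbitrary positive $b$ the bare relation $a\circ b=0$ carries no information about $a^\circ$. I anticipate no other difficulty, the remaining arguments being routine consequences of additivity, positivity of orthosummands, and norm continuity in the second variable.
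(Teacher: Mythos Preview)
Your proof is correct and follows essentially the same route as the paper's: reduce to the simple approximants $b_n$ via monotonicity, break each $b_n$ into its projection summands using affineness in the second variable, upgrade $a\circ p=0$ to $a^\circ\circ p=0$ via the projection cover, reassemble, and pass to the limit by norm continuity. The only cosmetic difference is that the paper phrases the projection step by citing Lemma~\ref{le:sharp}(iv) rather than the compression axiom in Definition~\ref{de:compr}, and structures the converse as three explicit stages ($b\in P$, $b$ simple, $b$ general) whereas you fold them together; the substance is identical.
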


\begin{proof} Assume  $a^\circ \circ b=0$, then by axiom (S3), $b\circ a^\circ=0$, so that
$a\circ b=0$ using axioms (S1) and (S3) together with the fact that $a\le a^\circ$. For
the converse, assume first that $b\in P$. By Lemma \ref{le:sharp} (iv), $a\circ b=0$ implies
$a\le b^\perp$, so that $a^\circ \le b^\perp$ and hence $a^\circ \circ b=0$. Next, let $b=\oplus_i \lambda_{i}p_i$
 for some $\lambda_i\in [0,1]$ and projections $p_i$, then $a\circ b=0$ implies that
 $\lambda_i(a\circ p_i)=a\circ \lambda_ip_i=0$, so that $a\circ p_i=0$ for all $i$ such
 that $\lambda_i>0$. By the previous step, $a^\circ \circ p_i=0$  and hence
 $a^\circ \circ b=0$. Finally, let $b=\lim_n b_n$ as in the assumption, then $a\circ
 b_n\le a\circ b=0$ implies that $a\circ
 b_n=0$  and hence $a^\circ \circ b_n=0$ for all $n$. The proof follows by continuity of
 the sequential product in the second variable.

\end{proof}

In addition to our standing assumptions in this section, we will always assume the
following property which will be
called \emph{property A}: For every ascending sequence  $a_n\le a_{n+1}$
of mutually commuting elements in $E$ such that  $\vee_na_n$ exists and $b\in E$,
$a_n|b$ for all $n$ implies that $\vee_na_n|b$.


We next observe some consequences of property A. The first result shows that this property  ensures  that the sequential product
is in agreement  with the convex structure.

\begin{lemma}\label{le:aff} Let $a,b\in E$, $\lambda\in [0,1]$.
Then
\begin{enumerate}
\item  $a\circ (\lambda b)=(\lambda
a)\circ b=\lambda(a\circ b)$.
\item  If $a|b$ then $a|\lambda b$.

\end{enumerate}

\end{lemma}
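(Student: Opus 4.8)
The plan is to deduce (ii) from (i), and to reduce (i) to the single assertion that every scalar effect $\lambda 1$ is \emph{central}, i.e. commutes (in the sense of $\circ$) with every element of $E$. Statement (ii) follows once (i) is known: if $a|b$, i.e. $a\circ b=b\circ a$, then the two instances of (i) give $a\circ(\lambda b)=\lambda(a\circ b)=\lambda(b\circ a)=(\lambda b)\circ a$, which is exactly $a|\lambda b$. For (i), note first that $a\circ 1=a$ by Lemma~\ref{le:sharp}(iii) applied to the sharp element $p=1$ (since $a\le 1$ always); hence the affineness in the second variable recorded at the start of this section (take $c=0$ and use $a\circ 0=0$) already yields $a\circ(\lambda b)=\lambda(a\circ b)$, and in particular $a\circ(\lambda 1)=\lambda a$. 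So the only remaining content of (i) is the first-variable identity $(\lambda a)\circ b=\lambda(a\circ b)$.

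The key reduction is this: since $\lambda a=a\circ(\lambda 1)$, if $\lambda 1$ commutes with $a$ then associativity (S4) gives $(\lambda a)\circ b=(a\circ(\lambda 1))\circ b=a\circ((\lambda 1)\circ b)$, and if moreover $(\lambda 1)\circ b=\lambda b$ this equals $a\circ(\lambda b)=\lambda(a\circ b)$. Thus it suffices to prove that $(\lambda 1)\circ x=\lambda x$ for all $x\in E$; equivalently, since $x\circ(\lambda 1)=\lambda x$ holds automatically, that $\lambda 1$ is central.

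I would establish centrality first on projections. For $p\in P$, linearity of $J_p=p\circ(\cdot)$ gives $J_p(\lambda 1)=\lambda J_p(1)=\lambda p$ and likewise $J_{p^\perp}(\lambda 1)=\lambda p^\perp$, so $\lambda 1=\lambda p\oplus\lambda p^\perp=J_p(\lambda 1)\oplus J_{p^\perp}(\lambda 1)$, i.e. $\lambda 1\in C(p)$; by Lemma~\ref{le:comE} and Lemma~\ref{le:sharp}(v) this means $\lambda 1|p$ and hence $(\lambda 1)\circ p=\lambda p$. Next, $\{\lambda 1\}'$ is a sub-SEA by Lemma~\ref{le:commutants}(i), it is closed under the convex structure (if $x\in\{\lambda 1\}'$ then, by affineness in the second variable, $(\lambda 1)\circ(\mu x)=\mu\lambda x=(\mu x)\circ(\lambda 1)$, so $\mu x\in\{\lambda 1\}'$), and it contains every sharp element. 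Property A enters precisely as closure under monotone limits: if $x_n\nearrow x$ is an ascending sequence of mutually commuting elements of $\{\lambda 1\}'$, then each $x_n|\lambda 1$, so $x|\lambda 1$. In particular this disposes of real $\lambda$ once the rational case is settled: taking rationals $\lambda_k\nearrow\lambda$, the scalars $\lambda_k 1$ are mutually commuting with $\bigvee_k\lambda_k 1=\lambda 1$ (the cone $V^+$ is norm-closed, so the ascending norm limit is the supremum), whence property A forces $\lambda 1$ to commute with any given $x$.

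The main obstacle is therefore the rational case, i.e. the first-variable scalar homogeneity $(\tfrac1n 1)\circ x=\tfrac1n x$, equivalently $\{\lambda 1\}'=E$ for $\lambda\in\mathbb Q\cap[0,1]$. The reason this is hard is that the mechanism used for the second variable does not transfer: affineness there rests on the additivity axiom (S1), but $\circ$ is \emph{not} additive in the first variable (it already fails for commuting orthogonal projections), so one cannot split $1=\tfrac1n 1\oplus\dots\oplus\tfrac1n 1$ inside the first slot. Moreover, at this stage a general element is not a monotone limit of combinations of projections (no spectral theorem is yet available), so centrality cannot simply be propagated from $P$ by property A. I expect this to be where strong archimedeanity and property A are used essentially. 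A promising handle is the multiplicative relation $L_{\lambda 1}L_{\mu 1}=L_{\lambda\mu 1}$ for the left multiplications $L_{\lambda 1}(v)=(\lambda 1)\circ v$ (a consequence of (S4), using $\lambda 1|\mu 1$), together with $L_{\lambda 1}(u)=\lambda u$ and $L_{\lambda 1}|_P=\lambda\,\mathrm{id}$; the task is to upgrade these to $L_{\lambda 1}=\lambda\,\mathrm{id}$ on all of $E$, and I anticipate the bulk of the argument to be concentrated in this step.
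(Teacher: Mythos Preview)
Your overall strategy is exactly the paper's: reduce (i) to showing that every scalar $\lambda 1$ is central, handle the rational case first, then pass to real $\lambda$ via property~A, and finally derive (ii) from (i). The final computation $(\lambda a)\circ b=(a\circ(\lambda 1))\circ b=a\circ((\lambda 1)\circ b)=a\circ(\lambda b)=\lambda(a\circ b)$ is also identical.

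The gap is precisely the step you flag as unresolved, the rational case, and the missing idea is much more elementary than you anticipate---neither strong archimedeanity nor property~A is needed there. The trick is to exploit self-commutation together with (S5). Trivially $\tfrac1n a\,|\,\tfrac1n a$; applying (S5) repeatedly to the sum $\tfrac1n a\oplus\dots\oplus\tfrac1n a=a$ gives $\tfrac1n a\,|\,a$, and a second application of (S5) yields $\gamma a\,|\,a$ for every rational $\gamma\in[0,1]$. The same argument with $a^\perp$ in place of $a$ gives $\gamma a^\perp\,|\,a^\perp$, whence $\gamma a^\perp\,|\,a$ by (S4). One more use of (S5) then gives $a\,|\,(\gamma a\oplus\gamma a^\perp)=\gamma 1$. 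Since $a$ was arbitrary, $\gamma 1$ is central for every rational $\gamma$. This bypasses entirely the projection route you set up (which, as you correctly note, cannot be extended to general elements at this stage), and shows that your expectation that ``the bulk of the argument'' lies here is misplaced: it is three lines using only the SEA axioms.
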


\begin{proof}
(i) Since $\circ$ is affine in the second variable, we have $a\circ(\lambda b)=\lambda(a\circ b)$.
The rest of the proof of (i) uses similar arguments as the proof of \cite[Proposition
3.9]{Wet}. First, note that $\frac1n a|\frac1n a$, so that $\frac1n a|a$, by axiom (S5).
Similarly we get  $\gamma a|a$ and also $\gamma a^\perp|a^\perp$ for all rationals
$\gamma\in [0,1]$. By axioms (S4)
and (S5) then $\gamma a^\perp|a$ and $a|(\gamma a\oplus \gamma a^\perp)$, so that
$a|\gamma 1$. It follows that $(\gamma 1)\circ
a=a\circ (\gamma 1)=\gamma (a\circ 1)$, by the first part of the proof. Let now
$\gamma_i\in [0,1]$ be an increasing
sequence of rationals such that $\vee_i\gamma_i=\lambda$, then $\vee_i \gamma_i1=\lambda
1$ and by property A,  we obtain that $a|\lambda1$. We now
compute using (S4)
\[
(\lambda a)\circ b=(a\circ (\lambda 1))\circ b=a\circ ((\lambda 1)\circ b)=a\circ (\lambda
b)=\lambda (a\circ b).
\]
The statement (ii) is immediate from (i).

\end{proof}

\begin{lemma}\label{lemma:closed} Let $S\subseteq E$ be any subset. Then $S'$ is closed
under norm limits of sequences of mutually commuting elements.

\end{lemma}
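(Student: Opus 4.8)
The plan is to reduce to a single element and then push everything into a commutative sub-SEA, where the product is jointly continuous. First I would observe that $S'=\bigcap_{s\in S}\{s\}'$, and an intersection of sets each closed under norm limits of commuting sequences is again closed under such limits; so it suffices to treat $S=\{s\}$. Thus let $(a_n)$ be mutually commuting with $a_n\in\{s\}'$ (i.e. $a_n|s$) and $a_n\to a$ in norm, and let us prove $a|s$. The easy half is immediate from continuity of the product in the second variable: since $s$ is fixed in the first slot, $s\circ a=\lim_n s\circ a_n=\lim_n a_n\circ s$, the last equality because $a_n|s$. Everything then rests on establishing $a\circ s=\lim_n a_n\circ s$, that is, continuity of the product in the \emph{first} variable along this particular commuting sequence.

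The special case in which $s=q$ is a projection goes through cleanly, and I would record it first. By Lemma \ref{le:comE}, $a_n|q$ means $a_n=J_q(a_n)\oplus J_{q^\perp}(a_n)=q\circ a_n\oplus q^\perp\circ a_n$. Both maps $q\circ(\cdot)$ and $q^\perp\circ(\cdot)$ are continuous in the second variable, so letting $n\to\infty$ gives $q\circ a_n\to q\circ a$ and $q^\perp\circ a_n\to q^\perp\circ a$ in $V$; since these limits lie in $E$ and sum to $a\le u$, they are orthogonal and $a=J_q(a)\oplus J_{q^\perp}(a)$. By Lemma \ref{le:comE} this is exactly $a\in C(q)$, i.e. $a|q$. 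Hence every projection commuting with all the $a_n$ also commutes with $a$.

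To reach a general $s$ I would pass to the commutative sub-SEA $C:=\{s,a_n:n\in\Nat\}''$; as $\{s\}\cup\{a_n\}$ is a mutually commuting set, $C$ is a commutative convex strongly archimedean SEA by Lemma \ref{le:commutants}(iii), so by Corollary \ref{coro:commutative} it embeds densely into some $C(X,[0,1])$ with $\circ$ carried to pointwise multiplication. On $C$ the product is therefore the restriction of a jointly norm-continuous bilinear product, which extends continuously to the norm-closure $\bar C$ of $C$, and $a\in\bar C$. Using second-variable continuity one checks that the genuine product coincides with this commutative extension whenever the \emph{left} factor lies in $C$; in particular $s\circ a$ equals the symmetric extended product of $a$ and $s$.

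The main obstacle is the remaining identification of the genuine product $a\circ s$ -- now with the limit $a$ in the \emph{left} slot -- with that same symmetric extended product; this is precisely the first-variable continuity that the standing hypotheses do not grant for free, and it is here that \emph{property A} must enter. My approach would be to realise $a$ as the supremum of an ascending sequence of mutually commuting elements that commute with $s$, built inside the commutative picture from the uniformly convergent approximants together with the projection case above, and then invoke property A to conclude $a|s$. Pinning down this monotone approximation within $C$ -- equivalently, verifying that the genuine and the commutative products agree at the limit point $a\notin C$ -- is the delicate step on which the whole argument turns.
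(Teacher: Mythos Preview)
Your proposal leaves the decisive step open: you correctly identify that a monotone approximation plus property~A is needed, but you do not construct it, and you yourself flag the construction as ``the delicate step on which the whole argument turns.'' Everything before that --- the reduction to a single $s$, the projection case, the embedding of $\{s,a_n\}''$ into $C(X,[0,1])$ via Kadison --- is scaffolding that does not touch the actual difficulty, namely getting first-variable continuity along this particular sequence. Until the monotone sequence is written down, this is not a proof.

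The paper's argument is far shorter and bypasses all of your machinery. It works globally in $S'$ (no reduction to a single $s$) and uses only a telescoping trick: pass to a subsequence with $\|a_{n+1}-a_n\|<2^{-n}$ and set $s_n:=\tfrac12\bigl(a_n+(1-2^{-n})u\bigr)$. One checks directly that the $s_n$ are mutually commuting elements of $S'$ (using Lemma~\ref{le:aff} and (S5)), that $s_n\le s_{n+1}$, and that $s_n\to\tfrac12(a+u)$ in norm; since the cone is closed, the norm limit of an ascending sequence is its supremum. Property~A then gives $\tfrac12(a+u)\in S'$, whence $a\in S'$. This is exactly the monotone approximation you were looking for, and it requires neither the projection case nor any commutative-subalgebra detour.
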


\begin{proof} We will use an argument inspired by \cite{Fou}. So
let $a_n\in S'$ be any
norm-convergent sequence of mutually commuting elements and let $a=\lim_na_n$. By restriction to a subsequence, we may assume that
$\|a_{n+1}-a_n\|< 2^{-n}$ for all $n$. Put $s_n:=\frac12(a_n+(1-2^{-n})u)$, then $s_n$ is
an ascending sequence of mutually commuting elements in $S'$. Indeed, it is enough to note that if we put
$b_k=a_{k+1}-a_k+2^{-k}u$, $k=1,2,\dots$, the assumption on
$\{a_n\}$ implies that $0\le b_k\le 2^{1-k}u\le u$, and $s_n=\frac12(a_1+
\sum_{k=1}^nb_k)$. Moreover, $\lim_n s_n=\frac12(a+u)$. Since the norm-limit of an
ascending sequence is its supremum,  property A  implies that $\frac12(a+u)\in S'$. Using
\cite[Lemma 3.1 (v)]{GuGr}, we obtain that $\frac12 a\in S'$ and consequently also $a\in
S'$.

\end{proof}

\begin{lemma}\label{le:commut}   Let $S\subseteq E$ be a subset of mutually commuting
elements.
Then $S''$ is a norm-closed strongly archimedean convex commutative sub-SEA of $E$.
\end{lemma}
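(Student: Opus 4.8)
The plan is to prove that $S''$ inherits four properties: being a sub-SEA, being commutative, being norm-closed, and being a strongly archimedean convex effect algebra. Several of these follow quickly from results already established. Since $S$ consists of mutually commuting elements, Lemma \ref{le:commutants}(iii) immediately gives that $S''$ is a commutative sub-SEA of $E$. The norm-closedness is where Lemma \ref{lemma:closed} does the work: because $S''$ is commutative, any norm-convergent sequence in $S''$ is automatically a sequence of mutually commuting elements, so Lemma \ref{lemma:closed} (applied with the subset $S'$, noting $S''=(S')'$) shows that the limit lies in $S''$. Thus $S''$ is norm-closed.

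For the convex structure, the key observation is that $S''$ is closed under the scalar multiplication inherited from $E$. I would argue this via Lemma \ref{le:aff}(ii): if $a\in S''$ and $\lambda\in[0,1]$, then for every $b\in S'$ we have $a|b$, hence $\lambda a|b$ by Lemma \ref{le:aff}(ii), so $\lambda a\in S''$. Together with the fact that $S''$ is a sub-effect algebra (already part of being a sub-SEA) and that the convex axioms (C1)--(C4) are identities or closure conditions that hold in $E$ and only require $S''$ to be closed under $\oplus$ and under the scalars $\lambda a$, this shows $S''$ is a convex sub-effect algebra. One should check that $1\in S''$ supplies the order unit and that $S''$ inherits the ambient order, which is routine since $S''$ is a sub-effect algebra with the restricted operations.

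Finally, strong archimedeanity transfers from $E$ to $S''$ essentially for free: Definition \ref{de:strongarch} is a condition quantified over elements $a,b,c$ with $a\le b\oplus\frac1n c$ for all $n$ implying $a\le b$, and this is a statement about the order and the operations $\oplus$ and scalar multiplication, all of which $S''$ inherits as a convex sub-effect algebra of $E$. Since the hypothesis and conclusion of the condition only involve the ambient order restricted to $S''$, and we have just shown $S''$ is closed under the relevant operations, strong archimedeanity of $E$ restricts directly to $S''$.

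I expect the main subtlety to be the norm-closedness step, specifically the need to invoke commutativity of $S''$ so that Lemma \ref{lemma:closed} applies (that lemma only guarantees closure under limits of \emph{mutually commuting} sequences, and without the commutativity of $S''$ one could not conclude). The convex-closure argument via Lemma \ref{le:aff}(ii) is the other point requiring care, since it is precisely here that property A enters indirectly, through the scalar-multiplication compatibility that Lemma \ref{le:aff} established. Once these two closure properties are in hand, the remaining claims are inherited structurally from $E$ with no additional work.
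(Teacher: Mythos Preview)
Your proposal is correct and follows essentially the same route as the paper's own proof: commutative sub-SEA via Lemma~\ref{le:commutants}(iii), convexity via Lemma~\ref{le:aff}, strong archimedeanity inherited structurally, and norm-closedness via Lemma~\ref{lemma:closed}. You have in fact made explicit two details the paper leaves implicit, namely that commutativity of $S''$ is what allows Lemma~\ref{lemma:closed} to apply, and that Lemma~\ref{le:aff}(ii) is the mechanism by which $S''$ is closed under scalars.
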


\begin{proof} By Lemma \ref{le:commutants} $S''$ is a commutative sub-SEA of
$E$.  By Lemma \ref{le:aff}, $S''$ is also convex, with the convex structure inherited
from $E$, and it is easily seen that it must be strongly archimedean.
The fact  that $S''$ is norm-closed follows by Lemma \ref{lemma:closed}.

\end{proof}

\begin{prop}\label{prop:comp_com} Assume that $E$ has the b-comparability property. Then
$aCb$ implies $a|b$.

\end{prop}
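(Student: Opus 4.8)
The plan is to exhibit $a$ and $b$ inside a common commutative sub-SEA, from which $a\circ b=b\circ a$ is immediate, and to reach that configuration by approximating $a$ and $b$ by commuting simple elements and then transporting commutativity through the resulting suprema by a double use of property A.

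First I would unwind the hypothesis. By definition $aCb$ means $P(a)\leftrightarrow P(b)$; since $P(a)$ and $P(b)$ are Boolean subalgebras of the regular OMP $P$, and projections are $|$-commuting exactly when they are compatible (Lemma \ref{le:sharp}(v)), the set $S:=P(a)\cup P(b)$ consists of mutually commuting projections. Next I would write each of $a$ and $b$ as a norm limit of an ascending sequence of simple elements drawn from its own bicommutant: $a=\bigvee_n a_n=\lim_n a_n$ with $a_n=\oplus_i c_{n,i}p_{n,i}$, $c_{n,i}\in[0,1]$, $p_{n,i}\in P(a)$, and likewise $b=\bigvee_m b_m=\lim_m b_m$ with $b_m$ built from projections $q_{m,j}\in P(b)$. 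Such a resolution is the content of the commutative spectral decomposition available in the MV-effect algebra C-block $C(B)$, where $B\supseteq S$ is a block, furnished by Theorem \ref{th:bcompar} together with the comparability property, in the spirit of Corollary \ref{coro:limit}.

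The key local fact is that $a_n\circ b_m=b_m\circ a_n$ for all $n,m$. Indeed, the finite set $T:=\{p_{n,i}\}_i\cup\{q_{m,j}\}_j$ is mutually commuting (within $P(a)$ and within $P(b)$ by Booleanness, across them by $P(a)\leftrightarrow P(b)$), so by Lemma \ref{le:commut} its bicommutant $T''$ is a commutative convex sub-SEA containing $T$; since $a_n$ and $b_m$ are $\oplus$-sums of scalar multiples of elements of $T$ and $T''$ is convex and closed under $\oplus$, both lie in $T''$ and hence commute. By the same argument all $a_n$ lie in the commutative sub-SEA $P(a)''$ and all $b_m$ in $P(b)''$, so $\{a_n\}_n$ and $\{b_m\}_m$ are each families of mutually commuting elements. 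It now remains to pass to the limit, which is where property A enters twice. Fixing $n$, the sequence $\{b_m\}_m$ is ascending, mutually commuting, with supremum $b$, and $a_n|b_m$ for every $m$; property A yields $a_n|\bigvee_m b_m=b$. Thus $a_n|b$ for every $n$; since $\{a_n\}_n$ is ascending, mutually commuting, with supremum $a$, a second application of property A gives $b|\bigvee_n a_n=a$, that is, $a\circ b=b\circ a$.

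The main obstacle is the approximation step: producing, from the b-comparability property alone, an ascending norm-approximation of $a$ by simple elements drawn from its bicommutant $P(a)$ (and similarly for $b$). I would obtain this by working inside the C-block $C(B)$, which by Theorem \ref{th:bcompar} is an MV-effect algebra inheriting the comparability property, and running the commutative spectral construction there—comparing $a$ with $\lambda u$ over a grid of values $\lambda$ to extract spectral projections lying in $P(a)$. Once this is in place, everything else is routine: the resulting simple elements commute by Lemma \ref{le:commut}, and the two limit passages are governed entirely by property A, exactly as above. This also explains why property A is indispensable: it is precisely the hypothesis that lets $|$-commutativity survive the suprema $\bigvee_m b_m$ and $\bigvee_n a_n$.
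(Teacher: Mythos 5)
Your overall strategy --- approximate the elements by simple combinations of projections drawn from the bicommutants, observe that these commute, and transport commutativity to the limit via property A --- is the same as the paper's. The difficulty is exactly at the step you yourself flag as the main obstacle, and your proposed way of closing it does not work under the stated hypotheses. You want an \emph{ascending} sequence of simple elements $a_n=\oplus_i c_{n,i}p_{n,i}$ with $p_{n,i}\in P(a)$ and $\bigvee_n a_n=a$, obtained ``in the spirit of Corollary~\ref{coro:limit}'' by extracting spectral projections in the C-block. But Corollary~\ref{coro:limit} and the spectral projections $p_{a,\lambda}=((a-\lambda u)_+)^*$ require the Rickart mapping, hence the projection cover property, which is \emph{not} assumed in this proposition (only b-comparability is); Theorem~\ref{th:bcompar} transfers the projection cover property to a C-block only when $E$ already has it. So the construction you sketch presupposes spectrality, not just b-comparability.

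The paper closes this gap differently: by \cite[Theorem 3.22]{JP1}, the comparability property alone guarantees that $a$ lies in the norm-closed linear span of $P(a)$, which yields a norm-convergent (not necessarily ascending) sequence of simple elements $c_n\in\mathrm{span}(P(a))\cap E$. Since $aCb$ gives $P(a)\leftrightarrow b$ and compatibility with a projection coincides with $|$-commuting (Lemma~\ref{le:sharp}(v)), one has $P(a)\subseteq\{b\}'$, so each $c_n\in\{b\}'$ and the $c_n$ mutually commute; Lemma~\ref{lemma:closed} (whose proof internally converts a norm-convergent commuting sequence into an ascending one before invoking property A) then gives $a\in\{b\}'$. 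Two further remarks: your double application of property A is redundant --- there is no need to approximate $b$ at all, since $P(a)\subseteq\{b\}'$ already places every approximant of $a$ in $\{b\}'$; and if you do want to salvage your route without the projection cover property, you would have to build the approximants directly from the (non-unique) comparability projections of $a-\lambda u$ and prove monotonicity of the resulting family by hand, which is essentially what the cited theorem of \cite{JP1} packages for you.
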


\begin{proof} We have  $aCb$ $\iff$ $P(a)\leftrightarrow b$ $\iff$  $P(a)\in \{b\}'$.
By \cite[Theorem 3.22]{JP1}, $a$ is in the closed linear span of
$P(a)$, here we potentially have to consider the extension to $(V,V^+,u)$. Let $a_n\in
\mathrm{span}(P(a))$, $a_n\to a$. By replacing $a_n$ by $\|a\|\frac{a_n}{\|a_n\|}$, we may
assume that $-u\le a_n\le u$, so that $c_n:=\frac12(a_n+u)$ is a sequence in
$\mathrm{span}(P(a))\cap E$ converging to $\frac12(a+u)$. It follows that any $c_n$ is of
the form
$c_n=\oplus_i\lambda_{n,i}p_{n,i}$ for $\lambda_{n,i}\in [0,1]$, $p_{n,i}\in P(a)\subseteq
\{b\}'$, so that  $c_n$ is a sequence of mutually commuting elements in $\{b\}'$.  Lemma
\ref{le:commut} now implies that $\lim_nc_n=\frac12(a+u)\in \{b\}'$, hence also $a|b$.

\end{proof}

We now want to look at the opposite implication of Proposition \ref{prop:comp_com}. We
will show, after some preparations, that it holds if $E$ is spectral and norm-complete (see Proposition \ref{prop:commut} below).

\begin{lemma}\label{lemma:floor} Assume that $E$ is spectral and let $a^k=a\circ\dots
\circ a$. Then $\{a^k\}$ is a descending sequence of commuting elements in $E$ and $a_\circ=\bigwedge_k a^k$.

\end{lemma}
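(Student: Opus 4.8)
The plan is to establish the three assertions separately, the identity $a_\circ=\bigwedge_k a^k$ being the substantial one. Throughout I would work in the commutative bicommutant $\{a\}''$, which by Lemma \ref{le:commut} is a norm-closed strongly archimedean convex commutative sub-SEA of $E$; being closed under $\circ$, it contains every power $a^k$, so all the $a^k$ commute with one another. This already gives the "commuting" claim. For monotonicity I would argue inside this commutative sub-SEA, where $\circ$ is commutative and hence additive in each argument by (S1): from $1=a\oplus a^\perp$ one obtains $a^k=a\circ a^k\oplus a^\perp\circ a^k$, and since $a^\perp\circ a^k\ge 0$ this yields $a^{k+1}=a\circ a^k\le a^k$. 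Thus $(a^k)_k$ is descending.

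Next I would verify that $a_\circ$ is a lower bound. As $a_\circ\le a$, Lemma \ref{le:sharp}(ii) gives $a_\circ\circ a=a\circ a_\circ=a_\circ$, so $a_\circ\,|\,a$ and hence $a_\circ\,|\,a^k$; an easy induction using associativity (axiom (S4)) then shows $a_\circ\circ a^k=a^k\circ a_\circ=a_\circ$, and Lemma \ref{le:sharp}(ii) yields $a_\circ\le a^k$ for every $k$. It remains to show $a_\circ$ is the \emph{greatest} lower bound, and here I would invoke spectrality. Let $(p_\lambda)_{\lambda\in[0,1]}$ be the spectral resolution of $a$; by Lemma \ref{lemma:covex_floor} the meet $\bigwedge_{\lambda<1}p_\lambda^\perp$ exists and equals $a_\circ$. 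It therefore suffices to prove that any $c\in E$ with $c\le a^k$ for all $k$ satisfies $c\le p_\lambda^\perp$ for every $\lambda<1$: taking the meet over $\lambda<1$ then gives $c\le a_\circ$, which together with the lower-bound property establishes $a_\circ=\bigwedge_k a^k$.

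The heart of the argument is the estimate $J_{p_\lambda}(a^k)\le\lambda^k p_\lambda$. I would first record the spectral inequality $J_{p_\lambda}(a)\le\lambda p_\lambda$: since $p_\lambda=((a-\lambda u)_+)^*$, the defining property of the Rickart map gives $J_{p_\lambda}((a-\lambda u)_+)=0$, whence $J_{p_\lambda}(a-\lambda u)=-J_{p_\lambda}((a-\lambda u)_-)\le 0$ and thus $p_\lambda\circ a=J_{p_\lambda}(a)\le\lambda p_\lambda$. To pass to powers, note that $p_\lambda\in P(a)$, so $p_\lambda\,|\,a$ and $\{a,p_\lambda\}''$ is again a commutative sub-SEA by Lemma \ref{le:commut}; inside it the product is positive bilinear, hence monotone in each argument, so an induction $p_\lambda\circ a^{k+1}=(p_\lambda\circ a)\circ a^k\le(\lambda p_\lambda)\circ a^k=\lambda(p_\lambda\circ a^k)\le\lambda^{k+1}p_\lambda$ (using Lemma \ref{le:aff}) gives $J_{p_\lambda}(a^k)\le\lambda^k p_\lambda$. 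Equivalently, representing $\{a,p_\lambda\}''$ densely in some $C(X,[0,1])$ by Corollary \ref{coro:commutative}, the inequality reads $f\le\lambda$ on the clopen set supporting $p_\lambda$, whence $f^k\le\lambda^k$ there. Finally, for a lower bound $c$, monotonicity of the compression $J_{p_\lambda}$ gives $J_{p_\lambda}(c)\le J_{p_\lambda}(a^k)\le\lambda^k p_\lambda\le\lambda^k u$ for all $k$; since $\lambda^k\to 0$, strong archimedeanity forces $J_{p_\lambda}(c)=0$, which by the compression property (Definition \ref{de:compr}) means exactly $c\le p_\lambda^\perp$.

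I expect the main obstacle to be precisely the passage from $J_{p_\lambda}(a)\le\lambda p_\lambda$ to $J_{p_\lambda}(a^k)\le\lambda^k p_\lambda$: because the sequential product is \emph{not} monotone in its first argument in general, one cannot iterate the inequality in $E$ directly, and the step genuinely requires descending into the commutative bicommutant (equivalently, the functional representation) where monotonicity and a functional-calculus estimate are available. The remaining pieces -- the commutativity and monotonicity of $(a^k)_k$, the lower-bound computation for $a_\circ$, and the archimedean limiting argument -- are routine once the commutative sub-SEA machinery of Lemma \ref{le:commut} and Corollary \ref{coro:commutative} is in place.
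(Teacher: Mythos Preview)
Your proposal is correct and follows the same overall strategy as the paper: establish $a_\circ\le a^k$, use the spectral inequality $p_\lambda\circ a\le\lambda p_\lambda$, iterate to a $\lambda^k$-bound, invoke archimedeanity, and conclude via Lemma \ref{lemma:covex_floor} that any lower bound $c$ satisfies $c\le\bigwedge_{\lambda<1}p_\lambda^\perp=a_\circ$.

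The one noteworthy difference is in how you reach $p_\lambda\circ c=0$. You apply the compression $J_{p_\lambda}$ directly to the inequality $c\le a^k$, obtaining $J_{p_\lambda}(c)\le J_{p_\lambda}(a^k)\le\lambda^k p_\lambda$ and hence $J_{p_\lambda}(c)=0$. The paper instead first passes to $a\circ b\le a^{k+1}$, deduces $a_\lambda\circ b=0$ from $p_\lambda\circ a^{k+1}=(p_\lambda\circ a)^{k+1}\le\lambda^{k+1}p_\lambda$, and then invokes Lemma \ref{lemma:projcover} (together with Corollary \ref{coro:limit}) to upgrade this to $p_\lambda\circ b=0$. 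Your route is shorter and avoids that lemma entirely; the price is that your inductive step $(p_\lambda\circ a)\circ a^k\le(\lambda p_\lambda)\circ a^k$ appeals to monotonicity in the \emph{first} argument, which you justify by descending into the commutative bicommutant $\{a,p_\lambda\}''$ where $\circ$ is positive bilinear. The paper sidesteps this by using the identity $p_\lambda\circ a^k=(p_\lambda\circ a)^k$ and iterating $x\circ x\le x\circ(\lambda p_\lambda)=\lambda x$ with $x=p_\lambda\circ a\le\lambda p_\lambda$, which needs only monotonicity in the second argument. Either device works; yours is arguably cleaner once Lemma \ref{le:commut} is in hand.
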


\begin{proof} It is clear that $\{a^k\}$ is a descending sequence of commuting elements in
$E$. By definition and Lemma \ref{le:sharp} (ii), $a_\circ \circ a=a_\circ$, so  that $a_\circ\circ a^k=a_\circ$ and
this shows that $a_\circ\le a^k$ for all $k\in \mathbb N$. Let $b\in E$ be any element
such that $b\le a^k$ for all $k$. Then $a\circ b\le a\circ a^k=a^{k+1}$ for all $k$.
Let $\{p_\lambda\}_{\lambda\in [0,1]}$ be the spectral resolution for $a$
and put $a_\lambda=p_\lambda\circ a$. Then $a_\lambda\le \lambda p_\lambda$  \cite[Theorem 3.3 (ii)]{FPspectres} and
consequently for all $k\in \mathbb N$,
\[
a_\lambda\circ b=p_\lambda\circ(a\circ b)\le p_\lambda\circ a^{k+1}=(p_\lambda\circ
a)^{k+1}\le \lambda^{k+1}p_\lambda,
\]
here we used axiom (S3) and the fact that $p_\lambda\in P$ and $a$ commute. By
archimedeanity, for $\lambda<1$ this implies that $a_\lambda\circ b=0$. Similarly as
before, we obtain
\[
0=a_\lambda\circ b=(a\circ p_\lambda)\circ b=a\circ (p_\lambda\circ b).
\]
Since $E$ is spectral, we see from Corollary \ref{coro:limit} that the assumptions in
Lemma \ref{lemma:projcover} are satisfied, so that we obtain
\[
0=a^\circ \circ (p_\lambda\circ b)=(a^\circ \circ p_\lambda)\circ b=(p_\lambda\circ
a^\circ)\circ
b=p_\lambda\circ (a^\circ\circ b)=p_\lambda\circ b,
\]
the last equality follows from the fact that $b\le a\le a^\circ$.
This implies that $b\le p_\lambda^\perp$ for all $\lambda<1$ and hence
\[
b\le \bigwedge_{\lambda<1} p_\lambda^\perp=a_\circ,
\]
by Lemma \ref{lemma:covex_floor}.
\end{proof}

The next result shows that
under some further assumption on the bicommutants in $E$, any element in the order unit space
$(V,V^+,u)$ has a suitable decomposition into positive and negative part.  Note that since
we have $-\|v\| u\le v \le \|v\| u$ for any $v\in V$, it follows that $(2\|v\|)^{-1}(v+\|v\| u)\in V[0,1]\simeq E$.

\begin{prop}\label{prop:decomp}  Let $v\in V$ and let $b=(2\|v\|)^{-1}(v+\|v\| u)$. Assume that  the bicommutant $\{b\}''$
is norm-complete. Then there are some
$\mu_\pm>0$ and elements $a_\pm \in \{b\}''$ such that $a_+\circ a_-=0$ and
\[
v=\mu_+a_+-\mu_-a_-.
\]

\end{prop}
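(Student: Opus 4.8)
The plan is to identify the bicommutant $M:=\{b\}''$ with an algebra of continuous functions and then read off the decomposition as the elementary positive/negative part splitting of a continuous function. Note first that $\{b\}$ is a singleton, hence trivially a set of mutually commuting elements, so Lemma \ref{le:commut} applies: $M=\{b\}''$ is a norm-closed, strongly archimedean, convex, commutative sub-SEA of $E$, which by hypothesis is norm-complete. (I may assume $v\neq 0$, as is implicit in the very definition of $b$.)

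Next I would apply Corollary \ref{coro:commutative} to $M$. The Kadison representation (Theorem \ref{th:kad}) yields an affine isomorphism $\Psi$ of $M$ onto a dense subalgebra of $C(X,[0,1])$ for some compact Hausdorff $X$, carrying $\oplus$, the order, and the sequential product $\circ$ to pointwise multiplication, and it is isometric. Norm-completeness of $M$ forces the dense image to be closed, hence to be all of $C(X,[0,1])$, so $\Psi\colon M\to C(X,[0,1])$ is an isomorphism. Since $M$ linearly spans the order unit space $W:=\operatorname{span}(M)\subseteq V$ and $\Psi$ respects $\oplus$ and the scalar action $\lambda\cdot$, it extends uniquely to a linear order isomorphism $\Phi$ of $W$ onto $C(X,\mathbb{R})$ with $\Phi(u)=\mathbf{1}$; this is in effect the full Kadison statement applied to the norm-complete order unit space $W$ equipped with the bilinear product obtained by extending $\circ|_M$ (which is bi-affine by Lemma \ref{le:aff} and commutativity).

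Now set $f:=\Psi(b)\in C(X,[0,1])$. Since $v=2\|v\|b-\|v\|u\in W$, we get $\Phi(v)=\|v\|(2f-1)$, a function with values in $[-\|v\|,\|v\|]$. Because $2f-1$ takes values in $[-1,1]$, its positive and negative parts $(2f-1)_+$ and $(2f-1)_-$ again lie in $C(X,[0,1])$, so I define $a_\pm:=\Psi^{-1}\big((2f-1)_\pm\big)\in M=\{b\}''$ and $\mu_\pm:=\|v\|>0$. The pointwise product $(2f-1)_+(2f-1)_-=0$ together with the fact that $\Psi$ carries $\circ$ to pointwise multiplication gives $a_+\circ a_-=\Psi^{-1}(0)=0$. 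Finally, applying $\Phi$ and using linearity,
\[
\Phi(\mu_+a_+-\mu_-a_-)=\|v\|\big((2f-1)_+-(2f-1)_-\big)=\|v\|(2f-1)=\Phi(v),
\]
and injectivity of $\Phi$ yields $v=\mu_+a_+-\mu_-a_-$, as required.

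The main obstacle is concentrated in the second step: verifying that $M=\{b\}''$ genuinely satisfies the hypotheses of the Kadison representation, and that norm-completeness upgrades the isometric dense embedding to an \emph{onto} isomorphism, together with the routine but necessary extension of $\Psi$ from the unit interval $M$ to its linear span $W$ inside $V$ so that the final linear identity can be pulled back from $C(X,\mathbb{R})$ to $V$. Once the problem has been transported into $C(X)$, the decomposition into positive and negative parts is completely elementary and the orthogonality $a_+\circ a_-=0$ is automatic.
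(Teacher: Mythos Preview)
Your proof is correct and follows essentially the same approach as the paper: identify $\{b\}''$ with $C(X,[0,1])$ via Lemma~\ref{le:commut} and Corollary~\ref{coro:commutative}, pass $v$ into $C(X)$ via the linear span, and read off the positive/negative part decomposition. The only cosmetic difference is that you make the explicit choice $\mu_\pm=\|v\|$ (exploiting that $(2f-1)_\pm$ already lands in $C(X,[0,1])$), whereas the paper leaves the rescaling generic.
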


\begin{proof} By Lemma  \ref{le:commut}, $\{b\}''$ is a strongly archimedean convex
commutative sub-SEA of $E$.  Since it is also
norm-complete by the assumption, we have by Corollary
\ref{coro:commutative} that  $\{b\}''\simeq C(X,[0,1])$ for some compact Hausdorff space $X$.
Since $C(X)$ is spanned by $C(X,[0,1])$, it corresponds to the subspace in $V$ spanned by
$\{b\}''$.

Since $v$ is a linear combination of $b$ and $u$, there is a corresponding function $f\in
C(X)$. Put
$f_\pm=\frac12(|f|\pm f)$, then $f=f_+-f_-$, $f_\pm$ are positive elements in $C(X)$ and
we have $f_+f_-=0$.  It follows that there are some $g_\pm\in C(X,[0,1])$ and $\mu_\pm>0$
such that $f_\pm=\mu_\pm g_\pm$ and consequently $g_+g_-=0$. Let now  $a_\pm\in \{b\}''$ be the
elements corresponding to $g_\pm$. Then we have $v=\mu_+a_+-\mu_-a_-$ and
$a_+\circ a_-$follows from the fact that the sequential product in
$\{b\}''$ corresponds to the pointwise product of functions in $C(X,[0,1])$.

\end{proof}

\begin{lemma}\label{lemma:projections} Let $E$ be spectral, $a\in E$
and let $\{p_\lambda\}_{\lambda\in [0,1]}$ be the spectral resolution of $a$. Assume that
$\{a\}''$ is norm-complete. Then
$p_\lambda\in \{a\}''$, $\lambda\in [0,1]$.

\end{lemma}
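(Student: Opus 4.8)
The plan is to exploit the defining formula $p_\lambda=((a-\lambda)_+)^*$ and to show that commutation with $a$ is inherited first by the positive part $(a-\lambda)_+$ and then by its Rickart projection. Since $p_\lambda\in P$, membership $p_\lambda\in\{a\}''$ is equivalent to $c\,|\,p_\lambda$ for every $c\in\{a\}'$, so I would fix $c\in E$ with $c\,|\,a$ and prove $c\,|\,p_\lambda$. This splits into two steps: (A) $c\,|\,(a-\lambda)_+$; and (B) the implication $c\,|\,(a-\lambda)_+\ \Rightarrow\ c\,|\,((a-\lambda)_+)^*=p_\lambda$. Step (B) relies on Lemma \ref{lemma:floor} and property A, whereas step (A) is where norm-completeness of $\{a\}''$ is genuinely used.

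For step (B), set $v=(a-\lambda)_+$ and assume $v\neq 0$ (otherwise $a\le\lambda u$, $p_\lambda=u$, and there is nothing to prove). Put $w=\|v\|^{-1}v\in E$; then $c\,|\,w$, and since $v^\circ=(cv)^\circ$ and $v^*=(v^\circ)^\perp$ we have $p_\lambda=(w^\circ)^\perp$. From the relation $(w^\perp)_\circ=(w^\circ)^\perp$ we get $w^\circ=((w^\perp)_\circ)^\perp$, where $(w^\perp)_\circ=\bigwedge_k(w^\perp)^k$ by Lemma \ref{lemma:floor} applied to $w^\perp$. Now $c\,|\,w$ gives $c\,|\,w^\perp$ by (S4) and $c\,|\,(w^\perp)^k$ for all $k$ by repeated use of (S5); passing to orthosupplements, the sequence $\{((w^\perp)^k)^\perp\}_k$ is ascending, mutually commuting, each term commuting with $c$, and its supremum is $w^\circ$. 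Property A then yields $c\,|\,w^\circ$, and one more application of (S4) gives $c\,|\,(w^\circ)^\perp=p_\lambda$.

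For step (A), I would pass to the norm-closed sub-SEA $B\subseteq\{a\}''$ generated by $a$. Being a norm-closed subset of the norm-complete $\{a\}''$, it is norm-complete, so by Lemma \ref{le:commut} and Corollary \ref{coro:commutative} it is isomorphic to some $C(Y,[0,1])$ on which $a$ corresponds to a generator, and hence carries a continuous functional calculus in $a$. The continuous function $t\mapsto(t-\lambda)_+$ then produces an element of $B$, which I would identify with the positive part $(a-\lambda)_+$ and which, by Stone--Weierstrass, is a norm limit $\lim_n q_n(a)$ of polynomials in $a$. From $c\,|\,a$, axiom (S5) gives $c\,|\,a^k$ for every $k$, whence $c\,|\,q_n(a)$ by linearity of the product; after an affine rescaling into $E$, the $q_n(a)$ form a norm-convergent sequence of mutually commuting elements of $\{c\}'$, so Lemma \ref{lemma:closed} forces the limit $(a-\lambda)_+$ into $\{c\}'$, i.e. $c\,|\,(a-\lambda)_+$. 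Together with (B) this gives $c\,|\,p_\lambda$ for all $c\in\{a\}'$, hence $p_\lambda\in\{a\}''$.

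The step I expect to be the main obstacle is the identification in (A) of the functional-calculus element $(t\mapsto(t-\lambda)_+)(a)$ with the order-theoretic positive part $(a-\lambda)_+$ entering $p_\lambda=((a-\lambda)_+)^*$: this requires that the continuous functional calculus on $B$ agree with the spectral integral over $\{p_\mu\}$, equivalently that the sequential powers $a^k$ coincide with the spectral moments $\int\mu^k\,dp_\mu$. It is exactly here that the hypothesis enters, since without norm-completeness of $\{a\}''$ the algebra $B$ need not exhaust $C(Y,[0,1])$ and $(a-\lambda)_+$ need not be approximable by polynomials in $a$.
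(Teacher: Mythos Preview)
Your Step (B) is correct and is exactly the argument in the paper's first paragraph, phrased for each $c\in\{a\}'$ rather than globally for $\{a\}''$: from $b\in\{a\}''$ one gets $b^k\in\{a\}''$, hence $b_\circ=\bigwedge_k b^k\in\{a\}''$ by Lemma~\ref{lemma:floor} and property~A, and then $b^\circ=((b^\perp)_\circ)^\perp\in\{a\}''$.

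The gap is exactly where you put it, in Step (A): the identification of the functional-calculus element $(t\mapsto (t-\lambda)_+)(a)$ with the spectral positive part $(a-\lambda u)_+$. This is not a formality, and your suggested route (matching sequential powers $a^k$ with spectral moments $\int \mu^k\,dp_\mu$) runs into the difficulty that $\circ$ is only known to be norm-continuous in the second variable, so $a_n\circ a_n\to a\circ a$ is not immediate from $a_n\to a$.

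The paper closes this gap differently and more directly. It packages your functional-calculus step as Proposition~\ref{prop:decomp}: from norm-completeness of $\{a\}''$ one obtains $a-\lambda u=\mu_+c_+-\mu_-c_-$ with $c_\pm\in\{a\}''$ and $c_+\circ c_-=0$. Then, instead of comparing two calculi, one checks \emph{by hand} that this is the orthogonal decomposition of Definition~\ref{def:orthogonal}. Take $q:=c_+^\circ$, which lies in $\{a\}''$ by the first paragraph, so $J_q(c_+)=c_+$. The key step is to pass from $c_+\circ c_-=0$ to $q\circ c_-=c_+^\circ\circ c_-=0$; this is exactly Lemma~\ref{lemma:projcover}, whose approximability hypothesis on $c_-$ is supplied by spectrality via Corollary~\ref{coro:limit}. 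One then has $J_q(a-\lambda u)=\mu_+c_+\ge 0$ and $J_{q^\perp}(a-\lambda u)=-\mu_-c_-\le 0$, so uniqueness of orthogonal decompositions forces $\mu_+c_+=(a-\lambda u)_+$, and hence
\[
p_\lambda=((a-\lambda u)_+)^*=((\mu_+c_+)^\circ)^\perp=(c_+^\circ)^\perp\in\{a\}''.
\]
In short, the missing idea is Lemma~\ref{lemma:projcover}: it is precisely the bridge from the $C(X)$-product information $c_+\circ c_-=0$ to the compression-base information $J_{c_+^\circ}(c_-)=0$ that defines the orthogonal decomposition. Once you insert this, your outline and the paper's proof become essentially the same.
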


\begin{proof} Let $b\in \{a\}''$, then $b^k\in \{a\}''$ for all $k\in \mathbb N$ and we see by Lemma \ref{lemma:floor}  and property A
that $b_\circ\in \{a\}''$. Since $b^\circ=(b^\perp)_\circ^\perp$, we have  $b^\circ\in
\{a\}''$. For $\lambda\in [0,1]$ put $v_\lambda:=a-\lambda u$, then we see that
$\{(2\|v_\lambda\|)^{-1}(v_\lambda+\|v_\lambda\| u)\}''=\{a\}''$ is norm-complete, so that
 we may apply Proposition \ref{prop:decomp}.  We  obtain a decomposition
\begin{equation}\label{eq:ortho}
a-\lambda u=\mu_+c_+-\mu_-c_-
\end{equation}
with $\mu_\pm>0$ and $c_\pm\in \{a\}''$, $c_+\circ c_-=0$.  We next show that this is
 an orthogonal   decomposition  of $a-\lambda u$ in $(V,V^+,u)$, in the sense of
Definition \ref{def:orthogonal}. Since such a decomposition is unique, this will imply that
$\mu_\pm c_\pm=(a-\lambda u)_\pm$ and hence
\[
p_\lambda=(a-\lambda u)_+^*=((\mu_+c_+)^\circ)^\perp=(c_+^\circ)^\perp\in\{a\}''.
\]

So let us choose $q=c_+^\circ$. Then $q\circ c_+=c_+$. Moreover, spectrality and Corollary
\ref{coro:limit} shows that the assumptions of Lemma \ref{lemma:projcover}  are satisfied.
It follows that
$q\circ c_-=0$, so that $c_-\le q^\perp$. We therefore have  $J_q(a-\lambda u)=\mu_+c_+$ and
$J_{q^\perp}(a-\lambda u)=-\mu_-c_-$, which shows that \eqref{eq:ortho} is indeed an
orthogonal decomposition.
This finished the proof.

\end{proof}

\begin{prop}\label{prop:commut} Let $E$ be norm-complete and spectral. Then for $a,b\in
E$, $a|b$ if and only if $aCb$.

\end{prop}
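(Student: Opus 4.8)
The plan is to dispose of the forward implication immediately and then work on the converse. Since a spectral $E$ has the b-comparability property, the implication $aCb \implies a|b$ is precisely Proposition \ref{prop:comp_com}, so nothing new is needed there. For the converse, assume $a|b$ and let $\{p_{a,\lambda}\}_{\lambda\in[0,1]}$ and $\{p_{b,\mu}\}_{\mu\in[0,1]}$ denote the spectral resolutions of $a$ and $b$, which exist because $E$ is spectral. First I would observe that $\{a\}''$ and $\{b\}''$ are sub-SEAs of $E$ that are norm-closed by Lemma \ref{le:commut}; since $E$ itself is norm-complete, both bicommutants are norm-complete. This is exactly the hypothesis needed to invoke Lemma \ref{lemma:projections}, which then gives $p_{a,\lambda}\in\{a\}''$ for all $\lambda$ and, symmetrically, $p_{b,\mu}\in\{b\}''$ for all $\mu$.

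Next I would extract pairwise compatibility of the two families of spectral projections by a ``sandwiching'' argument between commutant and bicommutant. Because $a|b$ is symmetric, we have $b\in\{a\}'$ and $a\in\{b\}'$. Since $p_{a,\lambda}\in\{a\}''$ commutes with every element of $\{a\}'$, in particular $p_{a,\lambda}|b$; symmetrically $p_{b,\mu}|a$. The relation $p_{a,\lambda}|b$ says $p_{a,\lambda}\in\{b\}'$, and since $p_{b,\mu}\in\{b\}''$ commutes with everything in $\{b\}'$, we obtain $p_{a,\lambda}|p_{b,\mu}$. As both are projections, Lemma \ref{le:sharp}~(v) turns this into $p_{a,\lambda}\leftrightarrow p_{b,\mu}$ for all $\lambda,\mu$.

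Finally I would lift this to $aCb=[P(a)\leftrightarrow P(b)]$. The point is that the spectral projections of $a$ are pairwise compatible (they lie in the Boolean algebra $P(a)$) and serve as a defining Boolean subalgebra $B(a)$ in the sense of Definition \ref{de:belement}: the equivalence $a\in C(p)\iff p_{a,\lambda}\in C(p)$ for all $\lambda$ follows in one direction from $p_{a,\lambda}\in\{a\}''$ as above, and in the other from the fact that $a$ is a norm limit of combinations of its spectral projections (Corollary \ref{coro:limit}) together with the norm-closedness of $C(p)=\{p\}'$ under commuting sequences (Lemma \ref{lemma:closed}). Taking $B(a)$ and $B(b)$ generated by the respective spectral projections, the pairwise compatibility established above and regularity of the OML $P$ give $B(a)\leftrightarrow B(b)$, and by \cite[Lemma 2.18]{JP2} this is exactly $aCb$.

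The main obstacle I expect is this last step: verifying carefully that the spectral projections constitute a legitimate choice of $B(a)$ and that compatibility of the two generating families propagates to the generated Boolean subalgebras. The propagation relies on regularity of $P$ (Theorem \ref{th:projcovoml}), while the validity of $B(a)$ rests on the density of $\mathrm{span}(P(a))$ around $a$ and the closure property of commutants; both ingredients are available, but they must be combined so that the purely order-theoretic compatibility of projections is reconciled with the analytic approximation of $a$ and $b$ by simple elements.
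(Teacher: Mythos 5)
Your proposal is correct and follows essentially the same route as the paper: apply Lemma \ref{lemma:projections} to place the spectral projections in the respective bicommutants, deduce $p_{a,\lambda}|b$ and then $p_{a,\lambda}|p_{b,\mu}$ by the commutant/bicommutant sandwich, and conclude $aCb$, with the forward implication handled by Proposition \ref{prop:comp_com}. The only difference is that for the final lifting step the paper simply invokes the previously stated fact that two elements commute iff their spectral resolutions are elementwise compatible, whereas you reconstruct that fact via a choice of $B(a)$, which is a legitimate (if more laborious) way to close the argument.
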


\begin{proof} Assume that $a|b$. Since $E$ is norm-complete and $\{a\}''$ is norm-closed
by Lemma \ref{le:commut},
we may apply Lemma \ref{lemma:projections}. It follows  that all spectral projections
$p_{a,\lambda}\in \{a\}''$,
so that $p_{a,\lambda}|b$ for all $\lambda$. Similarly, $p_{b,\mu}|p_{a,\lambda}$ for all
$\lambda,\mu$. Since $p|q$ is the same as $pCq$ for projections $p,q$,  this implies that $aCb$.
The converse follows from Proposition \ref{prop:comp_com}.

\end{proof}

We now prove our  main result.

\begin{theorem} \label{thm:sigmacomplete_spectral} Let $E$ be a strongly archimedean
convex SEA with property A.
  If every maximal commutative subalgebra is monotone $\sigma$-complete, then $E$ is spectral.
  If in addition $E$ is norm-complete, the converse also holds.

\end{theorem}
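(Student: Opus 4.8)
For the first direction, assuming every maximal commutative subalgebra is monotone $\sigma$-complete, I want to establish that $E$ has both the projection cover property and the b-comparability property. The key observation is that by Lemma~\ref{le:commut}, for any element $a\in E$ the bicommutant $\{a\}''$ is a strongly archimedean convex \emph{commutative} sub-SEA, and any maximal commutative subalgebra containing $a$ contains $\{a\}''$. Property A should allow me to transfer the monotone $\sigma$-completeness of the maximal commutative subalgebras down to the relevant bicommutants, or at least to deduce that each $\{a\}''$ sits inside a monotone $\sigma$-complete commutative subalgebra. Applying Corollary~\ref{coro:commutative} to such a commutative subalgebra $C$, I get $C\simeq C(X,[0,1])$ with $X$ basically disconnected, hence $C$ is norm-complete and spectral in its own right.

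\textbf{The crucial step is then to build the projections and comparisons in $E$ out of those living in the commutative subalgebras.} For the projection cover of $a\in E$: working inside a monotone $\sigma$-complete maximal commutative subalgebra $C\ni a$, Corollary~\ref{coro:commutative} gives spectrality of $C$, so $a$ has a projection cover $a^\circ$ relative to $C$; I must verify this serves as a projection cover in all of $E$, i.e. that for every $q\in P=E_S$, $a\le q \Leftrightarrow a^\circ\le q$. The forward implication is the content, and here I would use that $a\le q$ forces $q$ to commute with $a$ (by sharpness and Lemma~\ref{le:sharp}), placing $q$ in a commutative subalgebra where the projection-cover computation is valid. For b-comparability, given $e,f$ with $eCf$, the set $P(e,f)$ of projections is Boolean, and the pair sits in a common commutative subalgebra $C$; the comparisons $J_p(e)\le J_p(f)$ needed for $P_\le(e,f)$ to be nonempty follow from spectrality of $C$ via the orthogonal decomposition of $e-f$ (or rather of the corresponding element of the order unit space), exactly as Proposition~\ref{prop:decomp} produces orthogonal positive/negative parts using the $C(X)$ representation.

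\textbf{For the converse, assuming $E$ is norm-complete and spectral,} I want to show each maximal commutative subalgebra $C$ is monotone $\sigma$-complete. By Proposition~\ref{prop:commut}, under norm-completeness and spectrality the two notions of commutativity coincide ($a|b\iff aCb$), so the maximal commutative subalgebras are precisely the C-blocks $C(B)$ for blocks $B$ of $P$. Theorem~\ref{th:bcompar}(i) tells me each such $C$ is an MV-effect algebra, and part (iii) that $C$ inherits the projection cover property. I then invoke Corollary~\ref{coro:commutative}: since $C$ is a commutative convex strongly archimedean SEA that is norm-complete (being norm-closed by Lemma~\ref{le:commut} inside the norm-complete $E$) and spectral (by Theorem~\ref{th:bcompar} together with the projection cover property), condition~(iii) of that corollary holds, whence condition~(i)—monotone $\sigma$-completeness—follows.

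\textbf{The main obstacle I anticipate} is the forward direction's reduction of a global property-cover/comparability statement to a single commutative subalgebra, because $P=E_S$ and the subalgebra's projections must be shown to exhaust the relevant projections of $E$; controlling how a projection $q\in E_S$ that happens to dominate or commute with $a$ relates to the fixed maximal commutative subalgebra containing $a$ is delicate, and this is precisely where Property A and the coincidence of $C$-block structure with maximal commuting sets (Theorem~\ref{thm:cblock}) must be leveraged carefully. A secondary subtlety is ensuring that monotone $\sigma$-completeness of the \emph{maximal} commutative subalgebras actually descends to give norm-completeness/spectrality of the possibly smaller bicommutants $\{a\}''$ that the decomposition arguments require.
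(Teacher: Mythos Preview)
Your converse direction is essentially the paper's argument: use Proposition~\ref{prop:commut} to identify maximal commutative subalgebras with C-blocks, then conclude monotone $\sigma$-completeness. (The paper cites an external result \cite[Thm.~3.33]{JP1} at this point, but your route through Theorem~\ref{th:bcompar} and Corollary~\ref{coro:commutative} works and is more self-contained; note that a C-block $C$ equals $C''$, so Lemma~\ref{le:commut} gives norm-closedness and hence norm-completeness inside the norm-complete $E$.)

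The forward direction, however, has a genuine gap, and it is exactly the one you flag as ``the main obstacle.'' Your plan is to compute $a^\circ$ inside a \emph{fixed} maximal commutative subalgebra $M\ni a$ and then argue that for any $q\in E_S$ with $a\le q$ we get $a^\circ\le q$ because $q$ commutes with $a$. But $q$ need not lie in $M$; it lies in some \emph{other} maximal commutative subalgebra $M'\ni a$, and there is no reason the projection cover of $a$ computed in $M'$ coincides with the one computed in $M$. Concretely, $a^\circ_M$ arises as an infimum (of $(a^\perp)^k$, say) taken in $M$, and this can sit strictly above the corresponding infimum in $M'$ or in $E$. The tools you hope will rescue this---Property~A and Theorem~\ref{thm:cblock}---do not help: Theorem~\ref{thm:cblock} presupposes the b-property, which is part of what you are proving, and Property~A controls suprema of commuting sequences, not the agreement of infima across different subalgebras. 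A related issue infects your b-comparability sketch: from $eCf$ you want $e,f$ in a common commutative sub-SEA, but $eCf\Rightarrow e|f$ is Proposition~\ref{prop:comp_com}, which already assumes b-comparability.

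The paper avoids fixing a single subalgebra altogether. For the projection cover it runs a Zorn's-lemma argument on $\mathcal P=\{p\in E_S:\ a\circ p=0\}$: given $p,q\in\mathcal P$, the element $b=\tfrac12(p+q)$ satisfies $a\circ b=0$, so $a,b$ sit together in \emph{some} maximal commutative $M$; the projection cover $s$ of $b$ in that $M$ satisfies $a\circ s=0$ (via Lemma~\ref{lemma:projcover} and Corollary~\ref{coro:limit}) and dominates $p,q$, giving upward directedness. Chains are handled similarly. The point is that different pairs $p,q$ are allowed to use different $M$'s, so one never needs to compare projection covers across subalgebras. For comparability the paper works in $V$: for $v\in V$ one forms the single element $b=(2\|v\|)^{-1}(v+\|v\|u)$, observes that $\{b\}''$ is norm-closed (Lemma~\ref{le:commut}) and contained in a monotone $\sigma$-complete maximal commutative subalgebra, hence norm-complete, and applies Proposition~\ref{prop:decomp} to get $v=\mu_+c_+-\mu_-c_-$; the already-established global projection cover then yields $p=c_+^\circ\in P_\pm(v)$. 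This sidesteps the need for $eCf\Rightarrow e|f$ entirely.
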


\begin{proof}  Assume that every maximal commutative subalgebra in $E$ is monotone
$\sigma$-complete. We will first show the projection cover property. For this, we prove that for every
$a\in E$ there exist a largest projection $p\in P$ such that $a\circ p=0$. It is then
clear that $p^\perp$ will be the projection cover of $a$.
The proof is similar to a proof in \cite{SW}. So let
\[
\mathcal P=\{e\in P,\ a\circ e=0\}.
\]
We will show that $\mathcal P$ is upward directed and that every increasing chain in
$\mathcal P$ has an upper bound. By the  Zorn lemma, $\mathcal P$ then has a maximal element,
which must be the largest elements since $\mathcal P$ is upward directed.

So let $p,q\in \mathcal P$ and put $\frac 12(p+q)=b$. Then $a\circ b=0=b\circ a$ by axiom
(S3), so that there is some maximal commutative subalgebra $M\subseteq E$ containing $a$ and $b$.
By the assumption and Corollary \ref{coro:commutative}, $M$ is norm-complete and spectral,
so that  $a$, $b$ and $M$ satisfy the conditions in Lemma \ref{lemma:projcover}, see Corollary
\ref{coro:limit}. Put $s$ be the projection cover of $b$ in $M$, then it follows  that
$a\circ s=0$, so that $s\in \mathcal P$. We
also have $\frac 12 p,\frac12 q\le b\le s$ and hence $p,q\le s$, as $s$ is a principal
element. It follows that $\mathcal P$ is upward directed.

Let now $\mathcal C$ be an increasing chain in $\mathcal P$, then all elements in
$\mathcal C$ mutually commute and also commute with $a$. Therefore there is a maximal
commutative subalgebra $M_1\subseteq E$ containing $a$ and $\mathcal C$. Let $s_1$ be the
projection cover of $a$ in $M_1$, then $s_1^\perp=1-s_1$ is the largest element in $P\cap M_1$ such
that $a\circ s_1^\perp=0$, so that $s_1^\perp\in \mathcal P$ and $\mathcal C\le
s^\perp_1$,
which means that $\mathcal C$ has an upper bound in $\mathcal P$. This proves that
$a^\circ$ exists.

For b-comparability, it will be  enough to show that the corresponding order unit space $(V,V^+,u)$ has
the comparability property. Recall that this means that for any $v\in V$, the set
$P_\pm(v)$ defined in \eqref{eq:Vcomparability} is nonempty.
So choose some $v\in V$. Note that for any $b\in E$, the bicommutant is contained in some
maximal commutative subalgebra, so that $\{b\}''$ is norm-complete by the assumption,
Corollary \ref{coro:commutative} and
Lemma \ref{le:commut}. Hence we may apply Proposition \ref{prop:decomp}. Let
$v=\mu_+c_+-\mu_-c_-$ be the obtained decomposition. By the previous paragraph, $E$ has
the projection cover property, so we may put $p=c_+^\circ$. Now observe that the
conditions of Lemma \ref{lemma:projcover} are satisfied for any $a,b\in E$. Indeed,
$a^\circ$ exists and $b$ is contained in some maximal commutative subalgebra $M$,
which is monotone $\sigma$-complete, hence spectral (Corollary
\ref{coro:commutative}), so that   $b$ can be obtained as the
norm limit of an ascending sequence of simple elements $b_n\in M$ (Corollary
\ref{coro:limit}). It follows that $p\circ c_+=c_+$ and $p\circ c_-=0$, so that
$p^\perp\circ c_-=c_-$. It is now easily checked that $p\in P_\pm(v)$.

To prove the converse, assume that $E$ is norm-complete and spectral. By Proposition \ref{prop:commut}, we see that
maximal commutative subalgebras are the same as C-blocks. The statement follows by
\cite[Thm. 3.33]{JP1}.

\end{proof}

The next result follows immediately from the above theorem and the definitions of a
$\sigma$-SEA (cf. \cite{wetering}).

\begin{corollary}\label{coro:sigma_sea} Any $\sigma$-SEA is spectral.

\end{corollary}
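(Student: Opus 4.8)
The plan is to derive Corollary \ref{coro:sigma_sea} directly from Theorem \ref{thm:sigmacomplete_spectral}. The key observation is that a $\sigma$-SEA carries exactly the structural ingredients that Theorem \ref{thm:sigmacomplete_spectral} requires: it is assumed to be a strongly archimedean convex SEA with property A, and every maximal commutative subalgebra is monotone $\sigma$-complete. Given these, the first implication of the theorem yields spectrality with no norm-completeness assumption needed. So the entire task reduces to verifying that each hypothesis of the theorem is a consequence of the $\sigma$-SEA axioms (Definition \ref{de:sigmasea}) together with the standing assumptions of this section.

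First I would unpack what a $\sigma$-SEA gives us. By Definition \ref{de:sigmasea}, a $\sigma$-SEA is monotone $\sigma$-complete, and it satisfies the two continuity/compatibility conditions: if $a_1\ge a_2\ge \cdots$ then $b\circ \bigwedge_i a_i=\bigwedge_i(b\circ a_i)$, and if $b|a_i$ for all $i$ then $b|\bigwedge_i a_i$. Applying the second condition to orthosupplements (or equivalently to the ascending version via the $\perp$ map), property A follows essentially immediately: for an ascending sequence $a_n\le a_{n+1}$ of mutually commuting elements with supremum $\bigvee_n a_n$, if $a_n|b$ for all $n$, then passing to $a_n^\perp$ gives a descending sequence with $b|a_n^\perp$, whence $b|\bigwedge_n a_n^\perp=(\bigvee_n a_n)^\perp$, and then $b|\bigvee_n a_n$ by axiom (S4). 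This is precisely property A.

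Next I would address the remaining hypotheses. Since we are in the standing setting of Section \ref{sec:main}, $E$ is already assumed to be a strongly archimedean convex SEA, so those hypotheses are in force by context. It remains to check that every maximal commutative subalgebra $M\subseteq E$ is monotone $\sigma$-complete. The natural route is to show that $M$ is sup/inf-closed in $E$, or at least closed under countable suprema computed in $E$: a maximal commutative subalgebra has the form $M=S''$ for a maximal set $S$ of mutually commuting elements, and by Lemma \ref{le:commutants} it is a commutative sub-SEA. Given an ascending sequence in $M$, its supremum exists in $E$ by monotone $\sigma$-completeness of $E$; the $\sigma$-SEA compatibility condition (the part ensuring $b|\bigwedge a_i$, applied in the ascending form as above) shows that this supremum commutes with everything in $M'$, hence lies in $M=M''$ by maximality. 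Thus $M$ is monotone $\sigma$-complete in its own right.

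The step I expect to be the main obstacle is the last one, namely confirming that the supremum of an ascending sequence from $M$, taken in $E$, actually lands back in $M$ and serves as the supremum within $M$. The delicate point is that $M$ being closed under the $E$-supremum requires showing the supremum commutes with all of $M'$; this uses the $\sigma$-SEA compatibility axiom together with the identification $M=M''$ from Lemma \ref{le:commutants}(iii), and one must be careful that the commutation claimed is the sequential-product commutation $|$ rather than merely Mackey compatibility. Once this closure is established, all hypotheses of Theorem \ref{thm:sigmacomplete_spectral} hold and spectrality of $E$ follows at once.
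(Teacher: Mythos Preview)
Your proposal is correct and is exactly the approach the paper takes: the paper simply remarks that the corollary ``follows immediately from the above theorem and the definitions of a $\sigma$-SEA,'' and you have faithfully spelled out the verification of the hypotheses of Theorem~\ref{thm:sigmacomplete_spectral} (property~A via orthosupplements and axiom~(S4), and monotone $\sigma$-completeness of maximal commutative subalgebras via $M=M'=M''$ and the $\sigma$-SEA compatibility axiom). The only cosmetic point is that you need not invoke Lemma~\ref{le:commutants}(iii) to get $M=M''$; this is immediate from maximality once you observe $M'=M$.
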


\subsection{Context-spectrality in convex SEAs}

Let us now turn to the notion of spectrality  in convex SEAs defined in terms of contexts
as in \cite{Gucs}, which we first briefly recall.
So let $E$ be a convex effect algebra. An element $a\in E$ is \emph{one dimensional} if
for $b\in E$, $b\le a$ implies that $b=ta$ for some $t\in [0,1]$. A  \emph{context} in $E$ is a finite  set of
one dimensional sharp elements
$p_1,\dots,p_n$ such that $\oplus_{i=1}^n p_i=1$. In \cite{Gucs}, $E$ is called spectral if
every element $a\in E$ has the form $a=\oplus_i \mu_i p_i$ for some context
$\{p_1,\dots,p_n\}$
and $\mu_i\in [0,1]$. We will call such effect algebra \emph{context-spectral}, to
distinguish this notion from  spectrality considered in the present paper.
We will show that if $E$ is also sequential, then context-spectrality is stronger than
spectrality.

We will, in fact, use a weaker assumption that every element $a\in E$ is \emph{simple}, that
is, it can be written as a sum $a=\oplus_i \mu_i p_i$ for some sharp elements $p_1,\dots,
p_n$, $\oplus_i p_i=1$ as before, but  $p_i$ are not assumed to be one-dimensional.
(Note that, as in the case of context-spectrality, the number of the sharp elements $p_i$
in such decompositions  might be different). In this case, there is always an expression
of this form with $\mu_1<\dots <\mu_n$, in \cite{FPspectres}, such an
expression is called a reduced representation of the simple element.  For general convex
effect algebras, it is not clear whether simple elements have a unique reduced
representations. Such uniqueness was proved under an additional assumption in
\cite[Proposition 10]{JePl}.

Let $E$ be a convex effect algebra such that every element is simple.
It can be shown the same way as in \cite[Prop. 3]{JePl} that in this case $E$  has an ordering
set of states. It follows that $E$ is strongly archimedean and $E\simeq V[0,u]$ for an order
unit space $(V,V^+,u)$, by Theorem \ref{th:archim}. Moreover, any element $v\in V$ is
simple, that is, it is a
linear combination $v=\sum_i c_i p_i$ for some sharp elements  $p_1,\dots,p_n$,
$\sum_ip_i=u$, \cite[Lemma 2]{JePl}.

\begin{theorem}\label{thm:contexts} Let $E$ be a convex SEA such that every element of $E$
is simple. Then $E$ is spectral. If $a=\oplus_i\mu_ip_i$ is a reduced representation of
$a$, then the spectral resolution of $a$ has the form $\{p_{a,\lambda}\}_{\lambda\in
[0,1]}$, where
\begin{align*}
p_{a,\lambda}= \oplus_{i=1}^{k-1} p_i,\quad  \lambda\in [\mu_{k-1},\mu_k),\qquad
k=1,\dots, n+1,
\end{align*}
where we put $\mu_0=0$, $\mu_{n+1}=1$ and $p_0=0$.

\end{theorem}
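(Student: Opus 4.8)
The plan is to transfer everything to the associated order unit space, exhibit for a simple element its support and its ``sign projection'' explicitly in terms of the sharp components of a reduced representation, deduce spectrality from the projection cover and comparability properties, and then read off the spectral resolution from the definition \eqref{eq:spectprojs}. To set up, recall from the discussion preceding the statement that $E\simeq V[0,u]$ for an order unit space $(V,V^+,u)$ by Theorem \ref{th:archim}, and that every $v\in V$ is simple, i.e.\ $v=\sum_i c_ip_i$ with $c_i\in\mathbb R$ and orthogonal sharp elements $p_i$ satisfying $\sum_ip_i=u$ \cite[Lemma 2]{JePl} (over $E$ one passes to a reduced representation with distinct, increasing coefficients). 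By \cite[Thm. 5.11]{JP2} it then suffices to prove that $(V,V^+,u)$ is spectral, that is, has both the projection cover and the comparability property. Throughout I would use that orthogonal sharp elements satisfy $p_i\circ p_j=\delta_{ij}p_i$ (Lemma \ref{le:sharp}), that each $J_p=p\circ(\cdot)$ is additive (axiom (S1)) and linear when extended to $V$, and that for sharp $r$ one has $a\circ r=0\iff a\le r^\perp$ (by axiom (S3) and Lemma \ref{le:sharp}(iv)).

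First I would settle the projection cover by computing supports. For $a=\oplus_i\mu_ip_i\in E$ put $s:=\oplus_{\mu_i>0}p_i$, which is sharp. Additivity of $\circ$ in the second variable gives $a\circ s^\perp=\oplus_{\mu_i=0}\mu_ip_i=0$. Conversely, if $r\in P$ satisfies $a\circ r=0$, then $r\circ a=\oplus_i\mu_i(r\circ p_i)=0$, so by positivity $r\circ p_j=0$ whenever $\mu_j>0$; hence $r\circ s=0$ and therefore $r\le s^\perp$. Thus $s^\perp$ is the largest projection annihilated by $a$, i.e.\ $(a^\circ)^\perp$, whence $a^\circ=s=\oplus_{\mu_i>0}p_i$ and the projection cover property holds. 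The same computation shows that a positive simple element $\sum_i\nu_ip_i\in V^+$ has support $\oplus_{\nu_i>0}p_i$.

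Next I would establish comparability. Given $v=\sum_i c_ip_i\in V$ with distinct $c_i$, set $p:=\oplus_{c_i>0}p_i$; by linearity of $J_p$ and the relations $p\circ p_i=p_i$ for $c_i>0$ and $p\circ p_i=0$ otherwise, $J_p(v)=\sum_{c_i>0}c_ip_i\ge 0$ and $J_{p^\perp}(v)=\sum_{c_i\le0}c_ip_i\le 0$. It remains to show $p\in P(v)$, so that $p\in P_\pm(v)$ of \eqref{eq:Vcomparability}, and this is the crux. I would recover the $p_i$ from $v$ by a sequential functional calculus: passing to $a:=(2\|v\|)^{-1}(v+\|v\|u)\in E$, which has the same sharp components and commutes with the same projections as $v$, one has $a^{\circ k}=\sum_i\alpha_i^kp_i$ with distinct $\alpha_i$, so Lagrange interpolation yields real $\beta_{i,k}$ with $p_i=\sum_k\beta_{i,k}a^{\circ k}$. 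If a projection $q$ commutes with $a$, then $q\,|\,a^{\circ k}$ for all $k$ by axiom (S5), and hence $q\,|\,p_i$ by (S5) together with Lemma \ref{le:aff}(ii); thus each $p_i$, and therefore $p$, lies in the bicommutant $P(v)$. This gives $P_\pm(v)\neq\emptyset$, so $(V,V^+,u)$ has the comparability property and $E$ is spectral.

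Finally I would compute the spectral resolution from \eqref{eq:spectprojs}. For a reduced representation $a=\oplus_i\mu_ip_i$ and $\lambda\in[0,1]$, the orthogonal decomposition supplied by the sign projection $\oplus_{\mu_i>\lambda}p_i$ gives $(a-\lambda u)_+=\sum_{\mu_i>\lambda}(\mu_i-\lambda)p_i$, whose support is $\oplus_{\mu_i>\lambda}p_i$ by the support formula above; hence
\[
p_{a,\lambda}=\bigl((a-\lambda u)_+\bigr)^{*}=\Bigl(\bigl((a-\lambda u)_+\bigr)^\circ\Bigr)^\perp=\oplus_{\mu_i\le\lambda}p_i.
\]
Since the $\mu_i$ are strictly increasing, for $\lambda\in[\mu_{k-1},\mu_k)$ one has $\{i:\mu_i\le\lambda\}=\{1,\dots,k-1\}$, which yields $p_{a,\lambda}=\oplus_{i=1}^{k-1}p_i$ and is consistent with the conventions $\mu_0=0$, $\mu_{n+1}=1$, $p_0=0$. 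I expect the only genuinely delicate point to be the bicommutant membership $p\in P(v)$; everything else reduces to the orthogonality relations $p_i\circ p_j=\delta_{ij}p_i$ and to additivity and linearity of the sequential product in the second variable.
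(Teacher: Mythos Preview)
Your argument is correct and hews closely to the paper's: both compute $a^\circ=\oplus_{\mu_i>0}p_i$ explicitly, both use Lagrange interpolation to write each $p_i$ as a polynomial in $a$ (the paper quotes this as \cite[Thm.~4.3(i)]{Gucs}) and thereby place $p_i\in P(a)$, and the spectral resolution then falls out of \eqref{eq:spectprojs} in the same way.

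The one substantive variation is in the comparability step. The paper establishes the b-property first and then proves b-comparability directly in $E$: given $aCb$ it passes to a common refinement $r_k=p_i\circ q_j\in P(a,b)$ of the two reduced representations and exhibits $\oplus_{\alpha_k<\beta_k}r_k\in P_\le(a,b)$. You instead work in $V$, produce a sign projection $p\in P_\pm(v)$ for a \emph{single} element, and invoke \cite[Thm.~5.11]{JP2} to transfer spectrality back to $E$. Your route is a bit shorter since it avoids pairs and common refinements (and gets the b-property for free from the transfer theorem), at the cost of relying on that external result.

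One minor correction: your justification ``by (S5) together with Lemma~\ref{le:aff}(ii)'' for $q\,|\,p_i$ does not actually cover the linear combination $p_i=\sum_k\beta_{i,k}a^{\circ k}$, which lives in $V$ and may involve negative coefficients (and Lemma~\ref{le:aff} relies on property~A, which is not among the stated hypotheses here). The clean argument---which the paper uses---is that for a projection $q$ the commutant $C(q)=\{v\in V:J_q(v)+J_{q^\perp}(v)=v\}$ is a linear subspace of $V$, so $a^{\circ k}\in C(q)$ for all $k$ immediately forces $p_i\in C(q)$, i.e.\ $q\leftrightarrow p_i$.
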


\begin{proof}
Let $a\in E$,  $a=\oplus_{i=1}^n \mu_ip_i$ be a reduced representation of $a$.
 Put $a^\circ:=\oplus_{i, \mu_i>0}\;  p_i$, then
$a^\circ$ is a projection cover of $a$. Indeed, it is clear that $a\le a^\circ\in P$ and if
$q\in P$ is such that $a\le q$, then $\mu_ip_i\le q$, so that $p_i\le q$ for all $i$ such
that $\mu_i>0$ (this follows from the fact that $q$ is principal), so that
$a^\circ=\vee_{i,\mu_i>0}\; p_i\le q$.

By \cite[Thm. 4.3 (i)]{Gucs}, we obtain that any $p_i$ is a \emph{function} of $a$,
which means that there are some elements $c_{i,0},\dots,c_{i,n-1}\in \mathbb R$ such that
\[
p_i=c_{i,0}1+\sum_{k=1}^{n-1}c_{i,k}a^k,
\]
where $a^k=a\circ\dots\circ a$. Here the sums are evaluated in the corresponding order unit
space $V$. It then follows from the axiom (S5) of the sequential
product that if $b\in E$ is such that $b|a$, then $b|a^k$ for all $k\ge 1$, so that for any projection
$p\in P$, we have $a\in C(p)$ if and only if $p_i\in C(p)$, $i=1,\dots,n$. Setting $B(a)$
to be the  Boolean subalgebra generated by
$p_1,\dots,p_n$, we see that $E$ has the b-property.

Let $b\in E$ be another element, with reduced representation $b=\oplus_j \lambda_j q_j$.
Then by the b-property proved above, we see that $aCb$ if
and only if $\{p_1,\dots,p_n\}\leftrightarrow \{q_1,\dots,q_m\}$. Hence there exist a
common refinement $\{r_k:=p_i\circ q_j\}$ and we may express both elements as
\[
a=\oplus_{k}\alpha_{k} r_k,\qquad b=\oplus_{k}\beta_{k} r_k,
\]
with $\alpha_{i,j}=\mu_i$ and $\beta_{i,j}=\lambda_j$. Note also that we have $r_k\in
P(a,b)$. Indeed, it is clear that $r_k\in PC(a,b)$, moreover, if $q$ commutes with both
$a$ and $b$, then we must have $q|p_i$ and $q|q_j$, so that $q|(p_i\circ q_j)$ by the
axiom  (S5). It is now easy to see that
\[
p:=\oplus_{k, \alpha_k<\beta_k}\;r_k\in P_\le (a,b).
\]
This shows the b-comparability property. The last statement on the spectral resolution
follows easily by the expression \eqref{eq:spectprojs} for the spectral projections.

\end{proof}

By uniqueness of the spectral resolutions, we obtain the following statement, cf.
\cite[Thm. 5.3]{FPspectres}.

\begin{corollary} Let $E$  be a convex SEA such that every element is simple. Then  every
 $a\in E$ has a unique reduced representation $a=\oplus_i \mu_ip_i$, moreover,  $\mu_i$
are precisely the eigenvalues of $a$.

\end{corollary}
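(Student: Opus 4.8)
The plan is to combine the existence of reduced representations with the uniqueness of spectral resolutions established earlier. The corollary asserts two things: that the reduced representation $a=\oplus_i\mu_ip_i$ is unique, and that the $\mu_i$ appearing in it are exactly the eigenvalues of $a$. My strategy is to observe that Theorem \ref{thm:contexts} gives an explicit formula relating any reduced representation to the (unique) spectral resolution of $a$, and then simply read off both claims from that correspondence.

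First I would recall that by the remarks preceding Theorem \ref{thm:contexts}, every element of $E$ is simple and hence admits at least one reduced representation $a=\oplus_{i=1}^n\mu_ip_i$ with $\mu_1<\cdots<\mu_n$. Theorem \ref{thm:contexts} then tells us that the spectral resolution of $a$ is determined by this data via $p_{a,\lambda}=\oplus_{i=1}^{k-1}p_i$ for $\lambda\in[\mu_{k-1},\mu_k)$. The key point is that the spectral resolution $\{p_{a,\lambda}\}_\lambda$ is intrinsic to $a$: by the results quoted after Definition \ref{de:b-compar} (and \cite[Thm. 5.11]{JP2} together with the uniqueness statement in \cite{FPspectres}), a spectral element has a \emph{unique} spectral resolution. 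Therefore the step function $\lambda\mapsto p_{a,\lambda}$ is fixed by $a$ alone.

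Next I would invert the formula. The jumps of the spectral resolution occur precisely at the values $\lambda=\mu_k$: as $\lambda$ crosses $\mu_k$ from below, $p_{a,\lambda}$ increases by $p_k$. Concretely, the points where $p_{a,\lambda}$ is discontinuous are exactly $\{\mu_1,\dots,\mu_n\}$ (those $\mu_i$ with $\mu_i>0$ give genuine jumps, and one recovers the smallest value as $L_a$), and the size of the jump at $\mu_k$ recovers $p_k$ as the difference of neighbouring spectral projections, e.g. $p_k=p_{a,\mu_k}-\bigvee_{\lambda<\mu_k}p_{a,\lambda}$. Since the spectral resolution is uniquely determined by $a$, so are its discontinuity points and its jumps; hence the list $(\mu_k,p_k)$ is uniquely determined, giving uniqueness of the reduced representation. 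For the eigenvalue claim, I would invoke the definition of eigenvalue given in the excerpt: $\lambda$ is an eigenvalue of $a$ iff $d_{a,\lambda}=(a-\lambda u)^*\neq 0$, and by \cite[Theorem 3.6]{FPspectres} this $d_{a,\lambda}$ equals the jump $p_{a,\lambda}-\bigvee_{\mu<\lambda}p_{a,\mu}$. This jump is nonzero exactly at $\lambda\in\{\mu_1,\dots,\mu_n\}$, and there it equals $p_k$, so the eigenvalues are precisely the $\mu_i$.

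The main obstacle I anticipate is bookkeeping at the endpoints rather than any conceptual difficulty: one must treat the conventions $\mu_0=0$, $\mu_{n+1}=1$, $p_0=0$ carefully so that $p_{a,1}=1$ and $(a^\circ)^\perp=p_{a,0}$ come out correctly (cf. Lemma \ref{lemma:covex_floor}), and one must check whether $\mu_1=0$ contributes an eigenvalue via a jump at $0$ or is instead absorbed into $L_a$. These are the same subtleties that appear in \cite[Thm. 5.3]{FPspectres}, to which the corollary explicitly defers, so I would handle them by matching the jump formula to the explicit spectral resolution of Theorem \ref{thm:contexts} case by case and citing the uniqueness of spectral resolutions to close the argument.
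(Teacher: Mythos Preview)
Your proposal is correct and follows essentially the same route as the paper: both deduce the corollary from the uniqueness of spectral resolutions, with the paper simply deferring to \cite[Thm.~5.3]{FPspectres} while you spell out explicitly how the jump points and jump sizes of the spectral resolution recover the $\mu_i$ and $p_i$. The endpoint bookkeeping you flag is exactly what \cite[Thm.~5.3]{FPspectres} handles, so your plan to match cases and cite uniqueness is the intended argument.
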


\begin{corollary} Any context-spectral convex SEA is spectral.

\end{corollary}

\section*{Acknowledgements}

The research was supported by the grant VEGA 1/0142/20 and  the Slovak
Research and Development Agency grant APVV-20-0069.

\end{document}